\newcommand{\N}{\mathbb{N}}
\newcommand{\R}{\mathbb{R}}
\newcommand{\mmS}{\mathcal{S}}
\newcommand{\mmY}{\mathcal{Y}}
\newcommand{\mmC}{\mathcal{C}}
\newcommand{\mmR}{\mathcal{R}}
\newcommand{\omR}{\overline{\R}}
\DeclareMathOperator*{\im}{Im}
\DeclareMathOperator*{\Con}{Con}
\renewcommand{\Re}{{\rm Re}}
\newcommand{\CC}{\mathbb C}
\newcommand{\adj}{\text{adj}}
\newcommand{\sign}{\text{sign}}
\def\wk{\widetilde{k}}
\def\tot{\text{cons}}
\def\C{\text{C}}
\def\B{\text{B}}
\def\A{\text{A}}
\newtheorem{theorem}{Theorem}
\newtheorem{proposition}{Proposition}
\newtheorem{corollary}{Corollary}
\newtheorem{lemma}{Lemma}
\theoremstyle{definition}
\newtheorem{definition}[equation]{Definition}
\newtheorem{remark}[equation]{Remark}
\begin{document}

\thispagestyle{plain}

\begin{center}
{\Large \bf Simplifying Biochemical Models With Intermediate Species}

\bigskip
Elisenda Feliu$^1$, Carsten Wiuf$^{1}$

\footnotetext[1]{Department of Mathematical Sciences, University of Copenhagen, 
Universitetsparken 5, 2100 Copenhagen, Denmark.
E-mail:  efeliu@math.ku.dk, wiuf@math.ku.dk.}
\date{\today}
\end{center}

\begin{abstract} 
Mathematical models are increasingly being used to understand complex biochemical systems, to analyze experimental data and make predictions about unobserved quantities. However,  we rarely know how robust our conclusions are with respect to the choice and uncertainties of the model.
Using algebraic techniques we study systematically the effects of intermediate, or transient, species in biochemical systems and provide a simple, yet rigorous mathematical classification of all models obtained from a \emph{core model} by including intermediates. Main examples include enzymatic and post-translational modification systems, where intermediates often are considered insignificant and neglected in a model, or they are not included because we are unaware of their existence.  
All possible models obtained from the core model are classified into a finite number of classes. Each class is defined by  a mathematically simple \emph{canonical model} 
that characterizes crucial dynamical  properties, such as  mono- and multistationarity and  stability  of  steady states, of all models in the class. We show that if the core model does not have conservation laws, then the introduction of intermediates does not change the steady-state concentrations of the species in the core model, after suitable matching of parameters.
Importantly, our results provide guidelines to  the modeler in choosing between models and in distinguishing their properties. Further, our work provides a formal way of comparing models that share a  common skeleton.

\medskip
{\bf Keywords: } transient species, stability, multistationarity, model choice, algebraic methods
\end{abstract}

\section*{Introduction}
Systems biology aims to understand complex systems and  to build mathematical models 
that are useful for inference and prediction. However, model building is rarely straightforward and we typically seek a compromise between the simple and the accurate, shaped by our current knowledge of the system. 
Two models of the same system, potentially differing in  the number of species and the form of reactions, 
might have different qualitative properties and the conclusions we draw from analyzing the models might be strongly model dependent. The predictive value and biological validity of the conclusions might thus be questioned. 
It is therefore important to understand the role and consequences of model choice and model uncertainty in modeling biochemical systems.

Transient, or intermediate, species in biochemical reaction pathways are often ignored in models or  grouped into a single or few components, either  for reasons of simplicity or conceptual clarification, or because of lack of knowledge. For example, models of the multiple phosphorylation systems vary considerably in the details of intermediates \cite{Chan:2012et,Markevich-mapk} and  intermediates are often ignored in models of 
 phosphorelays and two-component systems \cite{Kim:2006p77,CsikaszNagy:2011p494}. Typically, intermediate species are protein complexes such as a kinase-substrate protein complex. It has been shown that sequestration of intermediates can cause ultrasensitive behavior in some systems (e.g. \cite{Legewie:2005hw,Ventura-Hidden}). Therefore, the  inclusion/exclusion of intermediates is a matter of considerable concern.

 As an example, consider  the transfer of a modifier molecule, such as a phosphate group in a two-component system, from one molecule to another: $A^*+B\rightleftharpoons \ldots \rightleftharpoons A+B^*,$ where $A,B$ are unmodified forms (without the modifier group), $A^*,B^*$ are modified forms (with the modifier),  $\rightleftharpoons$ indicate reversible reactions, and $\ldots$ are potential transient reaction steps. Two-component systems are ubiquitous in nature and vary considerably  in architecture and mechanistic details across species and functionality \cite{krell}.  Whether or not the specifics are known beforehand, it is custom to use a reduced scheme such as $A^*+B\rightleftharpoons A+B^*$ \cite{Kim:2006p77,CsikaszNagy:2011p494}.

We use \emph{Chemical Reaction Network Theory} (CRNT) to model a  system of biochemical reactions and assume that the reaction rates follow mass-action kinetics. The  polynomial form of the reaction rates have made it possible to apply algebraic techniques to learn about qualitative properties of models, without resorting to numerical approaches  \cite{TG-nature,shinar-science,Karp:2012hz,harrington-model,Feliu:2010p94,Feliu:royal,harrington-feliu}.
Building on  previous work \cite{king-altman,TG-rational,Fel_elim}, we propose a mathematical framework  to compare different models and to study the dynamical properties of models that differ in how intermediates are included. The most fundamental and crucial dynamical features are the number and stability of steady states. We assume that the kinetic parameters are unknown and  study the capacity of each model to exhibit  different steady-state features. 

The paper is organized in the following way. We first introduce the concepts of a core model and an extension model. An extension model is constructed from the core model by  including intermediates. Next, we discuss how the steady-state equations of   different models are related and illustrate the findings with an example. We proceed to discuss the number of steady states of core and extension models. After that we introduce the steady-state classes, a key concept of this paper. Extension models in the same steady-state classes have the same properties at steady-state (provided the parameter sets of the two models can be matched, in some sense). Using these ideas, we build a decision tree to guide the modeler in choosing a model and in understanding the consequences of choosing a particular model. Finally, we illustrate our approach with an example based on two-component systems. All proofs and mathematical details are in the appendix.

\section{The core model and its extensions}

We use the notation and formalism of CRNT (see for example \cite{feinbergnotes,gunawardena-notes}).  A \emph{reaction network} is defined as a set species, denoted by capital letters (for example, $A, B, C$),  a set of complexes and a set of reactions between complexes. Each complex is a combination of species, for example $y_1=A+B$ or $y_2=2C$ (not to be confused with a protein complex). A potential reaction could be $A+B\to 2C$, or also written simply $y_1\to y_2$.  A reaction is not necessarily reversible, that is, we can have $A+B\to 2C$ without having the reverse reaction $2C\to A+B$. Whenever a reaction is reversible we model it as two separate irreversible reactions. 
We assume that each reaction occurs according to mass-action kinetics, that is, at a rate proportional to the product of the species concentrations in the reactant or source complex \cite{enz-kinetics}. For example, the reaction $A+B\to 2C$ occurs at a rate $k [A][B]$, where $[A], [B]$ are the concentrations of the species $A, B$ and $k$ is a reaction specific positive constant.
Reaction networks are often drawn graphically as in Figs.\,1A-E.  Figs.\,1C-E are schematic representations of reaction networks: only the structure of the network is shown and neither the species nor the rate constants are indicated.

Fig.\,1A  corresponds to a simple enzymatic mechanism where $E$  is an enzyme and $S_j$ is a substrate with $j=0,1,2$ phosphorylated sites. The substrate $S_0$ can be doubly phosphorylated  sequentially via $S_1$ or directly (processively). In Fig.\,1B,  a transient product $Y$ formed by $S_0$ and $E$, or by $S_1$ and $E$ (these are often denoted by $S_0\cdot E$ and $S_1\cdot E$) is shown. In the particular case we do not distinguish between the two transient products (which might be unrealistic, but it serves an illustrative purpose).

\begin{figure}[!t]
\centering
\includegraphics{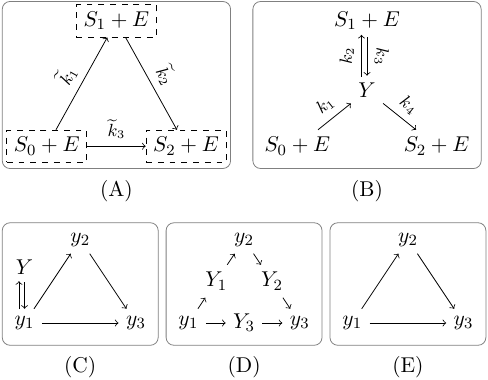}
\caption{\small Representation of reaction networks: (A)-(B) detailed representation; (C)-(E) schematic representation. (A) and (E) are core models and (B)-(D) are extended models of (A) and (E).
{\bf (A)}  A reaction network with complexes $S_0+E,S_1+E,S_2+E$ (enclosed in dashed boxes). Each reaction is labelled with its rate constant ($k$ or $\tilde{k}$).
{\bf (B)} An extension model of network (A) with intermediate $Y$.
{\bf (C)} The complex $y_1$ is involved in a reversible  ``dead-end" reaction with one intermediate. {\bf (D)} The complex $y_1$ is converted into $Y$, which splits  into $y_2$ or $y_3$, respectively (the former reversibly). {\bf (E)}  Schematic representation of (A).    
 }
\end{figure}

An \emph{intermediate} is defined as a species in a reaction network that is created and dissociated in isolation, that is, it is produced in at least one reaction, consumed in at least one reaction and  it cannot be part of any other complex (for example, $Y$ in Figs.\,1B-D). A \emph{core model} is the minimal reaction mechanism to be modeled. Each reaction $y_i\rightarrow y_j$ in the core model consists of two {\it core complexes} $y_i,y_j$.   The species contributing to the core complexes are referred to as {\it core species}. 
An \emph{extension model}  is any reaction network such that: 
\begin{itemize}
\item[(i)] The set of complexes consists of core complexes and some intermediates that are not part of the core model. 
\item[(ii)] Reactions are between two core complexes, two intermediates or between an intermediate and a core complex.
\item[(iii)]  The core model is obtained from the extension model by \textit{collapsing} all reaction paths $y_i\rightarrow Y_1\rightarrow\ldots\rightarrow Y_k\rightarrow y_j$, where $Y_i$ are intermediates, into a single reaction $y_i\rightarrow y_j$.
\end{itemize}
Some examples are given in Fig.\,1. 
Fig.\,1B is an extension model of Fig.\,1A and Figs.\,1C,D are extension models of Fig.\,1E.  Fig.\,1A is a concretization of Fig.\,1E.
 Observe that the directionality of the reaction arrows needs to be preserved. 
For instance, in Fig.\,1E, an extension of the reaction $y_1\rightarrow y_2$ cannot be $y_1\rightleftharpoons Y \rightleftharpoons y_2$, because it would imply that  $y_2\rightarrow y_1$ also is in the core model.
By adding arbitrarily many intermediates (e.g.\ $y_1\rightleftharpoons Y_1\rightleftharpoons\ldots \rightleftharpoons Y_k$) we can create arbitrarily many extension models with the same core.

Under mass-action kinetics, the dynamics of Fig.\,1B is described by a polynomial system of ordinary differential equations (ODEs): 
\begin{align}\label{odes}
\dot{[S_0]} & =  -k_1[S_0][E] ,\nonumber \\
\dot{[S_1]} & =  -k_3[S_1][E] + k_2[Y], \nonumber \\
\dot{[S_2]} & =  k_4 [Y],\nonumber \\
 \dot{[E]} & = -k_1[S_0][E]    -k_3[S_1][E] + k_2[Y]  + k_4[Y],\\
\dot{[Y]} &=k_1[S_0][E]+k_3[S_1][E]-k_2[Y]-k_4[Y] ,\nonumber 
\end{align}
where $k_*$ are rate constants, $[X]$ denotes the concentration of species $X$,   and $\dot{[X]}$ is the instantaneous change in $[X]$. 
In addition there are two \textit{conservation laws},
\begin{equation}\label{consw}
S^\B_{\tot}= [S_0]+[S_1]+[S_2]+[Y],\quad E^\B_{\tot}= [E]+[Y],
\end{equation}
that is, quantities that are conserved over time and determined by the initial concentrations.
Conservation laws confine the dynamics to an invariant space given by $S^\B_{\tot}$ and $E^\B_{\tot}$ (referred to as \emph{conserved amounts}), and the dynamical ana\-ly\-sis must be restricted to this space. The invariant spaces are called \emph{stoichiometric classes} in the CRNT literature. If we consider a maximal set of independent conservation laws, then the species that appear in the conservation laws are independent of the chosen set.

The core model in Fig.\,1A has two conservation laws,
\begin{equation}\label{consw2}
S^\A_{\tot} = [S_0]+[S_1]+[S_2],\quad \text{ and }\quad E^\A_{\tot} = [E].
\end{equation}
The two sets of conservation laws, \eqref{consw} and \eqref{consw2}, differ by a linear combination of intermediate concentrations (here a single term). This similarity between \eqref{consw} and \eqref{consw2} holds generally:

\medskip
\noindent
{\bf Theorem 1:} The conservation laws in the core model are in one-to-one correspondence with the conservation laws in any extension model. The correspondence is obtained by adding a suitable linear combination of the  $[Y]$'s to each conservation law     of the core model. \hfill{$\qed$}

\medskip
The theorem does not depend on the assumption of mass-action kinetics but relies on the structure of the network only, that is, on the set of reactions of the network.

\section{Steady-state equations}

We next state  two theorems  that allow us to relate the dynamics near steady states of the core and extension models to each other. 

At steady state  $\dot{[X]}=0$ for all species $X$.
Under the assumption of mass-action kinetics, this condition translates into a system of polynomial equations in the species concentrations. A way to solve the equations is to express one variable in terms of other variables. This expression must then be satisfied by any  solution to the system.
We let  $[y]$  denote the product of the species concentrations in  complex $y$, for example, $[2S] = [S]^2$ and $[S_0+E]=[S_0][E]$. 
Different extension models contain different intermediates, resulting in different steady-state equations. Since the intermediates always appear as linear terms in the steady-state equations of an extension model (see for example \eqref{odes}), they can  be eliminated from the equations and written in terms of the concentrations of the core species:

\medskip
\noindent
{\bf Theorem 2} \cite{fwptm,Fel_elim}: Using the equations $[\dot{Y}]=0$ for all intermediates in the extension model, the steady-state concentrations  of the intermediates $Y$ are given as  linear sums $[Y]=\sum_y \mu_{Y,y} [y]$ of products of the core species concentration. The constant $\mu_{Y,y}$ is either zero or positive and  depends only on the rate constants of the extension model.    $[y]$ appears in the expression, that is, $\mu_{Y,y}\neq  0$, if and only if there is a reaction path $y\rightarrow  \ldots\rightarrow Y$ involving  exclusively intermediates. 
 \hfill{$\qed$}

\medskip
As a consequence of the theorem, once the steady-state concentrations of the core species are known, the steady-state concentrations of the intermediates are also known. Because $\mu_{Y,y}\geq  0$ and at least one of the constants is non-zero (all intermediates are produced), positive steady-state concentrations of the core species lead to positive concentrations of the intermediates.

 The theorem makes explicit use of mass-action kinetics. It remains true for non-mass action kinetics in the sense that an explicit expression for $[Y]$ can be found if all reactions $Y\to y'$ have mass-action reaction rates, whereas all other reactions can have arbitrary reaction rates. In that case, however, the form of the expression might not be polynomial nor lead to positive concentrations.

The manipulations leading to the expression $[Y]=\sum_y \mu_{Y,y} [y]$ from $[\dot{Y}]=0$ are purely algebraic and do not require any assumptions about the  conserved amounts.
In  example  \eqref{odes}, the equation $\dot{[Y]}=0$ gives
\begin{align}\label{Ylinear}
[Y] & =m_1 [S_0][E] +m_3 [S_1][E],
\end{align}
where $m_i=\frac{k_i}{k_2+k_4}$  are reciprocal Michaelis-Menten constants \cite{enz-kinetics}. If  \eqref{Ylinear} is  substituted into \eqref{odes}, we obtain a new ODE system:
\begin{align}\label{odes2}
\dot{[S_0]} & =  -k_1[S_0][E], \nonumber\\
 \dot{[S_1]} & = -k_4m_3 [S_1][E]+ k_2m_1 [S_0][E], \nonumber \\ 
 \dot{[S_2]} & =  k_4m_1 [S_0][E] + k_4m_3 [S_1][E], \\
  \dot{[E]} & = 0, \nonumber
\end{align}
which is a mass-action system for the core model in Fig.\,1A with $\wk_1=k_2m_1$, $\wk_2=k_4m_3$,  and $\wk_3 = k_4m_1$  (as $k_1=\wk_1+\wk_3=k_2m_1+ k_4m_1$).  We say that the rate constants $\wk_*$ are \emph{realized} by $k_*$ and that $k_*$ and $\wk_*$ are a pair of \emph{matching} rate constants.
In the particular case,  $\wk_1,\wk_2,\wk_3$ are realized by choosing $k_1= \wk_1+\wk_3$,  $k_3=(\wk_1+\wk_3)\wk_2/\wk_3$ and any $k_2,k_4$ such that $k_4=k_2 \wk_3/\wk_1$. Choosing $k_2$ fixes the  values of $m_1,m_3$ in \eqref{Ylinear}. However, for some (unrealistic) extension models, not all choices of rate constants of the core model are realizable  (see appendix).

The relation between the ODEs in Fig.\,1A and 1B holds generally for any pair of core and extension models:

\medskip
\noindent
{\bf Theorem 3:} After substituting the expressions $[Y]=\sum_y \mu_{Y,y} [y]$ into the ODEs of the extension model, we obtain a mass-action system for the core model.  \hfill{$\qed$}

\medskip
The quasi-steady-state approximation (QSSA)  proceeds similarly \cite{segal}. An equation of the form $[\dot{Y}]=0$  is used to find an expression for  $[Y]$ in terms of $[y]$ under the additional assumptions that certain species are in high or low concentration. This expression is subsequently  substituted into the remaining ODE equations to reduce the system. Theorems~2 and 3 show that this always can be done, irrespectively of any biological justification of the procedure.

As a consequence of the theorems, the steady-states of  an extension model can be found in this way:
We first solve the equations $[\dot{Y}]=0$  for $[Y]$ in terms of $[y]$ (Theorem~2) and then insert the expressions for $[Y]$ into the remaining steady-state equations (Theorem~3).  
The steady states of the extended model are now found by solving the steady-state equations for the core model to obtain the concentrations of the core species. This corresponds to solve  \eqref{odes2} in the example above. The obtained values are subsequently plugged into the expressions given in Theorem 2 to find the steady-state values of the intermediates. 
That is, for matching rate constants between the core and an extension model, the solutions to the steady-state equations of the core model completely determine the solutions to the steady-state equations of the extension model. 

The conservation laws, however, impose different constraints on the steady-state solutions for given conserved amounts. 
Specifically, by inserting \eqref{Ylinear} into \eqref{consw} we obtain
\begin{align}\label{cons_law}
S^\B_{\tot} & =[S_0]+[S_1]+[S_2]+m_1 [S_0][E] + m_3[S_1][E], \nonumber \\
 E^\B_{\tot} & = [E]+ m_1 [S_0][E] + m_3 [S_1][E].
\end{align} 
The steady states of the extension model  solve  \eqref{odes2} and  \eqref{cons_law}, while they solve \eqref{odes2} and  \eqref{consw2} in the core model. Equation~\eqref{cons_law} is non-linear in the concentrations of the core species. Non-linear terms in the conservation laws can cause the two models to have substantially different properties. This is reflected in the example in the next section.

 Importantly, if the system has no conservation law, then addition of intermediates cannot alter any property of the core model at steady state. This will be the case, for instance, when production and degradation of all core species in the model are explicitly modeled.

\section{An example}

The number of steady-state  solutions for the core model and an extension model can differ substantially. For matching rate constants, the steady states of each system are found by intersecting the  steady-state equations for the core species
with the conservation laws of each of the systems. The number of points in this intersection might differ between extension models and the core model, depending on the form of the conservation laws.

We illustrate this using the two-site phosphorylation system  in Fig.~1A   and include dephosphorylation reactions,  
\begin{equation}\label{mod1}
S_2\rightarrow S_1,\qquad S_1\rightarrow S_0.
\end{equation}
 In addition, we add  the reactions,
 \begin{equation}\label{mod2}
0\rightarrow S_2,\qquad S_2\rightarrow 0.
\end{equation}
The motivation for the addition is not biological but for illustrative reasons. It  allows us to plot  the steady-state equations in two dimensions. We will consider the positive steady states of the core model in Fig.~1A together with \eqref{mod1} and \eqref{mod2}, and the extension model in Fig.~1C together with \eqref{mod1} and \eqref{mod2}, and $y_1=S_0+E, y_2=S_1+E,y_3=S_2+E$. The added reactions are core reactions and do not involve intermediates.  Since the substrate $S_2$ is  degraded ($S_2\to 0$), the total amount of substrate is no longer conserved and there is only one conservation law, namely that for the kinase (compare \eqref{consw2}). 

At steady state, the core and any extension model  fulfill the relation 
\begin{equation}\label{genericshape}
[S_0] = \frac{a_1}{[E]([E]+a_2)}
\end{equation}
for some constants $a_1,a_2>0$ that depend on the rate constants of each model (see appendix). 

The relation is obtained from the steady-state equations of the core model alone and  therefore must be fulfilled by all extension models  for matching rate constants (Theorem~3).
One can show that the concentrations of $[S_1]$ and $[S_2]$ at steady state are uniquely determined by $[E]$ and $[S_0]$ (see appendix).

For  a given conserved amount  for the kinase, 
the  steady-state concentrations are determined by the common points of  the graph of \eqref{genericshape} and the curve for the conservation law.  For the core model this curve is $E_{\tot}^\A=[E]$, which is a vertical line in the $([E],[S_0])$-plane. Since \eqref{genericshape} is strictly decreasing in $[E]$, it follows that there is a single steady state for any choice of $E_{\tot}^\A$ (Fig.\,2A). 

\begin{figure}[!t]
\begin{center}
\includegraphics{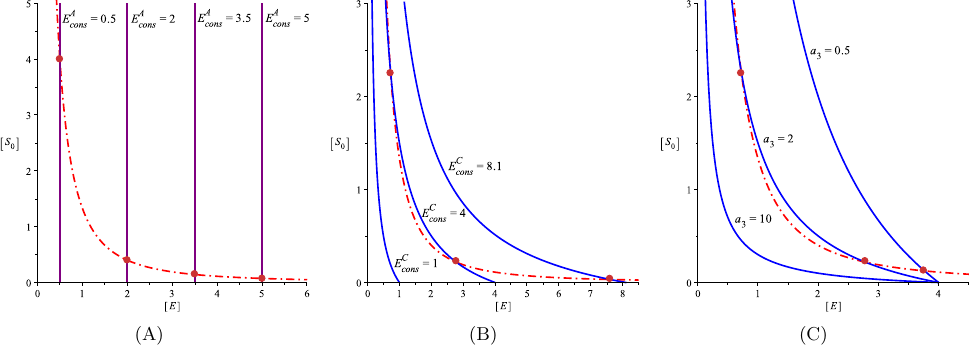}
\end{center}
\caption{\small The steady-state curve  \eqref{genericshape}  for $a_1=2,a_2=0.5$ (dashed-red) together with the curve for the conservation law. The steady states for a fixed conserved amount are  the  intersection points of the two curves (dashed and solid lines).
{\bf (A)} Core model. Conservation law curves (solid-purple)  for different  values of $E_{\tot}^\A$. {\bf (B)} Extension model. Conservation law curves (solid-blue) as  in \eqref{curveC} for different  values of $E_{\tot}^\C$  and $a_3=2$.  {\bf (C)}   Extension model. Conservation law curves (solid-blue) as in \eqref{curveC} for different  values of $a_3$  and $E_{\tot}^\C=4$.  }
\end{figure}

Consider  next the  extension model corresponding to Fig.\,1C  (with the modifications introduced in \eqref{mod1} and \eqref{mod2}).
For arbitrary fixed rate constants of the core model we choose rate constants of the extension model that realize the rate constants of the core model. This can always be achieved  for extension models with ``dead-end" complexes, like that of Fig.\,1C (see appendix). 
For $\dot{[Y]}=0$  the concentration of the intermediate is  $[Y]=a_3 [E][S_0]$ for some constant $a_3>0$ that depends on the rate constants of the extension model. Consequently,
\begin{equation}\label{curveC}
E_{\tot}^{\C} = [E] + a_3 [E][S_0],\quad \textrm{ or } \quad [S_0] = \frac{E_{\tot}^{\C} - [E]}{a_3[E]}  \qquad \textrm{provided that }a_3\neq 0. 
\end{equation}
If $a_3=0$ then we obtain the core model. In the particular case,  $a_3$ varies independently of $a_1,a_2$ and all values of $a_3$ can be obtained when realizing the rate constants of the core model.
Combining \eqref{genericshape} and \eqref{curveC} yields a second order polynomial in $[E]$: 
\begin{equation}\label{secorder}
a_1a_3=(E_{\tot}^C-[E])([E]+a_2).
\end{equation}
Hence, for fixed $a_1,a_2,a_3$, the polynomial  can have zero, one or two positive solutions, depending on  the value of $E_{\tot}^C$. 
Fig.\,2 shows graphically the steady-state solutions for the core (Fig.\,2A) and the extension  (Figs.\,2B-C) model as the intersection of the steady-state equation \eqref{genericshape} and the curve for the conservation law for different values of $E^A_{\tot}, E^C_{\tot}$ and $a_3$.  In Fig.\,2 the curve for the steady-state equation (dashed-red) is given for  $a_1=2$ and $a_2=0.5$ and is the same for the two models. 
For the core model, the conservation law curve is a vertical line (purple), which intersects the steady-state curve in precisely one point (Fig.\,2A). For the extended model, the conservation law curve (blue) is the ratio in \eqref{curveC}. Depending on the value of $E_{\tot}^C$, the two curves intersect in zero, one or two points illustrating how the number of steady states vary with $E_{\tot}^C$ (Fig.\,2B, with $a_3=2$). The same conclusion is obtained by varying  $a_3$ while keeping  $E_{\tot}^C$ fixed (Fig.\,2C, with $E_{\tot}^C=4$).

 In this particular case, we could find explicit expressions for the steady-state concentrations in terms of the conserved amounts and the rate constants. This is not always the case.

\section{Number of steady states}
In the example in the previous section one can choose rate constants and conserved amounts such that  the  extension model does not have a positive steady state, even though the core model has a positive steady state for all choices of rate constants. 
However, it is easy to see that  $a_3$ can always be chosen so small  that there is at least one positive solution for fixed $a_1,a_2$ and $E_{\tot}^C$. If $a_3\approx 0$ then the  contribution of $[E][S_0]$ in \eqref{curveC} becomes insignificant and the extension model is ``similar" to the core model. This is observed in Fig.\,2C: for small $a_3$, the curve for the conservation law is almost a vertical line. 

Therefore, in the example,  it is always possible to choose matching rate constants such the number of steady states in the extension model is at least as big as the number of steady states in the core model, for corresponding  conserved amounts.
This observation holds generally.  We now state the main result concerning the dynamical properties of extension models and the number of steady states:

\medskip
\noindent
{\bf Theorem 4: }
If the core model has $N$ non-degenerate\footnote[1]{A steady state is said to be non-degenerate if the Jacobian of the ODE system evaluated at the steady state is non-singular (see appendix).} positive steady states for some rate constants and conserved amounts, then any extension model that realizes the rate constants has at least $N$ corresponding non-degenerate positive steady states for some rate constants and conserved amounts. 
Oppositely, if the extension model has  at most one positive steady state for any rate constants and conserved amounts then the core model has at most one positive steady state for any matching rate constants and conserved amounts. 

The rate constants and conserved amounts can be chosen such that the correspondence preserves unstable steady states with at least one eigenvalue with non-zero real part and  asymptotical stability for hyperbolic steady states.  \hfill{$\qed$}

\medskip
The proof essentially relies on the observation in the previous example that a certain parameter ($a_3$ in the example) can be chosen so small that the extension model and the core model are almost identical at steady state. The relationship between a reaction network and a subnetwork has been studied previously, but in different contexts. For example in \cite{craciun-feinberg,joshi-shiu-II}, where  subnetworks are defined by (certain) subsets of reactions, or in \cite{joshi-shiu-II}, where subnetworks are defined by removing species from reactions. Characterizations similar to Theorem~4 about the number of steady states hold in these situations. 

In Fig.\,2, the  steady state in the extension model closest to the steady state in the core model (for the same conserved amount) inherits the stability properties of the steady state of the core model. In this case it is asymptotically stable. However, we cannot conclude anything about the other steady state in the extension model from the core model alone.

\section{Steady-state classes and canonical models} 

The observations made about the conservation laws and the steady-state equations (Theorems 1-3) 
suggest that it suffices to know what core complexes contribute to the conservation laws in order to compare the extension and core models at steady state. 
In Fig.\,1B, the core complexes $S_0+E,S_1+E$ contribute to the conservation laws for the kinase and substrate. Any other extension model, contributing the same core complexes to the conservation law, will result in equations for the steady states of the same form.
Specifically, if two extension models contribute the same core complexes to the conservation laws and  realize the same rate constants,\footnote[2]{Here it is also required that the constants $\mu_{Y,y}$ vary independently} then the two models are identical at steady state. In particular, we can apply Theorem 4 to any of the two models.

Therefore we can group extension models according to the core complexes that appear in the conservation laws. 
We say that two extension models belong to the same  {\it steady-state class} if they share the same core complexes in the conservation laws.
The complexes characterizing a steady-state class are called the {\it class complexes}. We can use  Theorem 2  to provide a graphical characterization of the classes:  the core complexes that contain a species  appearing in some conservation law are selected. If there exists a reaction from such a core complex to an intermediate, then the core complex is a class complex.  The class of the core model is the class with no class complexes.

In Fig.\,3, the graphical characterization is illustrated using the  core model  in Fig.\,1A, written in simplified form. All species appear in some conservation law and hence all core complexes can be class complexes.
Consider the extension models in Fig.\,1B and Figs.\,1C,D with  $y_1=S_0+E,y_2=S_1+E,y_3=S_2+E$. The extension model in Fig.\,1C belongs to the steady-state class with class complex $y_1$ because there is only one path from a core complex to an intermediate: $y_1\rightarrow Y$. Similarly, the extension models in Figs.~1B and 1D have class complexes $y_1,y_2$. 
 We conclude that  Figs.~1B and 1D are in the same class, while the models in Figs.~1A and 1C are in different classes  and have different equations.  In this case, Fig.\,1A  has always one steady state for any choice of conserved amounts  and rate constants, while Figs.~1B-1D can be multistationary (this is proven by direct computation of the steady states in the appendix).

\begin{figure}[!t]
\centering
\includegraphics{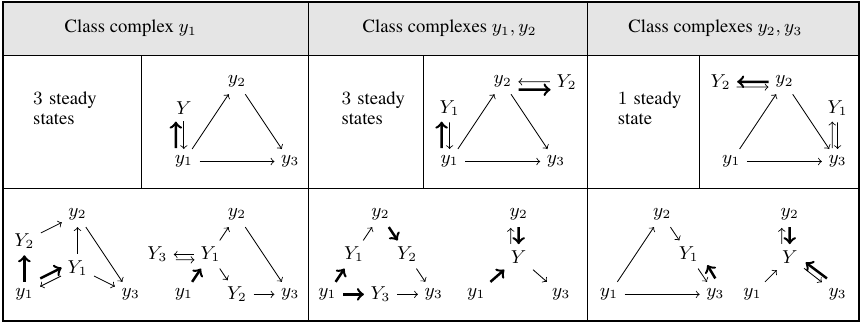}
\caption{\small We consider the core model of Fig.\,1A (with $y_1=S_0+E,y_2=S_1+E,y_3=S_2+E$) and its steady-state classes. Each class is characterized by an extension model (the canonical model) with a dead-end reaction added for each class complex (upper right corner). 
Each class (except the class of the core model) has an infinite number of members and a few of these are shown. Class complexes are source complexes of a reaction with an intermediate as product (marked in bold in the figure).
For the number of steady states we consider the model given in Fig.\,1A and  dephosphorylation reactions $S_2\rightarrow S_1$ and $S_1\rightarrow S_0$ (not shown in the figure). The number of steady states in each class refers to the maximal number of steady states that a model in the class can have for some choice of rate constants and total amounts.  This has been found by direct computation of the steady states (see appendix).  Alternatively, the CRN Toolbox could have been used \cite{crnttoolbox}.}
\end{figure}
 
Since class complexes characterize the steady-state classes, there is a finite number of  classes, at most  $2^K$,  with $K$ the number of core complexes ($K=3$ in Fig.\,3).  The classes are naturally ordered by set inclusion:  a class is smaller than another class if the latter contains the class complexes of the former. In particular, the steady-state class of the core model is smaller than any other  class. Thus, the class of Fig.\,1A is smaller than the classes of Fig.\,1B-1D, then classes of Fig.\,1B and Fig.\,1D are the same and the class of Fig.\,1C is smaller than the class of Fig.\,1B. The classes of the models in the first and the third box of  Fig.\,3 are not comparable as the first is $\{y_1\}$ and the last is $\{y_2,y_3\}$. 

All extension models in a steady-state class have common properties at steady state (subject to the requirement of realizability of rate constants). 
Thus, it is natural to select a  representative for each class with a small number of intermediates and such that the behaviors of all models in the class are reflected in the behavior of the representative.
To each class we construct a \emph{canonical model} by adding a dead-end reaction,  $y\rightleftharpoons Y$ for each class complex (see Fig.\,3 for an example).  Importantly, the steady-state equations for the canonical model are simpler than for any other extension model in the same class. It is shown in the appendix that the parameter space of the canonical model is a large as possible. 
This leads to the following corollary to Theorem~4.

\medskip
\noindent
{\bf Corollary 1.} If the canonical model of a steady-state class has a maximum of $N$ steady states  for any rate constants and conserved amounts, then all extension models in the class, or in any smaller  class, have at most $N$ steady states.  \hfill{$\qed$}

\medskip
In particular, if the largest canonical model (with a dead-end reaction added to  all core complexes)  is not  multistationary, then no extension model, including the core model, can  be multistationary. Likewise, if the smallest canonical model (the core model) is multistationary, then all extension models are multistationary.   If there are no conservation laws, then there is only one steady-state class and any steady state in the core model corresponds precisely to a steady state in the extension model (assuming rate constants are realizable; Theorems 2 and 3). Hence either all extensions models (with realizable rate constants) and the core model are multistationary or none of them are. Further, if the core model cannot have multistationarity neither can an extension model, independently of the realizability of the rate constants.
  
 Theorem~4 and Corollary~1 provide assistance to the model builder. First of all the modeler can focus on the canonical models only. 
 By screening the canonical models for the possibility of multistationarity,  the modeler  obtains a clear idea about the effects of  intermediates. 
In Fig.\,3, the steady-state class given by $\{y_2,y_3\}$ does not have multiple steady states, hence the same holds for the classes $\{y_2\}$, $\{y_3\}$ and the core model (Theorem 4). Multistationarity in Fig.\,3 (two first columns) is due to the non-linearity introduced by  $[y_1]$  in the conservation laws, irrespectively the presence or absence of  $[y_2]$ and $[y_3]$. 

\begin{figure}[!t]
\centering
\includegraphics{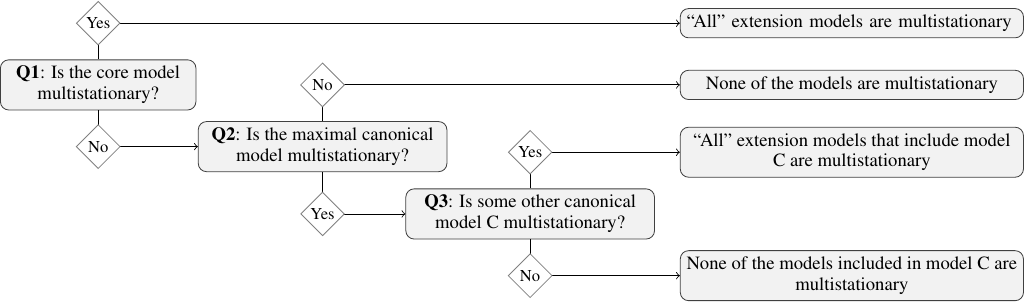}
\caption{\small
A decision tree to detect  multistationary steady-state classes.  ``All''  means that the model exhibits multistationarity as long as the rate constants of the core model can be realized by the extension model.
Q3 must be checked for different canonical models as necessary. }
\end{figure}

Our approach provides a simple graphical procedure to classify the extension models into a finite set of classes with common dynamical features, thereby elucidating the consequences of  choosing a specific model. Fig.\,4 shows a decision diagram  that guides the modeler 
through a number of possibilities. Each decision can be checked using various computational methods \cite{crnttoolbox,conradi-PNAS,Feliu-inj,PerezMillan} or by manually solving the system (a task that simplifies due to the simple form of the canonical models).

\section{Example: two-component systems}
Table\,1 shows a biological  application of the decision diagram in Fig.\,4. We consider three models of two-component systems of increasing complexity \cite{Igoshin:2008,salvado:plosone:2012}. The basic mechanism consists of a sensor kinase that autophosphorylates $\text{SK}\leftrightharpoons \text{SK}^*$ (here $^*$ indicates a phosphate group), the phosphate group is subsequently transferred to a response regulator $\text{RR}$ and  dephosphorylation of $\text{RR}^*$  is catalyzed by a  phosphatase $\text{Ph}$. This model is considered in Table\,1 (model A). Models B and C in Table\,1 consist of the first model enriched with more mechanisms. Model B, $\text{SK}$ has a bifunctional role and acts as a phosphatase, and likewise RR catalyzes dephosphorylation of SK. 
Model C is an enrichment of model B with dephosphorylation of $\text{SK}^*$ by a phosphatase $\text{T}$. 
Models B and C  in Table\,1 are core models of the models considered in  \cite{Igoshin:2008,salvado:plosone:2012}. Models B and C are not extension models of model A, nor of each other. All models considered in Table\,1 have the total amount of kinase and the total amount of response regulator conserved.

\begin{table}[!t]
\centering
\includegraphics{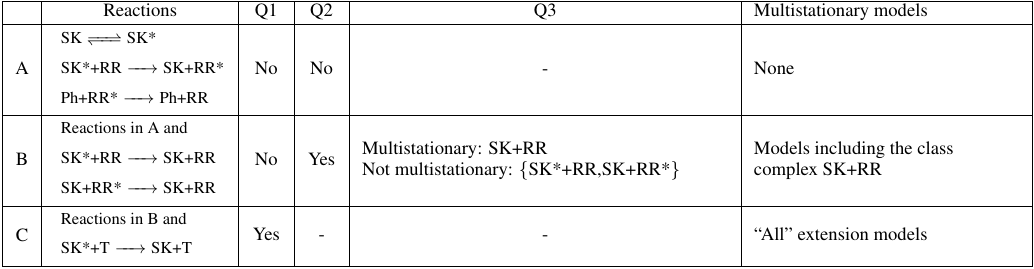}
\caption{\small
Example of an application of the decision tree in Fig.\,4. Four models of two-component systems are considered. All models are \emph{core} in the sense that they are not extension models of any  smaller models. SK=sensor kinase; RR=response regulator; Ph=phosphatase; T=phosphatase; *=phosphorylated (activated) state.  {\bf (A)} Basic phosphorelay mechanism: SK autophosphorylates and transfers the phosphate group to RR; a phosphatase dephosphorylates RR*.  {\bf (B)} Same as  {\bf (A)}, in addition SK is bifunctional and dephosphorylates RR* and RR catalyzes dephosphorylation of $\text{SK}^*$. 
 {\bf (C)} Same as  {\bf (B)} with the addition of a phosphatase T for SK*. System  {\bf (B)} is a core model of the mechanism considered in \cite{Igoshin:2008} and  in \cite[Model A]{salvado:plosone:2012}. {\bf (C)} is a core model of \cite[Model B]{salvado:plosone:2012}. The models analyzed in \cite{Igoshin:2008,salvado:plosone:2012} are extension models belonging to multistationary classes (last column of the table) and hence display multistationarity. The answers to Q1-Q3 have been obtained using the CRN Toolbox \cite{crnttoolbox}.
}
\end{table}

We have applied the decision tree in Fig.\,4 to each of the models. Model A and C are \emph{robust} with respect to the choice of intermediates: model A cannot exhibit multistationarity for any choice of rate constants and model C exhibits multistationarity for some choice of rate constants, independently of how intermediates are included in the models. Oppositely, model B is sensitive to how intermediates are introduced. The core model is not multistationary but inclusion of intermediates in some reaction paths introduces multistationarity. We conclude that modeling of this system needs to be done carefully, as the qualitative conclusions that can be drawn from the model depends on the choice of intermediates.  

Our analysis of the canonical models identify the steady-state classes that can exhibit multistationarity and pinpoint the particular class complexes that introduce non-linearity in the conservation laws. The analysis provides a simple overview of the effect of introducing intermediates in different reactions.

\section{Discussion}
Our work develops from the perspective of the model and clarifies the effects of intermediate species in biochemical modeling. Simplifications are always applied in  model building but generally on a case to case basis, motivated by biological assumptions. One example is the Quasi-Steady-State Approximation (QSSA), where equations of the form $[\dot{Y}]=0$, together with some (but not all) conservation laws,   are used to eliminate species \cite{segal,enz-kinetics}. This results in a \emph{hybrid} model   between our core and extension models. Our framework allows us to eliminate intermediate species generally and to compare core and extension models in a formal mathematical way. This comparison can be made independently of particular biological assumptions.
An important insight is that model simplification and model choice must be pursued with great care as crucial dynamical properties might change radically  by the inclusion of  intermediates. 

We remarked in the introduction that intermediates have been shown to affect steady-state properties of  a system, such as the emergence of ultrasensitivity \cite{Legewie:2005hw,Ventura-Hidden}. It follows from our results that intermediates cannot change a model's properties at steady state if there are no conservation laws. In particular, if production and degradation of each species are explicitly modeled, then a model without intermediates is fully justified at steady state. 

It has previously been noted that models that seem very similar  can have different qualitatively properties, e.g.\,\cite{craciun-feinberg-pnas}.  Our analysis is a  step forward in quantifying the relationship between simple and complex models of the same system, and in using simple models to predict properties of complex systems. Our results can guide the modeler through the critical issue of choosing  a model and in learning about model properties. 
As such the results are useful for interpretation of experimental  data and for designing synthetic  systems. 
We envisage that our techniques can  be extended to other models  than those defined by intermediates and can provide further insight into the nature of biochemical and other types of modeling \cite{rao,Fel_elim}.

\subsection*{Acknowledgements} 
 This work was supported by the Lundbeck Foundation, the Leverhulme Foundation and the Danish Research Council. E.F. is supported by a postdoctoral grant ``Beatriu de Pin\'os'' from the Generalitat de Catalunya and the project  MTM2012-38122-C03-01 from the Spanish ``Ministerio de Econom\'{\i}a y Competitividad''.  Part of this work was done while E.F. and C.W. visited Imperial College London in 2011. Neil Bristow is thanked for assistance.
The anonymous reviewers are thanked for their constructive comments.

\newpage

\appendix

\section{Proofs of theorems}

\paragraph{Erratum. }
The proof of Proposition 2 in the originally published version of the manuscript was erroneous. The result was though correct and the proof has been fixed in this version.

We are grateful to Magal\'{i} Giaroli from the University of Buenos Aires for pointing out the error in the proof of Proposition 2 in the previous version of the Electronic Supplementary Material. We would like to thank her and Daniele Cappelletti from University of Copenhagen for proof reading this new version.

\subsection{Preliminaries}

{\bf Reaction networks. } 
General standard background material on reaction networks can be found in \cite{gunawardena-notes,feinbergnotes}.
Here we recapitulate the definitions and properties necessary  for our work.
Consider a set $\mmS$ of $n$ species $S_1,\dots,S_n$.
A \emph{reaction network} (or simply network) consists of 
a set of reactions $\mmR$ whose elements take the form
$y\rightarrow y'$ with 
$y=\sum_{i=1}^n \alpha_{i} S_i$ and $y'= \sum_{i=1}^n \beta_{i} S_i $ for some non-negative integer coefficients
 $\alpha_{i},\beta_{i}\geq 0$. 
 The linear combinations $y,y'$ are called \emph{complexes} and the coefficients   are called \emph{stoichiometric coefficients}.
Complexes $y,y'$ can be seen as elements of the vector space $\R^n$ with entries given by the stoichiometric coefficients.
An \emph{intermediate}  $Y$  satisfies that 
the only complex involving $Y$ is $Y$ itself and there is at least one reaction of the form $y\rightarrow Y$ and one reaction of 
the form $Y\rightarrow y'$. Here $y$ and $y'$ can be other intermediates. An intermediate is thus both a species and a complex.

The molar concentration  of species $S_i$ at time $t$ is denoted by $c_i=c_i(t)$. To any complex $y$ we associate a monomial 
$c^y=\prod_{i=1}^n  c_i^{y_i}$. For example, if $y=(2,1,0,1)$, then the associated monomial is $c^y=c_1^2c_2c_4$. 
In the main text, concentrations are denoted by $[S_i]$ and the monomial associated to $y$ by $[y]$.

We assume that each reaction $y\rightarrow y'$ has an associated positive  \emph{rate constant} $k_{y\rightarrow y'}$, that is, $k_{y\rightarrow y'}$ is in $\R_+$. 
The set of reactions together with their associated rate constants give rise to a polynomial system of ordinary differential equations (ODEs) taken with 
\emph{mass-action kinetics}:
\begin{align}\label{ode}
\dot{c_i} &=\sum_{y\rightarrow y'\in \mmR} k_{y\rightarrow y'} c^y (y_i' -   y_i),\qquad i=1,\dots,n.
\end{align}
These ODEs describe the dynamics of the concentrations $c_i$ in time.
The steady states of the system are the solutions to a system of polynomial  equations in $c_1,\dots,c_n$ obtained by setting the 
derivatives of the  concentrations to zero:
\begin{align}\label{steadystate}
0 =&\sum_{y\rightarrow y'\in \mmR} k_{y\rightarrow y'} c^y (y_i' -   y_i), \qquad \textrm{for all }i=1\ldots,n.
\end{align}

It is convenient to treat  the rate constants as parameters with unspecified  values, that is as symbols. For that, let 
$$\Con=\{k_{y\rightarrow y'}| y\rightarrow y'\in \mmR \}$$
 be the set of the symbols. Then the system \eqref{steadystate}  is a system of polynomial  equations in $c_1,\dots,c_n$ with coefficients in the field $\R(\Con)$. 

The dynamics of a reaction network might preserve quantities that remain constant  over time. If this is the case, the dynamics  takes place in a proper invariant subspace of $\R^{n}$.  Let $x\cdot x'$ denote the Euclidian scalar product of two vectors $x,x'$ and $\omR^n_+$ the vectors with non-negative coordinates.

\begin{definition}\label{stoichiometricspace}
The \emph{stoichiometric subspace}  of a reaction network with reactions set $\mmR$ is the following subspace of $\R^n$: 
$$\Gamma = \langle y'-y|\, y\rightarrow y' \in \mmR\rangle.   $$
\end{definition}

By the definition of the mass-action ODEs, the vector $\dot{c}$ points along the stoichiometric 
subspace $\Gamma$. The \emph{stoichiometric class} of a concentration vector $c$ is $\{c+
\Gamma\}\cap\omR_{+}^n$.  Two steady states    $c,c'$ are called \emph{stoichiometrically compatible} if $c-c'\in 
\Gamma$. This is equivalent to $\omega\cdot c = \omega \cdot c'$ for all $\omega\in \Gamma^{\perp}$. 

In other words, if $\omega=(\lambda_1,\dots,\lambda_n)\in \Gamma^{\perp}$, then  $\sum_{i=1}^n \lambda_i \dot{c_i}=0$. This implies that the 
linear combination of  concentrations $\sum_{i=1}^n \lambda_i c_i$ is independent of time and thus determined by the 
initial concentrations of the system.  Such a relation is called a \emph{conservation law} and the value it takes in a stoichiometric class is called a \emph{conserved amount}.
In particular, any steady-state solution of the system preserves the conserved  amounts.
The vectors  $\omega\in \Gamma^{\perp}$, that is the conservation laws, are the vectors $\omega$ such that $\omega\cdot v=0$ for all $v\in \Gamma$.  If the generators of $\Gamma$ given in Definition \ref{stoichiometricspace} are written as the columns of a matrix $A$ (called the stoichiometric matrix), then the conservation laws are found as elements of the kernel  of the transpose of $A$.

\medskip
\textbf{Graphs. } Given a directed graph $G$ we call $\tau$ a \emph{spanning tree} of $G$ if $\tau$ is a directed subgraph of $G
$ with the same node set as $G$, and the undirected graph obtained by removing orientations from edges in $\tau$ is 
connected and acyclic.  A spanning tree $\tau$ is said to be \emph{rooted at v} if $v$ is a node in $\tau$, and the unique path 
from any other node $w \in \tau$ to $v$ is directed from $w$ to $v$.  $G$ is \emph{strongly connected} if for any (unordered) 
pair of nodes $v,w \in \tau$ there is a directed path from $v$ to $w$.  If $G$ is labeled then any spanning tree $\tau$ will inherit 
the labelling from $G$ in the obvious way.  For any labeled graph $G$ we define
\begin{align*}
\pi\left(G\right) = \prod_{x\xrightarrow{a} y \in G} a\ .
\end{align*}

\medskip
\textbf{Core and extended models. }
Consider a \emph{core model} with species $\mmS_C=\{S_1,\dots,S_n\}$, set of reactions $\mmR_C$ and let $\mmC_C$ denote the set of core complexes.
An \emph{extension model} (of the core model) has the following form: 
\begin{enumerate}[(i)]
\item The set of species is $\mmS_E = \mmS_C\cup \mmY$ with $\mmY$ a set of intermediates. Let $p$ be the cardinality of $
\mmY$.
\item The set of reactions $y\rightarrow y'$ obtained from collapsing the reaction paths in the extension model $y\rightarrow Y_1\rightarrow \dots \rightarrow Y_k \rightarrow y'$ with $Y_i\in \mmY$ and $y,y'\in \mmC_C$ equals $\mmR_C$.
\end{enumerate}

The set of reactions $\mmR_E$ is divided into four non-overlapping subsets: 
\begin{itemize}
\item[-] The reactions that are both in the extended and in the core model, $\mmR_{C\cap E}=\mmR_E\cap \mmR_C$.
\item[-] The reactions from a core complex to an intermediate, $\mmR_{C\rightarrow E}=\{y\rightarrow Y\in
\mmR_E\vert\  y\in\mmC, Y\in\mmY\}$.
\item[-] The reactions from an intermediate to a core complex, $\mmR_{E\rightarrow C}=\{Y\rightarrow y\in
\mmR_E\vert \ y\in\mmC, Y\in\mmY\}$.
\item[-] The reactions between two intermediates, $\mmR_{E\rightarrow E}=\{Y\rightarrow Y'\in\mmR_E\vert \ Y,Y'\in\mmY\}$.
\end{itemize}

We assume that the set of species of an extended model is ordered as $\{S_1,\dots,S_n,Y_1,\dots,Y_p\}$.
For simplicity, we let $c_i$ denote the concentration of $S_i$ for $i=1,\dots,n$ and $u_i$ the concentration of $Y_{i}$ for 
$i=1,\dots,p$.
The ODEs of the extended model consist of $n+p$ equations. Since intermediates do not interact with species $S_i$, the ODE 
equations do not have monomials involving both $c_*$ and $u_*$.

\subsection{Proof of Theorem 1}

Consider a  core model with species $\mmS_C=\{S_1,\dots,S_n\}$, set of reactions $\mmR_C$ and let $\mmC_C$ denote the set of core complexes. Let $\Gamma_C$ be the stoichiometric space.
Consider an extension model with  set of species  $\mmS_E = \mmS_C\cup \mmY$ with $\mmY=\{Y_1,\dots,Y_p\}$ a set of intermediates, and set of reactions   $\mmR_E$.
Let $\Gamma_E$ be the stoichiometric space of the extended model:
$$\Gamma_E = \langle y-y'\vert y\rightarrow y'\in\mmR_E\rangle.$$
For every reaction $y\rightarrow y'\in \mmR_C$, there exists a reaction path   $y \rightarrow Y_{i_1} \rightarrow  \ldots \rightarrow Y_{i_k} \rightarrow y'$, possibly with empty set of intermediates, such that each reaction belongs to $\mmR_E$. 
It follows that there is an inclusion 
\begin{equation}\label{inclusion}
\Gamma_C\hookrightarrow \Gamma_E
\end{equation}
 obtained by setting the coordinates $n+1,\dots,n+p$ to zero.

Let the reaction graph of a network be the graph with the complexes as nodes  and an (undirected) edge between any two complexes forming a reaction. Let the reaction graph of the core model have $J$ components. Then the reaction graph of the extension model also has $J$ components. Any reaction in the core model can be realized as a series of reactions in the extension model, by assumption. 
Hence the extension model cannot have more than $J$ components. We show that it has precisely $J$ components.  Consider intermediates $Y_{i_1},\ldots, Y_{i_k}$ such that $y - Y_{i_1} - \ldots - Y_{i_k} - y'$ is a series of reactions (here $-$ is either $
\rightarrow$ or $\leftarrow$) and $y,y'$ belong to different connected components of the core reaction graph. If the reactions are all in the same direction then either $y\rightarrow y'$ or $y'\rightarrow y$ is in the core model and hence $y,y'$ belong to the same connected component of the core reaction graph. If the reactions are in different directions,  let $Y_{i_j}$ be the first intermediate such that $\rightarrow Y_{i_j}\leftarrow$ or $\leftarrow Y_{i_j}\rightarrow$. By hypothesis, there exists a reaction path $Y_{i_j}\rightarrow \dots \rightarrow y''$ or $y''\rightarrow \dots \rightarrow Y_{i_j}$ respectively. Then, either $y\rightarrow y''$ and $y'\rightarrow y''$ or the reverse reactions are core reactions and hence $y,y'$ belong to the same connected component.

The statement of Theorem \ref{fact:cons} is:
\begin{theorem}\label{fact:cons}
The conservation laws in the core model are in one-to-one correspondence with the conservation laws in the extension model. The correspondence is obtained by adding the same linear combination of the  concentrations of the intermediates to the  conservation laws of the core model.
\end{theorem}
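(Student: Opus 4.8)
Since the conservation laws of a network are exactly the elements of $\Gamma^{\perp}$, the task is to exhibit a linear isomorphism $\Gamma_C^{\perp}\to\Gamma_E^{\perp}$ of the stated shape, and the plan is to write it down explicitly. Identify a core complex $y$ with its coefficient vector in $\R^{n}\subseteq\R^{n+p}$ and write the intermediate $Y_j$ as the standard basis vector $e_{n+j}$. Given $\omega=(\lambda_1,\dots,\lambda_n)\in\Gamma_C^{\perp}$, I define a candidate extension $\widetilde\omega=(\lambda_1,\dots,\lambda_n,\nu_1,\dots,\nu_p)\in\R^{n+p}$ by setting, for each intermediate $Y_j$, $\nu_j=\omega\cdot y$ for some core complex $y$ that admits a directed reaction path $y\to Y_{i_1}\to\dots\to Y_j$ through intermediates only. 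Such a $y$ exists by Theorem~2 (every intermediate is produced, so $\mu_{Y_j,y}\neq 0$ for at least one $y$, which by that theorem is equivalent to the existence of such a path); dually, every intermediate also lies on a directed path terminating at a core complex, since otherwise it would survive the collapsing of condition~(iii) and hence would belong to the core model, a contradiction.

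\textbf{Well-definedness and membership in $\Gamma_E^{\perp}$.} The crucial point is that $\nu_j$ does not depend on the chosen complex $y$ or path: if $y\to\dots\to Y_j$ and $\widehat y\to\dots\to Y_j$ are two admissible paths, pick a continuation $Y_j\to\dots\to y'$ to a core complex; concatenation yields directed core-to-core paths through intermediates, which by condition~(iii) collapse to core reactions, so $y'-y$ and $y'-\widehat y$ lie in $\Gamma_C$ and therefore $\omega\cdot y=\omega\cdot y'=\omega\cdot\widehat y$. Next, $\widetilde\omega\cdot(y'-y)=0$ for every reaction of $\mmR_E$, checked over the four reaction classes: for $y\to y'\in\mmR_{C\cap E}\subseteq\mmR_C$ the intermediate coordinates cancel and $\widetilde\omega\cdot(y'-y)=\omega\cdot(y'-y)=0$; for $y\to Y_j\in\mmR_{C\rightarrow E}$ the length-one path $y\to Y_j$ is admissible, so $\widetilde\omega\cdot(e_{n+j}-y)=\nu_j-\omega\cdot y=0$; for $Y_j\to y'\in\mmR_{E\rightarrow C}$, picking an admissible path $y\to\dots\to Y_j$ and appending $y'$ shows via condition~(iii) that $\omega\cdot y'=\omega\cdot y=\nu_j$, so $\widetilde\omega\cdot(y'-e_{n+j})=0$; and for $Y_j\to Y_{j'}\in\mmR_{E\rightarrow E}$, appending this edge to an admissible path reaching $Y_j$ shows $\nu_{j'}=\nu_j$, so $\widetilde\omega\cdot(e_{n+j'}-e_{n+j})=0$. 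Hence $\omega\mapsto\widetilde\omega$ is a well-defined linear map $\Gamma_C^{\perp}\to\Gamma_E^{\perp}$; it is injective because $\widetilde\omega$ restricts to $\omega$ on the first $n$ coordinates, and it has the asserted form, adding a linear combination of the intermediate concentrations to each core conservation law.

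\textbf{Surjectivity, and the main obstacle.} Conversely, let $\widetilde\omega=(\lambda_1,\dots,\lambda_n,\nu_1,\dots,\nu_p)\in\Gamma_E^{\perp}$ and set $\omega=(\lambda_1,\dots,\lambda_n)$. For any core reaction $y\to y'\in\mmR_C$ there is, by the defining property of an extension model, a directed path $y\to Y_{i_1}\to\dots\to Y_{i_k}\to y'$ in $\mmR_E$; summing the relations $\widetilde\omega\cdot(\text{product}-\text{reactant})=0$ along its edges telescopes to $\widetilde\omega\cdot(y'-y)=\omega\cdot(y'-y)=0$, so $\omega\in\Gamma_C^{\perp}$. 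Finally, for each intermediate $Y_j$ pick a reaction into it; the associated relation reads $\nu_j=\widetilde\omega\cdot(\text{its reactant})$, which equals $\omega\cdot y$ if that reactant is a core complex $y$ and equals $\nu_{j'}$ if it is another intermediate $Y_{j'}$ — iterating backwards along the path supplied by Theorem~2 forces $\nu_j=\omega\cdot y$ for a core complex feeding $Y_j$, i.e.\ precisely the value assigned by the construction. Thus $\widetilde\omega$ is the image of $\omega$, and the map is a bijection. I expect the only genuinely delicate step to be the well-definedness of the scalars $\nu_j$ — equivalently, path-independence of $\omega\cdot y$ along intermediate paths between core complexes — which is exactly where the collapsing hypothesis~(iii) is indispensable; everything else is bookkeeping over the four reaction classes. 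As remarked in the text, only the reaction set enters the argument, not the assumption of mass-action kinetics. (The equality of linkage-class counts established just above provides an independent dimension check, but it is not needed here.)
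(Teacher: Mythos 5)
Your proof is correct and follows the same overall strategy as the paper's: lift each $\omega\in\Gamma_C^{\perp}$ to $\widetilde{\omega}\in\Gamma_E^{\perp}$ by assigning to every intermediate the value $\omega\cdot y$ for a core complex $y$ attached to it, check orthogonality reaction class by reaction class, and invert by projecting onto the core coordinates. The one point of genuine divergence is how well-definedness of the intermediate coefficients is obtained. The paper first proves a separate combinatorial lemma --- that the undirected reaction graph of the extension model has the same number of connected components as that of the core model --- and then sets the coefficient of $Y_j$ equal to $\omega\cdot y^j$ for \emph{any} core complex $y^j$ in the same component, constancy over a component being immediate from $\omega\perp\Gamma_C$. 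You instead restrict to core complexes admitting a \emph{directed} path to $Y_j$ through intermediates and prove path-independence by concatenating two such paths with a common continuation to a core complex and invoking the collapsing condition (iii); this bypasses the component-count lemma and keeps the argument local, and the two recipes assign the same value. At the other end, the paper gets the basis correspondence cheaply from the dimension count $\dim\Gamma_C^{\perp}=\dim\Gamma_E^{\perp}$ (via the inclusion $\Gamma_C\hookrightarrow\Gamma_E$), while you verify directly that the projection is a two-sided inverse; your ``forced $\nu_j$'' step actually supplies a detail the paper asserts without proof. One presentational nit: you invoke Theorem~2 for the purely structural fact that every intermediate is reachable from a core complex along intermediates. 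Since Theorem~1 precedes Theorem~2, it is cleaner to derive this directly from the definition of an extension model (it is exactly the reachability fact the paper itself asserts when proving strong connectedness of $\widehat{G}_{\mmY}$); the fact itself is not in doubt, so this is a matter of ordering rather than a gap.
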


Theorem \ref{fact:cons} will follow from the lemmas below.

\begin{lemma}\label{fact31} Assume that the reaction graph of the core model has $J$ connected components (which we order) and for $j=1,\ldots,J$, select a complex $y^j$ in each component. Let $\omega=(\omega_1,\ldots,\omega_n)\in\Gamma_C^{\perp}$ and define $a_j=\omega\cdot y^j$. Define a vector $\widetilde{\omega}\in \R^{n+p}$ such that 
$$\widetilde{\omega}_i = \begin{cases}\omega_i & \textrm{for } i=1,\dots,n, \\   a_j, & \textrm{if }Y_{i-n}\textrm{ is in the $j$-th component and }i=n+1,\dots,n+p.\end{cases}$$
We have 
\begin{enumerate}[(i)]
\item $\widetilde{\omega}\in\Gamma_E^{\perp}$.
\item  If $\omega^1,\ldots,\omega^d$ form a basis of $\Gamma_C^{\perp}$ then $\widetilde{\omega}^1,\ldots,\widetilde{\omega}^d$ form a basis of $\Gamma_E^{\perp}$.  
\end{enumerate}
\end{lemma}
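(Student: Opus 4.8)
The plan is to derive both parts from a single elementary observation about conservation laws, applied twice. \emph{Observation:} for any reaction network, a vector $\omega$ lies in $\Gamma^{\perp}$ if and only if the scalar product $\omega\cdot z$ takes one and the same value on all complexes $z$ in a fixed connected component of the reaction graph. One direction is immediate because $\Gamma$ is spanned by the differences $z'-z$ over reactions $z\to z'$, so adjacent complexes — hence complexes joined by a path — have equal products; the converse is the same computation read backwards. Throughout I identify an intermediate $Y_i$, when viewed as a complex, with the standard basis vector $e_{n+i}\in\R^{n+p}$, so that $\widetilde{\omega}\cdot e_{n+i}=\widetilde{\omega}_{n+i}$.

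For part (i) I would invoke the portion of the proof already carried out above, which shows that the reaction graph of the extension model has exactly $J$ connected components and that the core complexes contained in each of them constitute one connected component of the core reaction graph; fix the labelling so that the $j$-th extension component contains the $j$-th core component. Now evaluate $\widetilde{\omega}\cdot z$ on each complex $z$ of the extension model. If $z=y$ is a core complex in the $j$-th component, then $\widetilde{\omega}\cdot y=\omega\cdot y=\omega\cdot y^{j}=a_{j}$, using that $\omega\in\Gamma_{C}^{\perp}$ is constant on the $j$-th core component (by the Observation applied to the core model) and that $y^{j}$ lies there. If $z=Y_i$ is an intermediate in the $j$-th component, then $\widetilde{\omega}\cdot e_{n+i}=\widetilde{\omega}_{n+i}=a_{j}$ by definition of $\widetilde{\omega}$. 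Hence $\widetilde{\omega}\cdot z$ equals $a_j$ on the whole $j$-th component of the extension reaction graph, so $\widetilde{\omega}\in\Gamma_{E}^{\perp}$ by the Observation applied to the extension model.

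For part (ii), linear independence is automatic: the assignment $\omega\mapsto\widetilde{\omega}$ is linear, and its first $n$ coordinates recover $\omega$, so it is injective and sends a basis to an independent set. For spanning, take an arbitrary $\eta\in\Gamma_{E}^{\perp}$ and put $\omega:=(\eta_1,\dots,\eta_n)$. By the Observation, $\eta\cdot z$ is constant, say $b_j$, on the $j$-th extension component; restricting to core complexes shows $\omega\cdot z$ is constant on each core component, so $\omega\in\Gamma_{C}^{\perp}$, and moreover $a_{j}:=\omega\cdot y^{j}=\eta\cdot y^{j}=b_{j}$. For an intermediate $Y_i$ in the $j$-th component, $\eta_{n+i}=\eta\cdot e_{n+i}=b_{j}=a_{j}$, which is precisely the $(n+i)$-th coordinate of $\widetilde{\omega}$; therefore $\eta=\widetilde{\omega}$. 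Writing $\omega=\sum_{k}c_{k}\omega^{k}$ in the given basis of $\Gamma_{C}^{\perp}$ and using linearity of $\omega\mapsto\widetilde{\omega}$ yields $\eta=\widetilde{\omega}=\sum_{k}c_{k}\widetilde{\omega}^{k}$, so the $\widetilde{\omega}^{k}$ span $\Gamma_{E}^{\perp}$; with independence they form a basis.

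The only real obstacle is bookkeeping, not mathematics: one must ensure the component labelling used to \emph{define} $\widetilde{\omega}$ is consistent across the core and extension reaction graphs — exactly what the paragraph preceding the lemma establishes, namely that intermediates cannot bridge distinct core components and that every extension component meets a core component — and one must consistently treat an intermediate both as a species (a coordinate index) and as a complex (a unit vector in $\R^{n+p}$). Once the "constant on linkage classes" characterization of conservation laws is in hand, the rest is a short verification.
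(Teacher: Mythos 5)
Your proof is correct, and part (i) is in substance the same computation the paper performs: the ``constant on connected components'' characterization of $\Gamma^{\perp}$ is exactly the paper's check that $\widetilde{\omega}\cdot(y'-y)=0$ for each of the four reaction types, packaged through the preliminary fact (which you correctly cite) that the extension graph has the same $J$ components, each containing exactly one core component. Where you genuinely diverge is part (ii). The paper proves spanning by a dimension count: independence of the $\widetilde{\omega}^k$ gives $\dim\Gamma_E^{\perp}\ge d$, and the reverse inequality is extracted from the inclusion $\Gamma_C\hookrightarrow\Gamma_E$ (a step the paper states rather tersely, since $\dim\Gamma_C\le\dim\Gamma_E$ alone does not compare the orthogonal complements across ambient spaces of different dimension). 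You instead prove surjectivity of $\omega\mapsto\widetilde{\omega}$ directly: given $\eta\in\Gamma_E^{\perp}$, its truncation $\omega$ lies in $\Gamma_C^{\perp}$ and $\widetilde{\omega}=\eta$ because both are forced to take the common value $b_j=a_j$ on all intermediates of the $j$-th component. This is cleaner and self-contained; it also absorbs the content of the paper's subsequent lemma (that the projection $\widetilde{\omega}\mapsto\widetilde{\omega}^{\pi}$ lands in $\Gamma_C^{\perp}$ and inverts the lifting), so in effect you establish the full one-to-one correspondence of Theorem~1 in a single pass, at the cost of not separately recording the dimension identity $\dim\Gamma_E=\dim\Gamma_C+p$ that the paper later reuses in its remark on deficiency.
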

\begin{proof}
First of all, we check that $a_j$ is independent of the choice of $y^j$. Fix a component $C_j$ of the reaction graph of the core model. For any reaction $y\rightarrow y'$ in $C_j$, we have $\omega\cdot (y'-y)=0$ and hence   $\omega\cdot y'= \omega\cdot y$. Since $C_j$ is connected, $a_j$ is independent of the choice of $y^j$.
Note that if $y$ is a core complex, then $\omega\cdot y = \widetilde{\omega}\cdot y$.

To show (i), we need to show that $\widetilde{\omega}\cdot(y'-y)=0$ for all $y\rightarrow y'\in\mmR_E$. 
Since $\omega\in \Gamma_C^{\perp}$, the equality clearly holds if $y\rightarrow y'\in \mmR_{C\cap E}$. 
Consider $y\rightarrow Y_i\in \mmR_{C\rightarrow E}$. If $Y_i$ belongs to the $j$-th component, then we have
$\widetilde{\omega}\cdot Y_i = a_j = \omega\cdot y = \widetilde{\omega}\cdot y$. Therefore, $\widetilde{\omega}\cdot(Y_i-y)=0$. Similarly we check that $\widetilde{\omega}$ is orthogonal to all reactions in $\mmR_{E\rightarrow C}$ and $\mmR_{E\rightarrow E}$. This proves (i). 

To prove (ii) note that if $\omega^1,\ldots,\omega^d$ are linearly independent then so are $\widetilde{\omega}^1,\ldots,\widetilde{\omega}^d$.
Further, by the inclusion \eqref{inclusion}, $\dim(\Gamma_C)\leq \dim(\Gamma_E)$. Consequently,
$$ d = \dim(\Gamma_C^{\perp}) \geq \dim(\Gamma_E^{\perp}) \geq d,$$
from where it follows that $\dim(\Gamma_E^{\perp})= d$ and hence $\widetilde{\omega}^1,\ldots,\widetilde{\omega}^d$ is a basis of 
$\Gamma_E^{\perp}$.
 \end{proof}

\begin{lemma} 
For $\widetilde{\omega}=(\omega_1,\dots,\omega_{n+p})\in\Gamma_E^{\perp}$, define 
 $\widetilde{\omega}^{\pi} = (\omega_1,\dots,\omega_{n})$.
 We have
 \begin{enumerate}[(i)]
 \item   $\widetilde{\omega}^{\pi}\in\Gamma_C^{\perp}$.
 \item  If $\widetilde{\omega}^1,\ldots,\widetilde{\omega}^d$ form a basis of $\Gamma_E^{\perp}$ then $\widetilde{\omega}^{1\pi},\ldots,\widetilde{\omega}^{d\pi}$ form a basis of $\Gamma_C^{\perp}$. 
 \end{enumerate}
\end{lemma}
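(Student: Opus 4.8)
The plan is to deduce both parts from the inclusion $\Gamma_C\hookrightarrow\Gamma_E$ of \eqref{inclusion} together with Lemma~\ref{fact31}; the statement is essentially a formal companion to that lemma, saying that truncation to the first $n$ coordinates undoes the extension construction.

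For part (i) I would argue directly from orthogonality. Each generator $y'-y$ of $\Gamma_C$, with $y\rightarrow y'\in\mmR_C$, lies in $\Gamma_E$ after padding the last $p$ coordinates with zeros; this is precisely \eqref{inclusion}, and concretely it holds because the reaction path $y\rightarrow Y_{i_1}\rightarrow\cdots\rightarrow Y_{i_k}\rightarrow y'$ guaranteed in the extension model has reaction vectors that telescope to $(y'-y,0,\dots,0)\in\R^{n+p}$. Since $\widetilde{\omega}\in\Gamma_E^{\perp}$, this gives $0=\widetilde{\omega}\cdot(y'-y,0,\dots,0)=\sum_{i=1}^{n}\omega_i(y_i'-y_i)=\widetilde{\omega}^{\pi}\cdot(y'-y)$. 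As $y\rightarrow y'$ ranges over $\mmR_C$ these vectors span $\Gamma_C$, so $\widetilde{\omega}^{\pi}\in\Gamma_C^{\perp}$.

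For part (ii) I would consider the linear map $\pi\colon\Gamma_E^{\perp}\to\Gamma_C^{\perp}$, $\widetilde{\omega}\mapsto\widetilde{\omega}^{\pi}$, which is well defined by part (i), and show it is a bijection. Surjectivity is immediate from Lemma~\ref{fact31}: given $\omega\in\Gamma_C^{\perp}$, the vector $\widetilde{\omega}$ constructed there belongs to $\Gamma_E^{\perp}$ and has first $n$ coordinates equal to $\omega$, so $\pi(\widetilde{\omega})=\omega$. On the other hand, \eqref{inclusion} gives $\dim\Gamma_C\le\dim\Gamma_E$, hence $d=\dim\Gamma_C^{\perp}\ge\dim\Gamma_E^{\perp}$; together with surjectivity of $\pi$ this forces $\dim\Gamma_E^{\perp}=d$ and $\pi$ bijective (alternatively one may simply quote Lemma~\ref{fact31}(ii) for $\dim\Gamma_E^{\perp}=d$). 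A bijective linear map carries a basis to a basis, so if $\widetilde{\omega}^1,\dots,\widetilde{\omega}^d$ is a basis of $\Gamma_E^{\perp}$ then $\widetilde{\omega}^{1\pi},\dots,\widetilde{\omega}^{d\pi}$ is a basis of $\Gamma_C^{\perp}$.

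I would close by noting that $\pi$ is in fact the inverse of the extension map $\omega\mapsto\widetilde{\omega}$ of Lemma~\ref{fact31}: one composite is the identity on $\Gamma_C^{\perp}$ by construction, and since both maps are bijections between $d$-dimensional spaces the other composite is the identity on $\Gamma_E^{\perp}$ as well. This is not needed for the statement but makes the correspondence of Theorem~\ref{fact:cons} between the conservation laws of the two models explicit and symmetric. There is no genuine obstacle here; the only thing to watch is the dimension bookkeeping, everything else being a formal consequence of \eqref{inclusion} and Lemma~\ref{fact31}.
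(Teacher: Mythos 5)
Your proof is correct and follows essentially the same route as the paper's: part (i) via the inclusion \eqref{inclusion} of the padded core reaction vectors into $\Gamma_E$, and part (ii) by combining the lift of Lemma~\ref{fact31} with the dimension equality $\dim\Gamma_C^{\perp}=\dim\Gamma_E^{\perp}$ (the paper phrases this as the projected vectors generating $\Gamma_C^{\perp}$ rather than as bijectivity of $\pi$, but the content is identical). Your closing remark that the two maps are mutually inverse is likewise exactly the observation the paper records after the two lemmas to justify the correspondence of Theorem~\ref{fact:cons}.
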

\begin{proof}
Any reaction  $y\rightarrow y'\in \mmR_C$ satisfies  $y'-y\in \Gamma_E$ under the inclusion \eqref{inclusion}. Hence $\widetilde{\omega}\cdot (y-y')=0$. Since any core complex $y$ has coordinates $n+1,\dots,n+p$  equal to zero, $\widetilde{\omega}\cdot y =\widetilde{\omega}^{\pi}\cdot y$. This proves statement (i).

To prove (ii) we use that   $\dim(\Gamma_C^{\perp})=\dim(\Gamma_E^{\perp})$ (see previous proof).
Let $\omega\in \Gamma_C^{\perp}$ and consider $\widetilde{\omega}\in \Gamma_E^{\perp}$ as defined in Lemma~\ref{fact31}. 
Since  $\widetilde{\omega}^1,\ldots,\widetilde{\omega}^d$ form a basis of $\Gamma_E^{\perp}$, we have
$$\widetilde{\omega} = \lambda_1 \widetilde{\omega}^1+ \ldots  + \lambda_d \widetilde{\omega}^d $$
for some $\lambda_i$. Since $\omega = \widetilde{\omega}^{\pi}$, by projecting onto the first $n$ coordinates we obtain
$$\omega = \lambda_1 \widetilde{\omega}^{1\pi}+ \ldots  + \lambda_d \widetilde{\omega}^{d\pi}. $$
Therefore, $\widetilde{\omega}^{1\pi},\ldots,\widetilde{\omega}^{d\pi}$ generate $\Gamma_C^{\perp}$ and hence they form a basis.
\end{proof}

Note that the constructions of the two lemmas above give the desired correspondence between conservation laws since
for all $\omega\in \Gamma_C^{\perp}$ we have 
$\omega =  \widetilde{\omega}^{\pi}$
and for all $\widetilde{\omega}\in \Gamma_E^{\perp}$ we have 
$\widetilde{\omega} = \widetilde{ (\widetilde{\omega})^{\pi}}$.

\begin{remark}
The results in this subsection show that core and extension models have the same deficiency \cite{feinbergnotes}. The deficiency of a network is defined as the number of complexes minus the dimension of the stoichiometric space minus the number of connected components of the reaction graph. We have proved that the core and any extension model have reaction graphs with the same number of connected components, and that both the dimension of the stoichiometric space and number of complexes of an extension model increase by the number of intermediates. As a consequence, the deficiency remains invariant.
\end{remark}

\subsection{Proof of Theorem \ref{fact:elim}}

The proof of Theorem~\ref{fact:elim} relies on ideas introduced in \cite{TG-rational} and developed generally in \cite{Fel_elim}.
Let us recall its statement with the notation introduced above:

\begin{theorem}\label{fact:elim}
The system of equations $\dot{u}_i=0$ for all intermediates $Y_i$ in the system can be solved in terms of the core species and  $u_i$ is expressed at steady state as a linear sum $u_i=\sum_y \mu_{i,y} c^y$.  A monomial $c^y$ appears in the expression if and only if there is a reaction path $y\rightarrow  \ldots\rightarrow Y_i$ involving  exclusively intermediates.
\end{theorem}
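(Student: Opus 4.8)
The plan is to isolate the intermediate equations from the full ODE system and recognize them as the steady-state equations of an auxiliary linear network on the intermediates alone, then apply a King--Altman / Matrix-Tree style argument to solve them. First I would write out $\dot u_i = 0$ for $i = 1,\dots,p$ explicitly. Because intermediates never appear in a complex together with a core species (or with more than one copy of themselves), each monomial in $\dot u_i$ is either a constant times $u_j$ (coming from a reaction $Y_j \to \bullet$ in $\mmR_{E\to E}$ or $\mmR_{E\to C}$) or a constant times $c^y$ for a core complex $y$ (coming from a reaction $y\to Y_i$ in $\mmR_{C\to E}$). Hence the system $\{\dot u_i = 0\}$ is \emph{linear} in the unknowns $u_1,\dots,u_p$, with the core monomials $c^y$ playing the role of inhomogeneous source terms:
\begin{equation*}
0 = -\Big(\sum_{Y_i\to y'} k_{Y_i\to y'}\Big) u_i + \sum_{j} k_{Y_j\to Y_i} u_j + \sum_{y\in\mmC_C} k_{y\to Y_i}\, c^y,\qquad i=1,\dots,p.
\end{equation*}
Writing this as $M\,u = -b$, where $M$ is a $p\times p$ matrix depending only on the rate constants of reactions involving intermediates and $b$ is a nonnegative combination of the $c^y$, the task reduces to showing $M$ is invertible with $-M^{-1}b$ having the claimed nonnegative, path-supported form.

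The key step is to interpret $M$ as (minus) the Laplacian of a labeled directed graph and invoke the Matrix-Tree theorem, exactly as in \cite{TG-rational,Fel_elim}. I would build the graph $\wG$ whose nodes are the intermediates $Y_1,\dots,Y_p$ together with one extra ``output'' node $\ast$ collecting all exits to core complexes, and whose labeled edges are $Y_j \xrightarrow{k} Y_i$ for each reaction $Y_j\to Y_i$ and $Y_i \xrightarrow{k} \ast$ for each reaction $Y_i\to y$. The structural hypothesis from the definition of an extension model---every intermediate lies on some path $y\to\dots\to Y_i$ and every intermediate has an outgoing reaction eventually reaching a core complex (collapsing must return $\mmR_C$)---guarantees that from every $Y_i$ there is a directed path to $\ast$. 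This ``output-connectedness'' is precisely what makes the relevant minor of the Laplacian nonzero: the spanning forests rooted at $\ast$ have positive total weight, so $\det M \neq 0$ and $M$ is invertible over $\R(\Con)$. Cramer's rule then expresses each $u_i$ as a ratio of sums of products of rate constants, and, after combining with the source terms $k_{y\to Y_i}c^y$, one gets $u_i = \sum_y \mu_{i,y}\,c^y$ with each $\mu_{i,y}$ a ratio of nonnegative polynomials in the rate constants, hence $\mu_{i,y}\ge 0$.

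It remains to pin down exactly when $\mu_{i,y}\ne 0$. Here I would argue combinatorially: in the Matrix-Tree expansion, the coefficient of $c^y$ in $u_i$ is a sum over spanning forests of $\wG$ in which $Y_i$ and $\ast$ lie in the same tree and the unique edge entering that tree component from ``outside'' corresponds to the source reaction $y\to Y_{i'}$ feeding some $Y_{i'}$ on the $Y_{i'}\to\dots\to Y_i$ path. Such a forest exists with positive weight if and only if there is a directed path in $\wG$ from some $Y_{i'}$ (with $y\to Y_{i'}\in\mmR_{C\to E}$) to $Y_i$ using only intermediate nodes---i.e.\ exactly the condition that there is a reaction path $y\to\dots\to Y_i$ through intermediates. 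Conversely, if no such path exists, $Y_i$ is unreachable from $y$ through intermediates and every forest contribution vanishes. I expect the main obstacle to be the bookkeeping in this last equivalence: one must be careful that ``sum of positive terms'' cannot accidentally cancel (it cannot, since all labels are positive reals and all forest weights are products of labels, so the sum is a genuinely positive element of $\R(\Con)$), and that the path condition on $\wG$ translates faithfully back to a reaction path in the extension model---which it does because edges of $\wG$ between intermediate nodes are in bijection with reactions in $\mmR_{E\to E}$. The positivity/no-cancellation point is what lets us pass from ``$\mu_{i,y}$ is a ratio of polynomials that could vanish'' to the clean if-and-only-if statement.
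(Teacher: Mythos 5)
Your proposal is correct and follows essentially the same route as the paper's proof: write $\dot u_i=0$ as a linear system in the $u_i$ with the core monomials $c^y$ as source terms, encode the coefficient matrix as (minus) a Laplacian of a directed graph on $\mmY\cup\{*\}$, invoke the Matrix-Tree theorem and Cramer's rule to get invertibility and positivity, and characterize $\mu_{i,y}\neq 0$ by the existence of a spanning tree (equivalently, a directed path from $y$ through intermediates to $Y_i$). The only cosmetic difference is that the paper phrases the coefficient extraction via a per-complex relabeled graph $\widehat{G}_{\mmY}^{y}$ and spanning trees rooted at $Y_i$, where you speak of spanning forests, but these are equivalent formulations of the same argument.
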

\begin{proof}
Let us consider the steady-state equations $\dot{u}_i=0$ for $i=1,\dots,p$ corresponding to the intermediates. These equations  take the form
\begin{equation}\label{eq:Y}
0 = \sum_{y\rightarrow Y_i\in \mmR_{C\rightarrow E}}  k_{y\rightarrow Y_i} c^y+  \sum_{Y_j\rightarrow Y_i\in \mmR_{E\rightarrow 
E}}  k_{Y_j\rightarrow Y_i} u_j   - \left(\sum_{Y_i\rightarrow y\in \mmR_{E\rightarrow C}}  k_{Y_i\rightarrow y}+\sum_{Y_i
\rightarrow Y_j\in \mmR_{E\rightarrow E}}  k_{Y_i\rightarrow Y_j} \right) u_i
\end{equation}
(here, $i$ is fixed and summation is over $Y_j$ and $y$).
It follows that equations \eqref{eq:Y} for $i=1,\dots,p$ form a system of linear equations in the variables $u_1,\dots,u_p$ and  coefficients in $\R[\Con \cup \{c_1,\dots,c_n\}]$. That is, equations \eqref{eq:Y} for $i=1,\dots,p$  form the linear system
\begin{equation}\label{auz}
A u+ z =0
\end{equation}
with  $u=(u_1,\dots,u_p)$,
and $A=\{a_{i,j}\}$, such that  for $i\neq j$ we have
$$ a_{i,j}= \begin{cases} k_{Y_j\rightarrow Y_i} & \textrm{if } Y_j\rightarrow Y_i\in \mmR_{E\rightarrow E} \\ 0  & 
\textrm{otherwise,}\end{cases} $$
and for $i=j$ we have
$$a_{i,i} = -e_i - d_i,\quad \textrm{with }\qquad e_i=\sum_{Y_i\rightarrow Y_k\in \mmR_{E\rightarrow E}}  k_{Y_i\rightarrow Y_k} ,
\quad 
d_i= \sum_{Y_i\rightarrow y\in \mmR_{E\rightarrow C}}  k_{Y_i\rightarrow y}.$$
We define $z=(z_1,\dots,z_p)$ to be the independent term:
$$z_i =  \sum_{ y\rightarrow Y_i\in \mmR_{C\rightarrow E}}  k_{y\rightarrow Y_i} c^y.$$

All coefficients but $a_{i,i}$ are positive. Further, $a_{i,j}\in \R[\Con]$ while $z_i\in \R[\Con \cup \{c_1,\dots,c_n\}]$.
The column sums of $A$ are not all zero. Indeed, the sum of the entries in column $i$ is
$\sum_{j=1}^p  a_{j,i}  = \sum_{j\!\colon\!\! j\neq i} a_{j,i} - e_i - d_i. $
Note that for $i$ fixed,
$$\sum_{j\!\colon\!\! j\neq i} a_{j,i} = \sum_{j\!\colon\!\! j\neq i} k_{Y_i\rightarrow Y_j} = e_i. $$
Therefore, we have that 
\begin{equation}\label{di}
\sum_{j=1}^p  a_{j,i}  = -d_i. 
\end{equation} Since by assumption $\mmR_{E\rightarrow C}$ is not empty, $d_i\neq 0$ for some $i$ and thus the column sums of $A$ are not all zero.

Consider  the labeled directed graph $\widehat{G}_{\mmY}$ with node set  $\mmY \cup \{*\}$.  We order the nodes such that  
$Y_i$ is  the $i$-th node and  $*$ the $(p+1)$-th  node.  The graph $\widehat{G}_{\mmY}$ has the following labeled directed 
edges:
\begin{itemize} \label{widehatG}
\item $Y_j\xrightarrow{a_{i,j}} Y_i$ if $a_{i,j}\neq 0$ and $i\neq j$, 
\item $Y_i\xrightarrow{d_i}  *$ if $d_i\neq 0$, and 
\item   $*\xrightarrow{z_i} Y_i$ if $z_i\neq 0$. 
\end{itemize}
All labels are in $\R[\Con \cup \{c_1,\dots,c_n\}]$ and are either zero or polynomials in $\Con \cup \{c_1,\dots,c_n\}$ with positive  coefficients. By definition of intermediates, the graph  $\widehat{G}_{\mmY}$  is strongly connected. Indeed, for every  intermediate  $Y_i\in \mmY$ there is a reaction path $Y_i\rightarrow Y_{j_1}\rightarrow \dots \rightarrow Y_{j_l}\rightarrow y'$ with $y'\notin \mmY$ and a reaction path $y\rightarrow Y_{j_1}\rightarrow \dots \rightarrow Y_{j_l}\rightarrow Y_i$ for some  $y\notin \mmY$. Therefore, there is a directed path in both directions between each intermediate and $*$ in $\widehat{G}_{\mmY}$, hence also between any two intermediates.

Let $\mathcal{L}=\{\lambda_{i,j}\}$ be minus the Laplacian matrix of $\widehat{G}_{\mmY}$.  If $i,j\leq p$, then $\lambda_{i,j}=a_{i,j}$. The entries of the last row of  $\mathcal{L}$ are $\lambda_{p+1,i} = d_i$ for $i\leq p$ and the entries of the last  column are $\lambda_{i,p+1}= z_i$ for $i\leq p$. By the Matrix-Tree theorem \cite{Tutte-matrixtree} we conclude that
$$(-1)^{p+i+j} \mathcal{L}_{(i,j)} =   \sum_{\tau \in \Theta(Y_j)}  \pi(\tau),  $$
in particular, since the $(p+1,p+1)$ principal minor of $\mathcal{L}$ is exactly $A$, we have
\begin{equation}\label{Spos}
\sigma:=  (-1)^p \det(A) = (-1)^p \mathcal{L}_{(p+1,p+1)} =   \sum_{\tau \in \Theta(*)}  \pi(\tau). 
\end{equation}
Since no spanning tree rooted at $*$ can involve a label $z_i$, $\sigma$ is in fact a polynomial in $\R[\Con]$. 
Since $\widehat{G}_{\mmY}$ is strongly connected, then there exists at least one spanning tree rooted at $*$, and hence $(-1)^{p}\det(A)$ is non-zero in $\R[\Con\cup \{c_1,\dots,c_n\}]$. It follows that  the system $Au+z=0$ has a unique solution in $\R(\Con \cup \{c_1,\dots,c_n\})$. 

 For $i=1,\dots,p$, we let $\sigma_i$ be the  following polynomial in $c_1,\dots,c_n$,
$$\sigma_i=(-1)^{i+1} \mathcal{L}_{(p+1,i)} =  \sum_{\tau \in \Theta(Y_i)}  \pi(\tau),$$ 
which is either zero or  has positive coefficients in $\R[\Con \cup \{c_1,\dots,c_n\}]$. 
 By Cramer's rule, we have
$$ u_i = \varphi_i(c_1,\dots,c_n)=\frac{(-1)^{1+i} \mathcal{L}_{(p+1,i)}  }{(-1)^p \mathcal{L}_{(p+1,p+1)}} = \frac{\sigma_i}
{\sigma}, 
\qquad i=1,\dots,p.$$
Since $\widehat{G}_{\mmY}$ is strongly connected,  there exists at least one spanning tree rooted at $Y_i$, and $\sigma_i\neq 0$ as a polynomial in $\R[\Con \cup \{c_1,\dots,c_n\}]$. 
  
Since $\sigma$ is a polynomial in $\R[\Con]$, then $u_i=\sigma_i/\sigma$ can be seen as a polynomial in   $\R[c_1,\dots,c_n]$ with 
coefficients in $\R(\Con)$. Further, each term $\sigma_i$ can be written as:
$$\sigma_i =\sum_{k=1}^p \alpha_{k,i} z_k  = \sum_{k=1}^p \alpha_{k,i}  \sum_{y\rightarrow Y_k\in \mmR_{C\rightarrow E}}  k_{y\rightarrow Y_k} c^y,$$
with $\alpha_{k,i}\in \R[\Con]$. Specifically, $\alpha_{k,i}$ is a sum of terms obtained from the spanning trees rooted at $Y_i$  containing the edge $*\rightarrow Y_k$. Each spanning tree gives a term, namely the products of its labels, except the label $z_k$ for the edge  $*\rightarrow Y_k$. 
If we define 
$$\mu_{i,y} = \sum_{k=1}^p \frac{\alpha_{k,i}  k_{y\rightarrow Y_k}}{\sigma}$$
(with $ k_{y\rightarrow Y_k}=0$ if the reaction $y\rightarrow Y_k$ does not exist)
then
\begin{equation}\label{eq:Yelim}
u_i = \sum_{y\in \mmC_C} \mu_{i,y} c^{y}.
\end{equation}
This proves the first part of the statement.

To prove the second part, we show that the coefficient $\mu_{i,y}$ can be obtained from a graphical procedure. For a fixed core complex $y$, let  $\widehat{G}_{\mmY}^y$ be the labeled directed graph with node set  $\mmY \cup \{*\}$ and nodes ordered as 
above. The graph $\widehat{G}_{\mmY}^y$ has the following labeled directed edges:
\begin{itemize}
\item $Y_j\xrightarrow{a_{i,j}} Y_i$ if $a_{i,j}\neq 0$ and $i\neq j$, 
\item $Y_i\xrightarrow{d_i}  *$ if $d_i\neq 0$, and 
\item   $*\xrightarrow{ k_{y\rightarrow Y_i}} Y_i$  if $ k_{y\rightarrow Y_i}\neq 0$. 
\end{itemize}
That is, $\widehat{G}_{\mmY}^y$ and $\widehat{G}_{\mmY}$ have the same edges and differ only in the label of the edges  $*\rightarrow Y_i$, $i=1,\ldots,p$. 
Then 
\begin{equation}\label{muiy}
\mu_{i,y}=\frac{\sigma_{i,y}}{\sigma_y}:=\frac{ \sum_{\tau \in \Theta^y(Y_i)}  \pi(\tau)}{\sum_{\tau \in \Theta^y(*)}  \pi(\tau)}
\end{equation}
 where $\Theta^y(\cdot)$ refers to 
the spanning trees of $\widehat{G}_{\mmY}^y$ rooted at the argument.
We have that $\mu_{i,y}\neq 0$ if and only if there is a spanning tree rooted at $Y_i$ in  $\widehat{G}_{\mmY}^y$. Equivalently, if and only if there exists a reaction path from $y$ (that is, $*$) to $Y_i$.
\end{proof}

\subsection{Proof of Theorem \ref{fact:subst}}
Let us recall the statement of Theorem \ref{fact:subst}.

\begin{theorem}\label{fact:subst} After substituting the expressions $u_i=\sum_y \mu_{i,y}c^y$ into the ODEs for $\dot{c}_i$ of the extension model, a mass-action system for the core model is obtained with rate constants that are  derived  from the reaction paths connecting the complexes in the extension model.
\end{theorem}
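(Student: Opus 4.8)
The plan is to work directly with the extension-model ODEs restricted to the core species and to reorganise them after the elimination of Theorem~\ref{fact:elim}. First I would write out $\dot{c}_i$ for $i=1,\dots,n$ using \eqref{ode} and split the sum over $\mmR_E$ into the four blocks $\mmR_{C\cap E},\mmR_{C\rightarrow E},\mmR_{E\rightarrow C},\mmR_{E\rightarrow E}$. Because every intermediate has stoichiometric coefficient $0$ on each core species, the block $\mmR_{E\rightarrow E}$ contributes nothing to $\dot{c}_i$, a reaction $y\rightarrow Y\in\mmR_{C\rightarrow E}$ contributes only $-k_{y\rightarrow Y}c^y y_i$, and a reaction $Y\rightarrow y\in\mmR_{E\rightarrow C}$ contributes only $k_{Y\rightarrow y}u_Y y_i$. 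Substituting $u_Y=\sum_{y\in\mmC_C}\mu_{Y,y}c^y$ from \eqref{eq:Yelim} turns this into a closed polynomial system in $c_1,\dots,c_n$:
\[
\dot{c}_i=\sum_{y\rightarrow y'\in\mmR_{C\cap E}}k_{y\rightarrow y'}c^y(y_i'-y_i)\;-\;\sum_{y\rightarrow Y\in\mmR_{C\rightarrow E}}k_{y\rightarrow Y}\,c^y y_i\;+\;\sum_{Y\rightarrow y\in\mmR_{E\rightarrow C}}\;\sum_{y''\in\mmC_C}k_{Y\rightarrow y}\mu_{Y,y''}\,c^{y''}y_i .
\]

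Next I would guess the matching core rate constants, namely $\widehat{k}_{y''\rightarrow y}:=k_{y''\rightarrow y}$ when $y''\rightarrow y\in\mmR_{C\cap E}$ (and $0$ otherwise) plus $\sum_{Y\rightarrow y\in\mmR_{E\rightarrow C}}k_{Y\rightarrow y}\,\mu_{Y,y''}$, and verify that $\dot{c}_i=\sum_{y''\rightarrow y\in\mmR_C}\widehat{k}_{y''\rightarrow y}\,c^{y''}(y_i-y''_i)$, which is precisely \eqref{ode} for the core model. Since the monomials $c^{y''}$ are linearly independent over $\R(\Con)$ (distinct core complexes are distinct exponent vectors), this is an identity of coefficients. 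The $\mmR_{C\cap E}$-part of $\widehat{k}$ reproduces the first sum above verbatim after relabelling; the production part ($+y_i$) of the $\mmR_{E\rightarrow C}$-part of $\widehat{k}$ reproduces the third sum, after pulling $\sum_{y''}\mu_{Y,y''}c^{y''}$ back together as $u_Y$ and using that $\mu_{Y,y''}\neq 0$ together with $Y\rightarrow y\in\mmR_{E\rightarrow C}$ forces a path $y''\rightarrow\cdots\rightarrow Y\rightarrow y$, hence $y''\rightarrow y\in\mmR_C$, by the second part of Theorem~\ref{fact:elim}. The only remaining content is that the consumption part ($-y''_i$) of the $\mmR_{E\rightarrow C}$-part of $\widehat{k}$, namely $-\sum_{y''\in\mmC_C}\bigl(\sum_Y d_Y\mu_{Y,y''}\bigr)c^{y''}y''_i$ with $d_Y=\sum_{Y\rightarrow y'\in\mmR_{E\rightarrow C}}k_{Y\rightarrow y'}$, must equal the middle sum $-\sum_{y\in\mmC_C}\bigl(\sum_{Y\colon y\rightarrow Y\in\mmR_{C\rightarrow E}}k_{y\rightarrow Y}\bigr)c^{y}y_i$. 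So everything comes down to the scalar identity, for each core complex $y$,
\[
\sum_{Y\colon y\rightarrow Y\in\mmR_{C\rightarrow E}}k_{y\rightarrow Y}\;=\;\sum_{Y}d_Y\,\mu_{Y,y}.
\]

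The key step — and the one I expect to be the main obstacle — is proving this identity; it is exactly the global mass balance that reconciles the source-consumption terms. I would derive it from the linear system $Au+z=0$ of \eqref{auz} by summing all $p$ of its equations: the column sums of $A$ equal $-d_Y$ by \eqref{di}, so $\sum_Y d_Y u_Y=\sum_Y z_Y$; substituting $u_Y=\sum_y\mu_{Y,y}c^y$ and the explicit $z_Y=\sum_{y\rightarrow Y\in\mmR_{C\rightarrow E}}k_{y\rightarrow Y}c^y$ and comparing coefficients of the independent monomials $c^y$ yields the identity at once. Finally I would check that $\widehat{k}$ is supported exactly on $\mmR_C$: it is a sum of nonnegative terms (all rate constants positive, all $\mu_{Y,y''}\geq 0$), so $\widehat{k}_{y''\rightarrow y}>0$ iff $y''\rightarrow y\in\mmR_{C\cap E}$, or $\mu_{Y,y''}\neq 0$ for some $Y$ with $Y\rightarrow y\in\mmR_{E\rightarrow C}$, which by Theorem~\ref{fact:elim} is equivalent to the existence of a reaction path $y''\rightarrow\cdots\rightarrow Y\rightarrow y$ through intermediates, i.e.\ to $y''\rightarrow y\in\mmR_C$. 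Hence the substituted system is genuinely a mass-action system \eqref{ode} for the core model, with rate constants that are explicit functions of the reaction paths of the extension model through the spanning-tree expression \eqref{muiy} for the $\mu_{Y,y}$. The remaining work — the block decomposition of $\mmR_E$, the coefficient comparison, and the support computation — is routine bookkeeping once the balance identity is in hand.
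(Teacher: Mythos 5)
Your proof is correct, and it follows the same skeleton as the paper's: the same four-block decomposition of $\mmR_E$, the same candidate core rate constants $t_{y\to y'}=k_{y\to y'}+\sum_j k_{Y_j\to y'}\mu_{j,y}$ (the paper's \eqref{ty}), and the same reduction of the whole verification to the single balance identity $\sum_{Y\colon y\to Y}k_{y\to Y}=\sum_Y d_Y\,\mu_{Y,y}$ for each core complex $y$ (the paper's \eqref{newtrees}, up to clearing the denominator $\sigma_y$). Where you genuinely diverge is in how that identity is proved. The paper establishes it combinatorially: it rewrites both sides as sums over two families $G_1,G_2$ of subgraphs of $\widehat{G}_{\mmY}^{y}$ (spanning trees rooted at $*$ together with an outgoing edge of $*$, versus spanning trees rooted at some $Y_j$ together with the edge $Y_j\to *$) and exhibits an explicit bijection $G_1=G_2$ by removing and re-inserting the unique edge into $*$ lying on the unique cycle. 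You instead sum the $p$ rows of the linear system $Au+z=0$ of \eqref{auz}, invoke the column-sum formula \eqref{di} to get $\sum_i d_i u_i=\sum_j z_j$, substitute the solution \eqref{eq:Yelim}, and compare coefficients of the linearly independent monomials $c^y$. This is shorter, purely linear-algebraic, and reuses a computation the paper has already carried out in proving Theorem~\ref{fact:elim}; it also has the transparent interpretation that summing $\dot{u}_i=0$ over all intermediates equates the total flux into and out of the intermediate pool. What the paper's bijection buys in exchange is a term-by-term combinatorial matching of the spanning-tree expansions, consistent with its Matrix-Tree treatment of the $\mu_{i,y}$ in \eqref{muiy}; for the statement at hand your route is sufficient and arguably cleaner.
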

\begin{proof}
The system of equations that describes the mass-action kinetics of the core model for some constants $t_{y\rightarrow y'}$ is:
\begin{align}\label{eq:newsys}
\dot{c}_i &=  \sum_{y \rightarrow y'\in \mmR_C} t_{y\rightarrow y'} c^{y} (y_{i}' -   y_{i}).
\end{align}
The ODE corresponding to $\dot{c}_i$, $i=1,\dots,n$, of the extension model taken with mass-action kinetics is
$$\dot{c}_i = \sum_{y\rightarrow y'\in \mmR_{C\cap E}} k_{y\rightarrow y'} c^y (y'_i-y_i)  +   \sum_{j=1}^p \sum_{Y_j\rightarrow y'\in \mmR_{E\rightarrow C}}  k_{Y_j\rightarrow y'} u_j y'_i - \sum_{j=1}^p  \sum_{y\rightarrow Y_j\in \mmR_{C\rightarrow E}} k_{y\rightarrow Y_j} c^y y_i  .$$
Using \eqref{eq:Yelim}, we obtain
\begin{align}\label{eq:newsys2}
\dot{c}_i   & = \sum_{y\rightarrow y'\in \mmR_{C\cap E}} k_{y\rightarrow y'} c^y (y'_i-y_i)  +   \sum_{j=1}^p   \sum_{Y_j\rightarrow y'\in \mmR_{E\rightarrow C}} k_{Y_j\rightarrow y'}  \sum_{y\in \mmC_C} \mu_{j,y} c^{y} y'_i - \sum_{j=1}^p  \sum_{y\rightarrow Y_j\in \mmR_{C\rightarrow E}}  k_{y\rightarrow Y_j} c^y y_i.
\end{align}
We want to see that this expression can be written in the form of \eqref{eq:newsys} for some choice of constants $t_{y\rightarrow y'}$  expressed in terms of $k_*$. 
Let
$$\widetilde{k}_{y\rightarrow y'} =\sum_{j=1}^p  k_{Y_j\rightarrow y'}\mu_{j,y},\qquad A_i =  \sum_{j=1}^p    \sum_{y,y'\in \mmC_c}   \widetilde{k}_{y\rightarrow y'}  c^{y} y_i,\qquad 
B_i=  \sum_{j=1}^p  \sum_{y\rightarrow Y_j\in \mmR_{C\rightarrow E}} k_{y\rightarrow Y_j} c^y y_i. 
$$
where $\widetilde{k}_{y\rightarrow y'}$ might be zero  if $k_{Y_j\rightarrow y'}=0$ or $\mu_{j,y}=0$. 
Then \eqref{eq:newsys2} can be written as:
\begin{align*}
\dot{c}_i   & = \sum_{y\rightarrow y'\in \mmR_{C\cap E}} k_{y\rightarrow y'} c^y (y'_i-y_i) +  \sum_{y,y'\in \mmC_c} \widetilde{k}_{y\rightarrow y'}c^{y} (y'_i-y_i)+A_i-B_i.
\end{align*}
 Assume that  for all fixed $i$ we have $A_i=B_i$ (proven below). Then \eqref{eq:newsys2} reduces to
\begin{align}\label{eq:newsys3}
\dot{c}_i   & = \sum_{y\rightarrow y'\in \mmR_{C\cap E}} k_{y\rightarrow y'} c^y (y'_i-y_i)  +   \sum_{y,y'\in \mmC_c} \widetilde{k}_{y\rightarrow y'}c^{y} (y'_i-y_i).
\end{align}
Let us see that $\widetilde{k}_{y\rightarrow y'} \neq 0$ if and only if there is a reaction path from $y$ to $y'$ involving exclusively intermediates. If $\mu_{j,y}\neq 0$ then there is a spanning tree in $\widehat{G}_{\mmY}^{y}$ rooted at $Y_j$. In particular, there is
a reaction path from $y$ to $Y_j$ involving intermediates. If further $k_{Y_j\rightarrow y'}\neq 0$ then there is a reaction $Y_j\rightarrow y'$ which all together give a reaction path $y$ to $y'$. By hypothesis, the reaction $y\rightarrow y'$ is in the core model.

Reciprocally any reaction $y\rightarrow y'$ in the core model appears in at least one reaction path $y\rightarrow Y_{i_1}\rightarrow \dots \rightarrow Y_{i_k} \rightarrow y'$, potentially without intermediates. If the reaction itself is not in the extended model, then  $k_{Y_{i_k}\rightarrow y'}\neq 0$ and there is a directed path from $*$ to $Y_k$ in the graph $\widehat{G}_{\mmY}^{y}$. Since $\widehat{G}^y_{\mmY}$ is strongly connected by hypothesis, any such path can be extended to a spanning tree of $\widehat{G}_{\mmY}^{y}$ rooted at $Y_k$. It follows that for all reactions $y\rightarrow y'\in \mmR_C\setminus \mmR_E$ there exists an index $k$ for which $\mu_{k,y}k_{Y_k\rightarrow y'}\neq 0$.

Consequently, \eqref{eq:newsys3} can be written as
\begin{align*}
\dot{c}_i   & = \sum_{y\rightarrow y'\in \mmR_{C}} (k_{y\rightarrow y'} + \widetilde{k}_{y\rightarrow y'})  c^y (y'_i-y_i)
\end{align*}
(with $k_{y\rightarrow y'}=0$ if the reaction $y\rightarrow y'$ is not in the extended model).
Therefore, by defining
\begin{equation}
\label{ty}
t_{y\rightarrow y'} := k_{y\rightarrow y'}  +\widetilde{k}_{y\rightarrow y'}=k_{y\rightarrow y'}  + \sum_{j=1}^p  k_{Y_j\rightarrow y'}\mu_{j,y}
\end{equation}
a mass-action system of the core model is obtained.

It remains to show that for  fixed $i$ we have $A_i=B_i$.  It is sufficient to show that for  fixed $y\in \mmC_C$ with $y_i\neq 0$, we have 
 $$ \sum_{j=1}^p   \sum_{Y_j\rightarrow y'\in \mmR_{E\rightarrow C}} k_{Y_j\rightarrow y'}  \mu_{j,y} 
  =   \sum_{j=1}^p   k_{y\rightarrow Y_j}  $$ 
 where in the right-hand side we allow $k_{y\rightarrow Y_j}=0$ if the reaction does not exist.
 Consider the graph  $\widehat{G}_{\mmY}^y$ defined above. Recall that 
 $d_j= \sum_{Y_j\rightarrow y\in \mmR_{E\rightarrow C}}  k_{Y_j\rightarrow y}$ and $\mu_{j,y} =\frac{\sigma_{i,y}}{\sigma_y}$.
Therefore, we have to show that  for a fixed $y\in \mmC_C$ with $y_i\neq 0$
we have
 \begin{equation}\label{newtrees}
 \sum_{j=1}^p d_j \sigma_{j,y}   =  \sum_{j=1}^p  k_{y\rightarrow Y_j}\sigma_y.  
 \end{equation}

Consider the set $G_1$ of all possible subgraphs of $\widehat{G}_{\mmY}^{y}$ which are the union of a spanning tree rooted at 
$*$ and an edge from $*$ to some $Y_j \in \mmY$, and the set $G_2$ of all possible subgraphs of $\widehat{G}_{\mmY}^{y}$ 
which are the union of a spanning tree rooted at some $Y_j \in \mmY$ and an edge from $Y_j$ to $*$.  Observe that we can 
rewrite \eqref{newtrees} as
\begin{align*}
\sum_{\tau \in G_1} \pi(\tau) = \sum_{\tau \in G_2} \pi(\tau), 
\end{align*}
and so showing that \eqref{newtrees} holds reduces to showing that $G_{1} = G_{2}$.

Let $\tau \in G_{1}$.   There is a single cycle in $\tau$, containing at least the nodes $\ast$ and some node $Y_{k}$ to which the 
unique outward edge from $\ast$ points.  Along this cycle there is a unique inward edge to $\ast$, with label $d_m 
\neq 0$ for some $m$.  Note that there is a directed path from every node in $\tau$ to $\ast$.  The directed path from a node $w
$ to $\ast$ either passes through the node $Y_{m}$, or it does not.  In the former case, the directed path from $w$ to $Y_{m}$ is 
preserved if we remove the edge from $Y_{m}$ to $\ast$.  In the latter case, the path from $w$ to $\ast$ is unaffected if we 
remove the edge from $Y_{m}$ to $\ast$, and we can extend this path to $Y_{k}$ (via the edge from $\ast$ to $Y_{k}$), and (if 
$Y_{k} \neq Y_{m}$) hence to $Y_{m}$ (via edges which comprise part of the cycle in $\tau$).  We also know that the edge from 
$Y_{m}$ to $\ast$ is part of the unique cycle which $\tau$ contains.  Thus removing this edge yields a spanning tree of the same 
node set, but rooted at $Y_{m}$.  Since we know that $d_m \neq 0$, we can add this edge back in to see that $\tau 
\in G_{2}$.  This shows $G_{1} \subseteq G_{2}$.

The proof that $G_{2} \subseteq G_{1}$ is analogous, with the roles of $\ast$ and $Y_{k}$ reversed.  
\end{proof}

\subsection{Proof of Theorem \ref{fact:multi}}

We use the notation introduced in the previous sections. 
Consider a  core model with species set $\mmS_C$ and set of reactions $\mmR_C$. Consider an extension model with  species set $\mmS_E = \mmS_C\cup \mmY$ with $\mmY$  the set of intermediates, and  reaction set   $\mmR_E$.
Rate constants $t_{y\rightarrow y'}$ of the core model are \emph{realizable} in the extension model if there exist rate constants $k_{y\rightarrow y'}$ in the extension model such that
\begin{equation}\label{realization}
t_{y\rightarrow y'} =k_{y\rightarrow y'}  + \sum_{j=1}^p  k_{Y_j\rightarrow y'}\mu_{j,y},\end{equation}
which is the relationship established between parameters in the core and extension model in equation \eqref{ty}.

A steady state is said to be non-degenerate if the Jacobian of the ODE system at the steady state is non-singular over the stoichiometric space.

Let us recall  Theorem \ref{fact:multi} and Corollary \ref{fact:cor}:

\begin{theorem}\label{fact:multi}
If the core model has $N$ non-degenerate positive steady states for some rate constants and conserved amounts, then any extension model that realizes the rate constants has at least $N$ corresponding non-degenerate positive steady states for some rate constants and conserved amounts. Oppositely, if the extension model has at most one positive steady state for any rate constants and conserved amounts then the core model has at most one positive steady state for any matching rate constants and conserved amounts.

The rate constants and conserved amounts can be chosen such that the correspondence preserves unstable steady states with at least one eigenvalue with non-zero real part and asymptotical stability for hyperbolic steady states.
\end{theorem}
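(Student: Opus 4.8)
The plan is to push everything through the intermediate elimination of Theorems~\ref{fact:elim} and \ref{fact:subst}. After substituting $u_i=\sum_y\mu_{i,y}c^y$, a point is a steady state of the extension model if and only if $c$ solves the steady-state equations of the core model for the realized rate constants $t^0$ (these are $n-d$ independent equations, $\epsilon$-independent) together with the \emph{modified} conservation laws $\omega^{(r)}\cdot c+\sum_j a^{(r)}_j\sum_y\mu_{j,y}c^y=\widetilde T_r$, and the intermediate coordinates are then given by $u_j=\sum_y\mu_{j,y}c^y>0$. Thus the only discrepancy between the core and the extension lies in the nonlinear correction $\sum_j a^{(r)}_j\sum_y\mu_{j,y}c^y$ in the conservation laws, and the strategy — mirroring the role of $a_3$ in the worked example — is to scale the rate constants of the extension model so that this correction becomes uniformly small while $t^0$ stays fixed.

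First I would prove a scaling lemma. Starting from any extension rate constants realizing $t^0$, keep every rate $k_{y\to Y_i}$ and $k_{y\to y'}$ fixed and multiply every rate $k_{Y_i\to y}$ and $k_{Y_i\to Y_j}$ by $1/\epsilon$. Using the spanning-tree formula $\mu_{i,y}=\sigma_{i,y}/\sigma$ from the proof of Theorem~\ref{fact:elim}, one checks that $\sigma$ scales like $\epsilon^{-p}$ and each $\sigma_{i,y}$ like $\epsilon^{-(p-1)}$, so $\mu_{i,y}(\epsilon)=\epsilon\,\mu_{i,y}(1)\to 0$; crucially the products $k_{Y_j\to y'}(\epsilon)\mu_{j,y}(\epsilon)$ are $\epsilon$-independent, so \eqref{realization} holds with the \emph{same} $t^0$ along the whole family and, by \eqref{ty}, the eliminated core dynamics (hence the equations $F(c)=0$ above) do not change. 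Simultaneously the linear system \eqref{auz} has matrix $A(\epsilon)=A^0/\epsilon$, and $A^0$ is Hurwitz: it is a nonsingular matrix ($\det A^0\neq0$ was shown in the proof of Theorem~\ref{fact:elim}, via $\sigma>0$) with nonnegative off-diagonal entries and nonpositive column sums $-d_i$, i.e.\ a nonsingular compartmental matrix, so Gershgorin applied to its transpose places its spectrum in $\{\mathrm{Re}<0\}$. Consequently the eigenvalues of $A(\epsilon)$ have real parts tending to $-\infty$.

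For the first statement, let $c^{*(1)},\dots,c^{*(N)}$ be the non-degenerate positive steady states of the core model for parameters $(t^0,T^0)$, and set $H(c,\epsilon)=\bigl(F(c),\,\omega^{(r)}\cdot c+\sum_j a^{(r)}_j\sum_y\mu_{j,y}(\epsilon)c^y-T^0_r\bigr)_r$. Then $H(c^{*(\ell)},0)=0$, and $D_cH(c^{*(\ell)},0)$ is invertible: on $\Gamma_C$ it restricts to the core Jacobian, which is nonsingular by non-degeneracy, while the conservation-law rows span $\Gamma_C^{\perp}$. The implicit function theorem gives smooth branches $c^{*(\ell)}(\epsilon)$, which for small $\epsilon$ are positive and pairwise distinct; since the $d$ conservation equations of $H$ are exactly the extension conserved amounts equated to $T^0$, the lifted states $\bigl(c^{*(\ell)}(\epsilon),u^{*(\ell)}(\epsilon)\bigr)$ all lie in one stoichiometric class of the extension model, giving the ``at least $N$'' claim. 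For stability and for non-degeneracy of the lifts I would write the extension Jacobian at the lift as $J(\epsilon)=\left(\begin{smallmatrix}P(\epsilon)&B^0/\epsilon\\ Q^0&A^0/\epsilon\end{smallmatrix}\right)$ with $P(\epsilon)\to P^0$, and take the Schur complement in the bottom-right block: for $\lambda$ in a fixed bounded region, $\det(J(\epsilon)-\lambda I)=\det(A^0/\epsilon-\lambda I)\cdot\det\bigl(P(\epsilon)-\lambda I-B^0(A^0-\epsilon\lambda I)^{-1}Q^0\bigr)$, and the second factor converges to $\det\bigl(J_{\mathrm{core}}(c^{*(\ell)};t^0)-\lambda I\bigr)$ because the reduced vector field equals the core model (Theorem~\ref{fact:subst}). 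Hence, on $\Gamma_E$, $n-d$ eigenvalues of $J(\epsilon)$ approach the dynamical eigenvalues of the core steady state while the remaining $p$ have real parts going to $-\infty$; therefore for small $\epsilon$ the lift is non-degenerate, asymptotic stability of a hyperbolic core steady state is inherited, and instability caused by an eigenvalue with positive real part is inherited.

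The converse is the contrapositive of the first statement with $N=2$: if the core model has two distinct positive steady states for some realizable $(t^0,T^0)$, a small perturbation of the rate constants and conserved amounts (unfolding any degenerate/fold solution) produces two non-degenerate positive steady states, which then lift to two positive steady states of the extension model in a common stoichiometric class, contradicting the hypothesis that the extension model never has more than one. I expect the main obstacle to be the scaling lemma together with its interaction with the Jacobian — specifically, verifying that realizability of $t^0$ is genuinely preserved along the family and that $A^0$ is Hurwitz in full generality — and, secondarily, making the singular-perturbation eigenvalue bookkeeping rigorous (uniform control of the Schur complement and the passage between the spectrum of $J(\epsilon)$ and its spectrum on $\Gamma_E$); the degenerate-to-non-degenerate perturbation in the converse also requires some care.
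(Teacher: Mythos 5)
Your proposal follows essentially the same route as the paper's proof (Propositions~\ref{Nnon} and~\ref{stability} together with Lemma~\ref{Amatrix}): the identical $1/\epsilon$ scaling of the rates out of intermediates, which keeps the realized constants $t^0$ fixed while sending $\mu_{i,y}\to 0$, followed by the implicit function theorem at $\epsilon=0$ and a singular-perturbation split of the spectrum into eigenvalues converging to those of the core Jacobian and $p$ eigenvalues with real part tending to $-\infty$. The only deviations are minor: you prove that $A$ is Hurwitz via Gershgorin's theorem applied to the transpose of $A$ (a shorter argument than the paper's Metzler-matrix/Perron--Frobenius proof of Lemma~\ref{Amatrix}), and you organize the eigenvalue bookkeeping through a Schur complement where the paper expands the characteristic polynomial in the scaling parameter.
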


\begin{corollary} \label{fact:cor}
 If the canonical model of a steady-state class has a maximum of N steady states for any rate constants and conserved amounts, then all extension models in the class, or in any smaller class, have at most N steady states.
\end{corollary}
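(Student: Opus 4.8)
\textit{Strategy.} Since Corollary~\ref{fact:cor} is a consequence of Theorem~\ref{fact:multi}, the plan is first to prove Theorem~\ref{fact:multi} by perturbing the extension model toward the core model. I will exhibit a one-parameter family $k^{(\varepsilon)}$ ($\varepsilon>0$) of rate constants of the extension model that keeps the realized core rate constants equal to a fixed $t^{\ast}$ while driving all intermediate steady-state concentrations to $0$ as $\varepsilon\to0$, and then lift the non-degenerate core steady states by the implicit function theorem. The correct scaling accelerates the reactions \emph{out of} the intermediates: starting from any realizing assignment $k^{(0)}$ of $t^{\ast}$, multiply every rate constant of a reaction in $\mmR_{E\to C}\cup\mmR_{E\to E}$ by $1/\varepsilon$ and leave those in $\mmR_{C\cap E}\cup\mmR_{C\to E}$ unchanged. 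In the notation of the proof of Theorem~\ref{fact:elim} this multiplies $A$ and $\sigma$ by $\varepsilon^{-p}$ and each $\sigma_{i,y}$ by $\varepsilon^{-(p-1)}$ (a spanning tree rooted at $Y_i$ in $\widehat{G}_{\mmY}^{y}$ has one unscaled edge out of $\ast$ and $p-1$ scaled edges), so $\mu^{(\varepsilon)}_{i,y}=\varepsilon\,\bar\mu_{i,y}$ with $\bar\mu$ fixed; hence $u_i=\varepsilon\,\varphi_i(c)$ at steady state with $\varphi_i$ fixed, while in \eqref{ty} the products $k_{Y_j\to y'}\mu_{j,y}$ are $\varepsilon$-independent, so $k^{(\varepsilon)}$ realizes the same $t^{\ast}$ for every $\varepsilon>0$.

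\textit{Lifting.} By Theorems~\ref{fact:elim} and~\ref{fact:subst}, for rate constants $k^{(\varepsilon)}$ a point $(c,u)$ is a steady state iff $u=\varepsilon\,\varphi(c)$ and $c$ solves the core steady-state equations for $t^{\ast}$, and by Theorem~\ref{fact:cons} the conservation laws become $\omega^{(l)}\cdot c+\varepsilon\,g_l(c)=\widetilde T_l$, $l=1,\dots,d$, with $g_l$ independent of $\varepsilon$. If the core model has $N$ non-degenerate positive steady states $c^{(1)},\dots,c^{(N)}$ for $t^{\ast}$ in the class $\{\omega^{(l)}\cdot c=T_l\}$, put $\widetilde T_l:=T_l+\varepsilon\,g_l(c^{(1)})$; at $\varepsilon=0$ the system for $c$ is exactly the core system, so the implicit function theorem (this is where non-degeneracy is used) gives, for all small $\varepsilon>0$, exactly $N$ nearby distinct positive solutions $c^{(i)}(\varepsilon)$ with $\omega^{(l)}\cdot c^{(i)}(\varepsilon)+\varepsilon g_l(c^{(i)}(\varepsilon))=\widetilde T_l$. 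Hence the $N$ points $(c^{(i)}(\varepsilon),\varepsilon\,\varphi(c^{(i)}(\varepsilon)))$ are positive steady states of the extension model lying in the \emph{single} stoichiometric class fixed by the $\widetilde T_l$. The second assertion of Theorem~\ref{fact:multi} is the contrapositive of the first with $N=2$, using the standard fact that two positive steady states can be made non-degenerate by an arbitrarily small generic perturbation.

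\textit{Non-degeneracy and stability.} Here I exploit the block form of the extension Jacobian $J_E$, whose $\partial\dot u/\partial u$ block is the constant matrix $A$. This block is invertible, since $(-1)^p\det A=\sigma\ne0$ by the proof of Theorem~\ref{fact:elim}, and it is in fact Hurwitz: it is a compartmental matrix (Metzler, non-positive column sums by \eqref{di}) that is outflow-connected, because by definition every intermediate has a reaction path to a core complex. The Schur complement of $A$ in $J_E$ is precisely the Jacobian of the reduced (core) dynamics, so a Schur-complement determinant computation — carried out over the stoichiometric subspaces, which are compatible by Theorem~\ref{fact:cons} — transfers non-degeneracy of $c^{(i)}$ to the lifted steady state for small $\varepsilon$. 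For the spectrum, the $p$ ``fast'' eigenvalues of $J_E$ are $O(1/\varepsilon)$ with negative real part (as $A$ is Hurwitz) and the remaining ``slow'' eigenvalues converge as $\varepsilon\to0$ to those of the core Jacobian restricted to $\Gamma_C$; hence for small $\varepsilon$ the lifted steady state is asymptotically stable when $c^{(i)}$ is a hyperbolic sink and has an eigenvalue of positive real part when $c^{(i)}$ does.

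\textit{The corollary.} Combining Theorems~\ref{fact:cons}--\ref{fact:subst}, for matching rate constants the $c$-parts of the steady states of any extension model $M$ in a class $\mmC'$ solve the core steady-state equations together with $\omega^{(l)}\cdot c+\sum_y(\omega^{(l)}\cdot y)\,M_y\,c^{y}=\widetilde T_l$, where $M_y=\sum_i\mu_{i,y}\ge0$ and $M_y>0$ exactly for the class complexes $y$ of $\mmC'$ (one uses that the intermediates reachable from $y$ lie in the component of $y$, on which $\omega^{(l)}$ is constant). The canonical model of any class $\mmC\supseteq\mmC'$ has one dead-end intermediate $Y_y$ per $y\in\mmC$ with $u_{Y_y}=r_y c^{y}$, the ratios $r_y>0$ free and mutually independent, and it realizes $t^{\ast}$; taking $r_y=M_y$ for $y\in\mmC'$, $r_y=\delta$ for $y\in\mmC\setminus\mmC'$, and adjusting conserved amounts as above makes its steady-state system converge, as $\delta\to0$, to that of $M$, so any occurrence of $N+1$ non-degenerate positive steady states for $M$ persists for the canonical model (implicit function theorem). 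This gives Corollary~\ref{fact:cor} by contraposition. I expect the main obstacle to be the third part: pushing non-degeneracy and, above all, the stability type through this singular perturbation — the Schur-complement identity over the stoichiometric subspaces and the Hurwitz property of the fast block $A$ — together with the bookkeeping that forces all $N$ lifted steady states into one stoichiometric class.
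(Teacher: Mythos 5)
Your proposal follows essentially the same route as the paper: the same $1/\varepsilon$ scaling of the rate constants out of intermediates (giving $\mu^{(\varepsilon)}_{i,y}=\varepsilon\,\bar\mu_{i,y}$ while leaving the realized core constants unchanged), the same implicit-function-theorem lifting of non-degenerate core steady states within a single stoichiometric class, the same block-triangular Jacobian together with the Metzler/Hurwitz property of the fast block $A$ for the stability claims, and the corollary obtained exactly as the paper intends --- by rewriting the conservation laws with coefficients $(\omega^{(l)}\cdot y)M_y$, matching the free dead-end ratios $r_y$ of the canonical model to the $M_y$, and sending the extra ratios to zero. The only deviations are cosmetic (the paper proves the Hurwitz property of $A$ via Perron--Frobenius applied to $\exp(A)$ rather than invoking compartmental-matrix theory, and keeps the conserved amounts fixed rather than $\varepsilon$-adjusted), and your argument shares the paper's own unaddressed wrinkle that the perturbation step only transfers \emph{non-degenerate} steady states while the corollary's wording omits that qualifier.
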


The theorem follows from the series of propositions and lemmas below. The corollary is a simple consequence of the theorem.

\begin{proposition}\label{Nnon}
Consider a  core model with species set $\mmS_C$ and set of reactions $\mmR_C$. Consider an extension model with  species set $\mmS_E = \mmS_C\cup \mmY$ with $\mmY$  the set of intermediates, and  reaction set   $\mmR_E$.
Assume that:
\begin{enumerate}[(i)]
\item For some choice of rate constants $\tau=\{t_{y\rightarrow y'}\}$, $y\rightarrow y'\in\mmR_C$, the core model 
has $N\geq 1$  distinct non-degenerate positive steady states  in the same stoichiometric class.
\item There exist rate constants $\kappa=\{k_{y\rightarrow y'}\}$ for the extension model that realize $\tau$, that is, rate constants such that 
$$t_{y\rightarrow y'} =k_{y\rightarrow y'}  + \sum_{j=1}^p  k_{Y_j\rightarrow y'}\mu_{j,y}.$$
\end{enumerate}
 Then,   there exists a choice of rate constants  for the extension model that realize $\tau$ for which there are $N$  distinct non-degenerate positive steady states  in the same stoichiometric class.  
\end{proposition}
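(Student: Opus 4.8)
The plan is to generalise the mechanism of the worked example: by choosing the rate constants of the extension model suitably one can make all the coefficients $\mu_{i,y}$ arbitrarily small, so that the extension model becomes an arbitrarily small perturbation of the core model at steady state, and then transport the $N$ core steady states to the extension model using their non-degeneracy. The first step is to build such a family. Starting from the rate constants $\kappa=\{k_{y\to y'}\}$ of assumption (ii), define $\kappa(\epsilon)$, $\epsilon>0$, by keeping the constants of $\mmR_{C\cap E}$ and $\mmR_{C\to E}$ unchanged and multiplying each constant of $\mmR_{E\to E}$ and $\mmR_{E\to C}$ by $1/\epsilon$ (so $\kappa(1)=\kappa$). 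In the Matrix--Tree formulas of the proof of Theorem~\ref{fact:elim}, every edge of a spanning tree of $\widehat{G}_{\mmY}$ rooted at $*$ carries a label from $\mmR_{E\to E}\cup\mmR_{E\to C}$, while a spanning tree of $\widehat{G}^{y}_{\mmY}$ rooted at $Y_i$ uses exactly one edge $*\to Y_k$ (a label from $\mmR_{C\to E}$, hence unscaled) and $p-1$ scaled edges. Hence $\sigma(\kappa(\epsilon))=\epsilon^{-p}\sigma(\kappa)$ and $\sigma_{i,y}(\kappa(\epsilon))=\epsilon^{-(p-1)}\sigma_{i,y}(\kappa)$, so $\mu_{i,y}(\kappa(\epsilon))=\epsilon\,\mu_{i,y}(\kappa)$; substituting this into \eqref{ty} leaves every $t_{y\to y'}$ unchanged, so $\kappa(\epsilon)$ realises $\tau$ for all $\epsilon>0$.

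Next I would set up the perturbed steady-state problem. By Theorems~\ref{fact:elim} and~\ref{fact:subst}, for the constants $\kappa(\epsilon)$ the positive steady states of the extension model are exactly the points $(c,\epsilon v(c))$ where $c$ is a positive steady state of the core model with rate constants $\tau$ and $v(c)=\bigl(\sum_{y}\mu_{i,y}(\kappa)\,c^{y}\bigr)_{i}$ has strictly positive entries when $c>0$. By Lemma~\ref{fact31} the $d$ conservation laws of the extension model are $\omega^{\ell}\cdot c+\sum_i\eta^{\ell}_{i}u_i$, $\ell=1,\dots,d$, with $\omega^{1},\dots,\omega^{d}$ a basis of $\Gamma_C^{\perp}$ and constants $\eta^{\ell}_{i}$ depending only on the core; on $(c,\epsilon v(c))$ this becomes $\omega^{\ell}\cdot c+\epsilon\,\psi^{\ell}(c)$ for a polynomial $\psi^{\ell}$. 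Let $c^{(1)},\dots,c^{(N)}$ be the given non-degenerate core steady states, all with $\omega^{\ell}\cdot c^{(m)}=\bar T_{\ell}$. Non-degeneracy of $c^{(m)}$ means that near it the core steady-state set is a smooth $d$-manifold on which the $\omega^{\ell}\cdot c$ are coordinates; let $T\mapsto\gamma_m(T)$ be the inverse, $\gamma_m(\bar T)=c^{(m)}$. Solving $\omega^{\ell}\cdot\gamma_m(T)+\epsilon\,\psi^{\ell}(\gamma_m(T))=\bar T_{\ell}$ for $T$ is inverting a map equal to the identity at $\epsilon=0$, so the implicit function theorem gives a unique $T^{(m)}(\epsilon)\to\bar T$; set $\hat c^{(m)}(\epsilon)=\gamma_m(T^{(m)}(\epsilon))\to c^{(m)}$. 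For $\epsilon$ small the $\hat c^{(m)}(\epsilon)$ are distinct and positive, so the $N$ points $\bigl(\hat c^{(m)}(\epsilon),\epsilon v(\hat c^{(m)}(\epsilon))\bigr)$ are distinct positive steady states of the extension model with constants $\kappa(\epsilon)$, all lying in the single extension stoichiometric class with conserved amounts $\bar T$.

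The last step, which I expect to be the real obstacle, is non-degeneracy. I would use that a steady state $x^{*}$ is non-degenerate iff $\Gamma\cap\ker DF(x^{*})=\{0\}$, with $\Gamma$ the stoichiometric subspace. Writing the extension field as $\dot c=f(c,u)$, $\dot u=\tfrac1\epsilon A\,u+z(c)$, where $A$ (the matrix of the proof of Theorem~\ref{fact:elim} at $\epsilon=1$) is invertible and $f$ is affine in $u$, the Jacobian at $\bigl(\hat c^{(m)}(\epsilon),\epsilon v(\hat c^{(m)})\bigr)$ has lower-right block $\tfrac1\epsilon A$, and the $\epsilon$-factors cancel in its Schur complement, which equals the core Jacobian $Dg(\hat c^{(m)})$ (this is what Theorems~\ref{fact:elim} and~\ref{fact:subst} say upon substituting $u=\epsilon v(c)$). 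Solving the block system, $\ker DF=\bigl\{\bigl(c',-\epsilon A^{-1}(\partial_c z)\,c'\bigr):Dg(\hat c^{(m)})\,c'=0\bigr\}$, so the intermediate component of any kernel vector is $O(\epsilon)$. Hence a nonzero element of $\Gamma_E\cap\ker DF$, once normalised, would converge as $\epsilon\to0$ to a nonzero vector in $\bigl(\Gamma_E\cap(\R^{n}\times\{0\})\bigr)\cap\ker Dg(c^{(m)})$, using that $\Gamma_E$ is closed and $\hat c^{(m)}(\epsilon)\to c^{(m)}$. But $\Gamma_E\cap(\R^{n}\times\{0\})=\Gamma_C\times\{0\}$ — the intermediate components of the generators of $\Gamma_E$ span $\R^{p}$ by the same connectivity of intermediates used in Theorem~\ref{fact:elim}, whence $\dim\bigl(\Gamma_E\cap(\R^{n}\times\{0\})\bigr)=\dim\Gamma_E-p=\dim\Gamma_C$ and $\Gamma_C\times\{0\}\subseteq\Gamma_E$ — and $\ker Dg(c^{(m)})\cap\Gamma_C=\{0\}$ precisely because $c^{(m)}$ is non-degenerate in the core. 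So for $\epsilon$ small enough $\Gamma_E\cap\ker DF=\{0\}$ at each $\hat c^{(m)}(\epsilon)$, which yields the $N$ non-degenerate positive steady states in a common stoichiometric class and proves the proposition.

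The crux is the non-degeneracy step; the device that makes it tractable is to pass through the Schur complement of the fast intermediate block, which identifies the relevant part of the extension Jacobian with $Dg$ up to an $O(\epsilon)$ correction and reduces everything to the two elementary facts $\Gamma_E\cap(\R^{n}\times\{0\})=\Gamma_C\times\{0\}$ and $\ker Dg(c^{(m)})\cap\Gamma_C=\{0\}$. (The same block structure also shows that the remaining eigenvalues of $DF$ are those of $\tfrac1\epsilon A$, which have negative real part, and this is the seed of the stability part of Theorem~\ref{fact:multi}.)
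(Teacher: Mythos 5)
Your proof is correct, and its engine is the one the paper uses: rescale the rate constants of $\mmR_{E\rightarrow E}\cup\mmR_{E\rightarrow C}$ by $1/\epsilon$ so that every $\mu_{i,y}$ acquires a factor $\epsilon$ while the realized constants $t_{y\rightarrow y'}$ in \eqref{ty} are untouched (the paper writes $\theta$ for your $\epsilon$ and simply asserts $\mu^{\theta}_{j,y}=\theta\mu_{j,y}$; your spanning-tree edge count is in effect a proof of that assertion from \eqref{muiy}), and then continue the $N$ non-degenerate core steady states off the singular limit $\epsilon=0$ by the implicit function theorem. Where you genuinely diverge is in the bookkeeping. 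The paper replaces $d$ of the steady-state equations by the conservation-law equations using a reduced basis of $\Gamma_C^{\perp}$ as in \cite{feliu-inj}, decouples the intermediates by left-multiplying with the block matrix $B$, and applies the IFT once to the resulting map $\widehat{f}^{\theta}_{\kappa}$; in that formulation non-degeneracy is precisely invertibility of the Jacobian of the reduced map, so existence in the prescribed stoichiometric class, positivity, distinctness and non-degeneracy all drop out of the single IFT application by continuity. You instead keep the raw vector field, obtain existence-in-class by inverting the $O(\epsilon)$-perturbed conserved-amount map on the local $d$-dimensional steady-state manifold of the core, and then transfer non-degeneracy separately via the Schur complement of the fast block $\tfrac1\epsilon A$, the identity $\Gamma_E\cap(\R^{n}\times\{0\})=\Gamma_C\times\{0\}$ (which is correct: the inclusion \eqref{inclusion} gives $\Gamma_C\times\{0\}\subseteq\Gamma_E$, and surjectivity of the projection of $\Gamma_E$ onto the intermediate coordinates follows from connectivity of $\widehat{G}_{\mmY}$, so the dimensions match by Theorem~\ref{fact:cons}), and a normalization-and-limit argument. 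Both routes are sound; the paper's buys a shorter non-degeneracy step at the price of the reduced-system machinery, while yours works directly with the definition $\Gamma_E\cap\ker DF=\{0\}$, and the block structure you isolate --- in particular that the remaining spectrum is governed by $A$, whose eigenvalues lie in the open left half-plane --- is exactly what the paper's Lemma~\ref{Amatrix} and Proposition~\ref{stability} exploit for the stability statements of Theorem~\ref{fact:multi}.
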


\begin{proof}
We will first rewrite the steady-state equations for the core model and for the extension model in a way suitable for our purpose. Secondly we  show that  if the core model has  $N$ non-degenerate positive steady states in the same stoichiometric class then so does the extension model.
Let $d=\dim(\Gamma^{\perp}_C)=\dim(\Gamma^{\perp}_E)$ (Theorem \ref{fact:cons}). We assume that the extension model has $p$ intermediates and that the species set $\mmS_E$ is ordered as $S_1,\dots,S_n,Y_1,\dots,Y_p$ where  $\mmS_C=\{S_1,\dots,S_n\}$ and $\mmY=\{Y_1,\ldots,Y_p\}$. We let $c_i$ denote the concentration of $S_i$ and $u_i$ the concentration of $Y_i$.

Consider the core model, a concentration vector  $c\in\R^n_+$ and rate constants $\tau=\{t_{y\rightarrow y'}\}$. The steady-state equations are given by
$$g_{\tau}(c):=\sum_{y\rightarrow y'\in\mmR_C} t_{y\rightarrow y'} (y'-y)c^y=0,$$
together with the equations for the conservation laws for a given set of conserved amounts $T_1,\ldots,T_d$. We follow \cite{feliu-inj} and choose a reduced basis for $\Gamma_C^{\perp}$, that is, a basis $\{\omega^1,\ldots,\omega^d\}$ with $\omega^i=(\lambda^i_1,\ldots,\lambda^i_n)$ such that $\lambda^i_i=1$ and $\lambda^i_j=0, j\not=i$, $j\ge d$. Such a basis always exists, potentially by reordering the set of species $\mmS_C$ \cite{feliu-inj}. The system of equations to be solved can then be rephrased as 
$$\widetilde{g}_{\tau}(c)=0,\quad \text{where} \quad \widetilde{g}_{\tau}(c)=(\omega^1\cdot c-T_1,\ldots,\omega^d\cdot c-T_d,g_{\tau,d+1}(c),\ldots,g_{\tau,n}(c))$$
(see \cite{feliu-inj}).
Thus, two vectors $c,c'\in\R^n_+$ are steady states of the core model, for the rate constants $\tau$, in the same stoichiometric class   if and only if $\widetilde{g}_{\tau}(c)=\widetilde{g}_{\tau}(c')=0$ for some choice of $T_1,\ldots,T_d$.

Similarly, consider the extension model, a concentration vector  $(c,u)\in\R^{n+p}_+$, and rate constants $\kappa=\{k_{y\rightarrow y'}\}$. The steady-state equations are given by
$$0=f_{\kappa}(c,u)=\sum_{y\rightarrow y'\in\mmR_E} k_{y\rightarrow y'} (y'-y)c_1^{y_1}\cdot\ldots\cdot c_n^{y_n}u_1^{y_{n+1}}\cdot\ldots\cdot u_p^{y_{n+p}},$$
 together with the equations for the conservation laws for a given set of conserved amounts $T_1,\ldots,T_d$. 
 The conservation laws are related  to the conservation laws of the core model by Lemma~\ref{fact31} and we use the notation introduced there. It follows that if $\{\omega^1,\ldots,\omega^d\}$  is a reduced basis for $\Gamma_C^{\perp}$ then $\{\widetilde{\omega}^1,\ldots,\widetilde{\omega}^d\}$ is a reduced basis for $\Gamma_E^{\perp}$, and that the system of equations to be solved can be stated as
 \begin{eqnarray}\label{fcu}
 \lefteqn{\widetilde{f}_{\kappa}(c,u)=0,\quad \text{where}}  \qquad\nonumber \\
&\widetilde{f}_{\kappa}(c,u)=(\widetilde{\omega}^1\cdot (c,u)-T_1,\ldots,\widetilde{\omega}^d\cdot (c,u)-T_d,f_{\kappa,d+1}(c,u),\ldots,f_{\kappa,n+p}(c,u)).
\end{eqnarray}
Since $d\leq n$, the last $p$ components of $\widetilde{f}_{\kappa}(c,u)$ are the steady-state equations corresponding to $\dot{u}=0$.  Note that 
$$\widetilde{\omega}^i\cdot (c,u)-T_i=\omega^i\cdot c+\sum_{j=1}^p \widetilde{w}_{n+j}^i u_j -T_i,$$
 $i=1,\ldots,d$, where $\widetilde{w}_{n+j}^i$ is the $(n+j)$-th coordinate of $\widetilde{\omega}^i$ as defined in Lemma~\ref{fact31}.   
  Two vectors $(c,u),(c',u')\in\R^{n+p}_+$ are steady states of the extension model in the same stoichiometric class for the rate constants $\kappa$ if and only if $\widetilde{f}_{\kappa}(c,u)=\widetilde{f}_{\kappa}(c',u')=0$ for some choice of $T_1,\ldots,T_d$.
 
We will reformulate the equation $\widetilde{f}_{\kappa}(c,u)=0$  to obtain a system of equations that is closely related to the equation $\widetilde{g}_{\tau}(c)=0$. First recall that at steady state $u_i=\sum_{y\in \mmC_C} \mu_{i,y} c^{y}$ (Theorem \ref{fact:elim}).  In equation \eqref{fcu} we  will replace the functions $\widetilde{f}_{\kappa,i}(c,u)$, $i>n$,  by the functions $\widehat{f}_{\kappa,i}(c,u) =u_i-\sum_{y\in \mmC_C} \mu_{i,y} c^{y}$, $i>n$, and further replace the variables $u_j$, $j>n$, by  $\sum_{y\in \mmC_C} \mu_{j,y} c^{y}$   in $\widetilde{f}_{\kappa,i}(c,u)$, for all $i\leq n$.

Formally, we proceed in the following way.
Let $I_r$ denote the identity matrix of order $r$.
Note that the function $\widetilde{f}_{\kappa}(c,u)$ is linear in $u$ and can be written in block form as
$$\widetilde{f}_{\kappa}(c,u)=\left(\begin{array}{c} M  \\  A  \end{array}\right)u+\left(\begin{array}{c} v \\ z  \end{array}\right),$$ where $M$ is an $n\times p$ matrix with entries in $\R[\Con]$, $v$ a vector of length $n$ with components in $\R[\Con,c,T_1,\dots,T_d]$ and $A,z$ are given in the proof of Theorem \ref{fact:elim}, that is,
from equation~\eqref{auz}, we have that 
$$(\widetilde{f}_{\kappa,n+1}(c,u),\ldots,\widetilde{f}_{\kappa,n+p}(c,u))=Au+z.$$
The  $p\times p$ matrix $A$  has entries in $\R[\Con]$ and is  invertible  in $\R(\Con)$. The vector $z$ has length $p$ and depends on $c$ and $\Con$. 
 Let $A^{-1}$ be the inverse of $A$ in $\R(\Con)$ . By Theorem \ref{fact:elim}, the solution to $Au+z=0$ is given by $u_i=-(A^{-1}z)_i=\sum_{y} \mu_{i,y} c^{y}$. 
 Let $B$ be the $(n+p)\times(n+p)$ matrix defined in block form by
 $$B=\left(\begin{array}{cc} I_{n} & -MA^{-1} \\ 0 & A^{-1}  \end{array}\right). $$
 This matrix is invertible in $\R(\Con)$.
Then, the function $\widehat{f}_{\kappa}(c,u)$ defined by 
 \begin{equation}\label{BD}
 \widehat{f}_{\kappa}(c,u):= B \widetilde{f}_{\kappa}(c,u)
 \end{equation}
  fulfills
 $$
\widehat{f}_{\kappa,i}(c,u)  =    \begin{cases} \widetilde{f}_{\kappa,i}\Big(c_1,\dots,c_n,\sum_{y\in \mmC_C} \mu_{1,y} c^{y},\dots,\sum_{y\in \mmC_C} \mu_{p,y} c^{y}\Big)     & i=1,\ldots,n, \\
 u_i-\sum_{y\in \mmC_C} \mu_{i,y} c^{y} & i=n+1,\ldots,n+p.
\end{cases} $$
Indeed,
$$B \widetilde{f}_{\kappa}(c,u) =  \left(\begin{array}{cc} I_{n} & -MA^{-1} \\ 0 & A^{-1}  \end{array}\right) \left(\begin{array}{c} Mu+v \\ Au+z  \end{array}\right)  
= \left(\begin{array}{c} v-MA^{-1}z \\ u+A^{-1}z  \end{array}\right)  $$
and the claim follows from the equality $-(A^{-1}z)_i=\sum_{y} \mu_{i,y} c^{y}$.

Note that $\widehat{f}_{\kappa,i}(c,u)$, $i\leq n$, does not depend on $u$. Further, solving $\widetilde{f}_{\kappa}(c,u)=0$ is equivalent to solving $\widehat{f}_{\kappa}(c,u)=0$. Equation~\eqref{BD} ensures that the determinant of the Jacobian of $\widehat{f}_{\kappa}$ evaluated at $(c,u)$ is  non-zero if and only if the determinant of the Jacobian of $\widetilde{f}_{\kappa}$ evaluated at $(c,u)$ is non-zero. Consequently to study non-degenerate steady states of the extension model we can study zeros of $\widehat{f}_{\kappa}(c,u)$ for which the Jacobian is non-singular. This is what we do next.

Assume that the core model has $N$ positive non-degenerate steady states, 
$c^i\in\R^n_+$, $i=1,\ldots,N$, in the same stoichiometric class for some rate constants $\tau=\{t_{y\rightarrow y'}\}$, $y\rightarrow y'\in\mmR_C$. 
Let $T_1,\dots,T_d$ be the conserved amounts defining the stoichiometric class for the reduced basis $\{\omega^1,\ldots,\omega^d\}$.

Let $\kappa=\{k_{y\rightarrow y'}\}$ be  rate constants for the extension model \eqref{ty} such that
$$t_{y\rightarrow y'} =k_{y\rightarrow y'}  + \sum_{j=1}^p  k_{Y_j\rightarrow y'}\mu_{j,y}$$
for all reactions $y\rightarrow y'$ in the core model $\mmR_C$  (which exist by assumption).
Then by construction and using Theorem \ref{fact:subst} we have
$$\widehat{f}_{\kappa,i}(c,u) =\begin{cases} \widetilde{g}_{\tau,i}(c)+\sum_{j=1}^p \widetilde{w}_{n+j}^i \sum_{y} \mu_{j,y} c^{y} & i=1,\dots,d \\ \widetilde{g}_{\tau,i}(c), &i=d+1,\dots,n. \end{cases}$$

Let $\theta\in\R_+$ be a positive constant. Define a new set of rate constants $\kappa^{\theta}=\{k^{\theta}_{y\rightarrow y'}\}$ by $k^{\theta}_{y\rightarrow y'}= k_{y\rightarrow y'}/\theta$ if $y\rightarrow y'\in \mmR_{E\rightarrow E}$ or $\mmR_{E\rightarrow C}$ and $k^{\theta}_{y\rightarrow y'}= k_{y\rightarrow y'}$ otherwise. Let $t^{\theta}_{y\rightarrow y'}$ and $\mu^{\theta}_{j,y}$ correspond to  $t_{y\rightarrow y'}$ and $\mu_{j,y}$, respectively, obtained with the rate constants $\kappa^{\theta}$ using \eqref{ty} and \eqref{muiy}.
Then
$$\mu^{\theta}_{j,y}=\theta\mu_{j,y},\quad \text{ and } \quad t^{\theta}_{y\rightarrow y'}=t_{y \rightarrow y'}.$$
The function $\widehat{f}_{\kappa}^{\theta}(c,u)$ for the rate constants  $\kappa^{\theta}$ takes the form
$$\widehat{f}_{\kappa,i}^{\theta}(c,u)  = \begin{cases} \widetilde{g}_{\tau,i}(c) +\theta\big(\sum_{j=1}^p \widetilde{w}_{n+j}^i \sum_{y} \mu_{j,y} c^{y} \big)& i=1,\ldots,d, \\
 \widetilde{g}_{\tau,i}(c) & i=d+1,\ldots,n, \\
 u_i-\theta\big( \sum_{y\in \mmC_C} \mu_{i,y} c^{y}\big) & i=n+1,\ldots,n+p.
\end{cases}$$
We observe that the Jacobian of $\widehat{f}_{\kappa}^{\theta}$ at $(c,u)$, $J_{(c,u)}(\widehat{f}_{\kappa}^{\theta})$,  takes the block form
$$J_{(c,u)}(\widehat{f}_{\kappa}^{\theta}) = \left(\begin{array}{cc}  J_{c}(\widetilde{g}_{\tau})+\theta(*) & 0  \\ -\theta(*) & I_p  \end{array}\right) $$
where ``$(*)$'' indicates some matrix that we are not concerned with knowing the exact form of.

By continuity, the function $\widehat{f}_{\kappa}^{\theta}(c,u)$  is well defined for  all $\theta\in\R$.  
That is, there is a well defined and differentiable function
\begin{eqnarray*}
 \R^{n} \times \R^{p} \times \R & \xrightarrow{F_{\kappa}} & \R^{n+p} \\
 (c,u,\theta) & \mapsto & F_{\kappa}(c,u,\theta):= \widehat{f}_{\kappa}^{\theta}(c,u).
\end{eqnarray*}
For $\theta=0$, the vectors  
$(c^i,0)$, $1\leq i\leq N$, are non-negative steady states in the stoichiometric class of the extension model defined by the conserved amounts $T_1,\ldots,T_d$.
That is $F_{\kappa}(c^i,0,0)=0$ for all $i$.
 The Jacobian of $\widehat{f}_{\kappa}^{0}(c,u)$ has  the matrix in block form 
$$J_{(c,u)}(\widehat{f}_{\kappa}^{0}) = \left(\begin{array}{cc}  J_{c}(\widetilde{g}_{\tau})  & 0  \\  0 & I_p  \end{array}\right) .$$
Since the Jacobian matrices of $\widetilde{g}_{\tau}$ evaluated at $c^i$, $1\leq i\leq N$, are by assumption non-singular,  
 the Jacobian matrices of $\widehat{f}_{\kappa}^{0}(c,u)$ evaluated at $(c^i,0)$  are non-singular.
Therefore, the  Implicit Function Theorem applied to $F_{\kappa}$ at the point $(c^i,0,0)$  guarantees that there exists an interval $I_i=(-\phi_i,\phi_i)$, $\phi_i>0$, and an open neighborhood $U_i$ of $(c^i,0)$ such that for all $\theta\in I_i$ there is a steady state $(c^i(\theta),u^i(\theta) )\in U_i$ in the stoichiometric class defined by $T_1,\dots,T_d$ and with $(c^i(0),u^i(0) )=(c^i,0)$. By making $\phi_i$ sufficiently small, the interval $I_i$ can be chosen such that $c^i(\theta)$ is  positive (i.e. $U_i\subseteq \R^n_+\times \R^p$) and the Jacobian of $\widehat{f}_{\kappa}^{\theta}(c,u)$ evaluated at $(c^i(\theta),u^i(\theta) )$ is non-singular for all $\theta\in I_i$. Restrict $I_i$ to the positive part, $I_i^{+}=[0,\phi_i)$. Since $c^i(\theta)$ is positive if follows from the definition of $\widehat{f}_{\kappa}^{\theta}(c,u)$ that $u^i(\theta)$ is positive for all $\theta\in I_i^{+}$. Hence $(c^i(\theta),u^i(\theta) )$ is a positive non-degenerate steady state in the stoichiometric class defined by the conserved amounts $T_1,\ldots,T_d$ for all $\theta\in I_i^{+}$. 
Since $c^i\not= c^j$, for all $i\not=j$, then  by choosing $\phi^i$ small enough we are guaranteed that $\cap_{i=1}^N U_i=\emptyset$.

With these data, let $\widehat{\phi}=\min(\phi_i\vert 1\leq i\leq N)$.
Then, for all $\theta \in (0,\widehat{\phi})$ the rate constants $\kappa^{\theta}=\{k^{\theta}_{y\rightarrow y'}\}$, $y\rightarrow y'\in\mmR_E$, fulfill that the extended model has $N$ positive distinct non-degenerate steady states $(c^i(\theta),u^i(\theta) )$, $1\leq i\leq N$, in the  stoichiometric class defined by $T_1,\ldots,T_d$. This concludes the proof.
\end{proof}

\begin{remark}
A steady state in the core model has always a corresponding steady state in the extension model for any choice of matching rate constants, $\kappa$ and $\tau$. It follows from the following: a steady state $c$ in the core model always defines a steady state concentration $u$ for the intermediates. By construction $(c,u)$ is a steady state. If there are $N$ steady states in the core model in some stoichiometric class for some rate constants then we are however not guaranteed that $N$ corresponding  steady states  in the extension model are in the same stoichiometric class. If the stoichiometric space of the core model has full dimension then the stoichiometric space of the extension model has full dimension (Theorem \ref{fact:cons}). Consequently the $N$ steady states are always in the same stoichiometric class.
\end{remark}

\begin{lemma}\label{Amatrix}
Let $A$ be the $p\times p$ matrix in equation~\eqref{auz}. Then all $p$ eigenvalues of $A$ have negative real part, that is, if $\lambda$ is an eigenvalue of $A$ then $\Re(\lambda)<0$.
\end{lemma}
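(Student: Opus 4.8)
The plan is to exploit the sign structure of $A$. Its off-diagonal entries $a_{i,j}=k_{Y_j\to Y_i}$ are non-negative, so $A$ is a Metzler matrix, and by \eqref{di} the entries of column $i$ sum to $-d_i\le 0$. Three ingredients will combine: (a) a Gershgorin estimate on the columns of $A$, placing every eigenvalue in the closed left half-plane; (b) the Perron--Frobenius theorem for Metzler matrices, which guarantees that the spectral abscissa $s(A):=\max\{\Re(\lambda)\mid \lambda\in\mathrm{spec}(A)\}$ is attained by a \emph{real} eigenvalue of $A$; and (c) the non-vanishing of $\det(A)$, already established in the proof of Theorem~\ref{fact:elim}.

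First I would apply Gershgorin's disc theorem to $A^{\top}$, whose spectrum equals that of $A$. Its $i$-th disc is centered at $a_{i,i}=-e_i-d_i$ with radius $\sum_{j\ne i}|a_{j,i}|=\sum_{j\ne i}a_{j,i}=e_i$. Its rightmost point thus has abscissa $-e_i-d_i+e_i=-d_i\le 0$, so every eigenvalue $\lambda$ of $A$ satisfies $\Re(\lambda)\le 0$; in particular $s(A)\le 0$.

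Next I would upgrade this to strict negativity. Choosing $c>0$ with $c+a_{i,i}\ge 0$ for all $i$, the matrix $A+cI$ is entrywise non-negative, so by the Perron--Frobenius theorem its spectral radius $\rho(A+cI)$ is an eigenvalue of $A+cI$; tracking the shift $\lambda\mapsto\lambda+c$ shows that $\rho(A+cI)=c+s(A)$ and that $s(A)$ is itself a real eigenvalue of $A$. If $s(A)$ were equal to $0$, then $0$ would be an eigenvalue of $A$, i.e.\ $\det(A)=0$. But in the proof of Theorem~\ref{fact:elim} we showed, via the Matrix--Tree theorem, that $(-1)^p\det(A)=\sigma=\sum_{\tau\in\Theta(*)}\pi(\tau)$, and since $\wG_{\mmY}$ is strongly connected there is at least one spanning tree rooted at $*$, so $\sigma>0$ for positive rate constants. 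This contradiction forces $s(A)<0$, i.e.\ $\Re(\lambda)<0$ for every eigenvalue $\lambda$ of $A$.

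The one genuinely delicate point is excluding purely imaginary eigenvalues $\pm i\beta$ with $\beta\ne 0$: neither the Gershgorin bound nor the non-singularity of $A$ alone rules them out, and it is precisely the Metzler structure --- through the Perron--Frobenius fact that the rightmost eigenvalue is real --- that does. Alternatively one could observe that $-A$ is a ``compartmental'' (equivalently, an irreducibly diagonally dominant $M$-) matrix that is outflow connected and invoke the corresponding standard result, but the self-contained argument above seems preferable.
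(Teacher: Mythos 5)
Your proof is correct, but it follows a genuinely different route from the paper's. The paper's proof passes to the matrix exponential: it embeds $A$ into a $(p+1)\times(p+1)$ matrix whose negative is a Laplacian, deduces that $\exp(A)$ is non-negative with column sums at most one, applies Perron--Frobenius to $\exp(A)$, and then has to treat the reducible case separately by decomposing $A$ into irreducible diagonal blocks and checking that each block still has a strictly negative column sum. Your argument stays with $A$ itself: the column-wise Gershgorin discs, centred at $-e_i-d_i$ with radius $e_i$, immediately give $\Re(\lambda)\le 0$ from the sign structure and \eqref{di}; the shift $A+cI$ together with the weak form of Perron--Frobenius (spectral radius of a non-negative matrix is an eigenvalue, no irreducibility required) shows the spectral abscissa $s(A)$ is a real eigenvalue of $A$; and $(-1)^p\det(A)=\sigma>0$ then forces $s(A)<0$. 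This buys you two simplifications: no matrix exponential (and hence no need to worry about the fact that $\exp(\lambda)=1$ only forces $\lambda\in 2\pi i\Z$ rather than $\lambda=0$, a point the paper glosses over), and no case split on reducibility. One small remark: your closing caveat undersells Gershgorin. Each column disc is centred at $-e_i-d_i$ with radius $e_i$, so it meets the imaginary axis at most at the origin (a point $i\beta$ in disc $i$ satisfies $(e_i+d_i)^2+\beta^2\le e_i^2$, forcing $\beta=0$); hence Gershgorin plus $\det(A)\neq 0$ already excludes the entire imaginary axis, and the Perron--Frobenius step, while correct and perfectly serviceable, is not strictly needed.
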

\begin{proof} 
We will need the following fact $(*)$:  A Metzler matrix $M$ is a square matrix with all off-diagonal entries non-negative. If $M$ is a Metzler matrix then $\exp(M)$ is a matrix with non-negative entries. If $M$ is a Laplacian matrix, then $-M$ is a Metzler matrix,  $\exp(M)$ is a matrix with non-negative entries and all column sums equal to one. This result and the Perron-Frobenius theorem used later in the proof can be found in \cite{berman}.
The argument we give holds generally for Metzler matrices with non-positive column sums, but we have not been able to find a reference to it in the literature.

Equation~\eqref{Spos} shows that $(-1)^p\det(A)$ is  a non-zero polynomial in $\R[\Con]$ with positive coefficients. Hence zero cannot be an eigenvalue of $A$ for any choice of rate constants.  By definition, $A$ is a Metzler matrix, and thus  $B=\exp(A)$ has  non-negative entries.  
We extend $A$ to a $(p+1)\times(p+1)$  matrix 
$$\tilde{A}=\left( \begin{array}{cc}
A & 0_p \\
(d_i)_{i=1,\ldots,p} & 0 
\end{array} \right), $$ 
where $0_p$ is the $p$-dimensional column vector with entries 0 and $d_i$ are defined in the proof of Theorem~\ref{fact:elim}.
By \eqref{di}, $-\tilde{A}$ is a Laplacian, and hence $\tilde{B}=\exp(\tilde{A})$ has  non-negative  entries and all column sums are equal to one. The matrix $\tilde{B}$ takes the form, 
$$\tilde{B}=\left( \begin{array}{cc}
B & 0_p \\ 
\tilde{D} & 1
\end{array} \right), $$
where $\tilde{D}$ is a $1\times p$ matrix with non-negative entries. It follows that the column sums of $B$ are less than or equal to one.

An eigenvalue $\mu$ of $B$ is related to an eigenvalue $\lambda=\lambda_1+i\lambda_2$ of $A$ by $\mu=\exp(\lambda)$.
Assume first that $A$ is irreducible. Hence also $B$ is irreducible. It follows from the Perron-Frobenius theorem that all eigenvalues $\mu$ of $B$ fulfill $\vert \mu\vert\leq r\leq 1$ (the maximal column sum) for some real number $r$ and that $\mu=r$ is an eigenvalue. Since $\lambda=0$ is not an eigenvalue of $A$, then necessarily   $r<1$. Hence 
for all eigenvalues $\mu=\exp(\lambda)$ of $B$, we have $e^{\lambda_1}=\vert \mu\vert<1$ and hence  $\lambda_1=\Re(\lambda)<0$ for all eigenvalues $\lambda$ of $A$.

If $A$ is not irreducible then $A$ can be written in the following form, potentially after reordering the intermediates,
 $$A=\left(\begin{array}{cccc} \tilde{A}_1 & \ldots & \ldots & \ldots \\ 0 & \tilde{A}_2  & \ldots & \ldots \\  0 & 0 & \ldots & \ldots \\  0 & 0 & 0 & \tilde{A}_k \end{array}\right), $$
where $k> 1$ and $\tilde{A}_1,\ldots,\tilde{A}_k$ are irreducible square matrices. Each $\tilde{A}_j$ fulfills the same properties as $A$ above, that is, $\tilde{A}_j$ is a Metzler matrix with non-positive column sums and with at least one negative column sum. The latter follows from the following.  Let $\mmY_j$ denote the set of intermediates corresponding to the rows of $\tilde{A}_j$. 
 and let $j_1,\dots,j_t$ be the corresponding ordered row indices.  
Using \eqref{di} and the definitions above it,  and the block diagonal form of $A$, we have that the column sums of $\tilde{A}_j$ are given by
$$\sum_{l=j_1}^{j_t}  a_{l,i}  = \sum_{l=1}^p  a_{l,i}- \sum_{l\notin \{j_1,\dots,j_t\}}   a_{l,i} = -d_i - \sum_{l\leq j_1}   a_{l,i}. $$
By definition of intermediate, there exists at least an intermediate $Y_i$ in $\mmY_j$ and a reaction $Y_i\rightarrow X$ with $X$ an intermediate not in  $\mmY_j$ or  a core complex. As a consequence, there exists an index $i$ such that $d_i\neq 0$ or $a_{l,i}\neq 0$ for some $l<j_1$. Therefore the column sums of $\widetilde{A}_j$ are not all zero.
Since the  eigenvalues of $A$ agree with the eigenvalues of $\tilde{A}_j$, $j=1,\ldots,k$, the lemma follows from considering each irreducible matrix $\tilde{A}_j$ by itself.
\end{proof}

\begin{remark}\label{zero-eigen} 
It follows from [\cite{feliu-inj},Remark 7.8] that for a non-degenerate steady state, the eigenvalues of the corresponding Jacobian matrix can be ordered such that $\lambda_i=0$ for $i=1,\ldots,d$ and $\lambda_i\not=0$ for $i=d+1,\ldots,n$, where $n-d$ is the dimension of the stoichiometric space.
\end{remark}


\begin{proposition}\label{stability}
Assume as in Proposition~\ref{Nnon}. Let $\dot{c}=g_\tau(c)$ be the ODEs describing the core model and $(\dot{c},\dot{u})=f_\kappa(c,u)$ the ODEs describing the extension model, for any $\kappa$ that realizes $\tau$. Let $\lambda_j$,  $j=1,\ldots,n$, be the eigenvalues of the Jacobian of $g_\tau$ evaluated at  a non-degenerate positive steady state $c'$, ordered such that $\lambda_i=0$ for $i=1,\ldots,d$ and $\lambda_i\not=0$ for $i=d+1,\ldots,n$. Further, let $\alpha_{i}$, $i=1,\ldots,p$, be the eigenvalues of the matrix $A$ in equation~\eqref{auz}.

Then  $\kappa$ can be chosen such that the extension model has $N$ non-degenerate positive steady states in the same stoichiometric class, each corresponding to one of the $N$ steady states of the core model, and such that  the following holds. Let  $\nu_j$,  $j=1,\ldots,n+p$, be  the eigenvalues of the Jacobian of $f_\kappa$ evaluated at the steady state $(c^*,u^*)$ corresponding to $c'$.
Appropriately ordered the eigenvalues fulfil:
\begin{itemize}
\item[(i)] $\nu_i=0, \quad i=1,\ldots, d.$
\item[(ii)]  If  \,\,  $\Re(\lambda_i)\not=0$ \,\, then \,\,  $\emph{sign}(\Re(\nu_i))=\emph{sign}(\Re(\lambda_i))$,  \quad $i=d+1,\ldots,n.$
\item[(iii)] $\emph{sign}(\Re(\nu_{i}))=\emph{sign}(\Re(\alpha_{i-n}))<0$,  \quad $i=n+1,\ldots,n+p.$
\end{itemize}
Consequently:
\begin{itemize}
\item[(iv)] If a steady state in the core model is unstable  and $\Re(\lambda_i)>0$, for some $i=d+1,\ldots,n$, then the corresponding steady state in the extension model is unstable.
\item[(v)]  If a steady state in the core model is hyperbolic,  that is, $\Re(\lambda_i)\not=0$ for $i=d+1,\ldots,n$, then the corresponding steady state in the extension model is hyperbolic
\item[(vi)] If a hyperbolic steady state in the core model   is asymptotically stable then the corresponding steady state in the extension model is hyperbolic and asymptotically stable.
\end{itemize}
\end{proposition}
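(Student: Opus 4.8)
The plan is to work with the one-parameter family of rate constants $\kappa^{\theta}$ built in the proof of Proposition~\ref{Nnon}: rescaling the rate constants of the reactions in $\mmR_{E\rightarrow E}\cup\mmR_{E\rightarrow C}$ by $1/\theta$ leaves $t_{y\rightarrow y'}$ (hence the core ODE system $g_{\tau}$) unchanged, multiplies each $\mu_{j,y}$ by $\theta$, and turns the matrix $A$ of \eqref{auz} into $A/\theta$. Proposition~\ref{Nnon} already provides, for all sufficiently small $\theta>0$, the $N$ non-degenerate positive steady states $(c^{i}(\theta),u^{i}(\theta))$ of the extension model, one in a neighbourhood of each $(c^{i},0)$ and all lying in one stoichiometric class; this is the first assertion of the proposition, so what remains is the eigenvalue bookkeeping for the ODE Jacobian $J(f_{\kappa^{\theta}})$ at these points.

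First I would record the block form of this Jacobian. Differentiating the extension ODEs directly gives
\[
J(f_{\kappa^{\theta}})=\begin{pmatrix} P & \theta^{-1}Q \\ \partial_{c}z & \theta^{-1}A \end{pmatrix},
\]
where $P=\partial_{c}\dot c$ and $\partial_{c}z$ are evaluated at $(c^{i}(\theta),u^{i}(\theta))$ and do not involve the rescaled constants, $Q=\partial_{u}\dot c$ and $A$ are the (rescaling-free) constant blocks, and $z$ is the vector appearing in the proof of Theorem~\ref{fact:elim}. By Theorems~\ref{fact:elim} and~\ref{fact:subst}, substituting $u=-A^{-1}z(c)$ into $\dot c$ returns the core system $g_{\tau}$, which yields the identity $J_{c}(g_{\tau})=P-QA^{-1}\partial_{c}z$ at the steady state. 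Performing the eigenvalue-preserving change of coordinates $v=\theta^{-1}\delta u$ on the linearization puts it into the standard singularly perturbed form $\dot{\delta c}=P\,\delta c+Qv$, $\theta\,\dot v=(\partial_{c}z)\,\delta c+Av$, whose boundary-layer matrix $A$ is Hurwitz by Lemma~\ref{Amatrix} and whose reduced matrix is exactly $J_{c}(g_{\tau})$.

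The heart of the argument is the asymptotic splitting of the $n+p$ eigenvalues of $J(f_{\kappa^{\theta}})$ as $\theta\to 0^{+}$. Taking the Schur complement with respect to the block $\theta^{-1}A-\nu I_{p}$ (invertible for small $\theta$, since $0$ is not an eigenvalue of $A$ by \eqref{Spos}) factors the characteristic polynomial as
\[
\det\!\big(\theta^{-1}A-\nu I_{p}\big)\;\det\!\big(P-\nu I_{n}-Q(A-\theta\nu I_{p})^{-1}\partial_{c}z\big),
\]
and for $\nu$ ranging over a bounded set the second factor tends, using $c^{i}(\theta)\to c^{i}$ and continuity of $P,\partial_{c}z$, to $\det(J_{c}(g_{\tau})-\nu I_{n})$. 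Hence $n$ of the eigenvalues (the ``slow'' ones) converge to the eigenvalues $\lambda_{1},\dots,\lambda_{n}$ of $J_{c}(g_{\tau})$, while the remaining $p$ (the ``fast'' ones) are unbounded of the form $\alpha_{k}/\theta+o(\theta^{-1})$, $k=1,\dots,p$, with $\alpha_{k}$ the eigenvalues of $A$; equivalently, this is the classical eigenvalue dichotomy for the singularly perturbed system above.

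It then remains to read off (i)--(iii) for small $\theta$. By Lemma~\ref{Amatrix} each fast eigenvalue satisfies $\Re(\nu)=\Re(\alpha_{i-n})/\theta+o(\theta^{-1})<0$, giving~(iii); by continuity each slow eigenvalue near $\lambda_{i}$ with $\Re(\lambda_{i})\neq 0$ has real part of sign $\mathrm{sign}(\Re(\lambda_{i}))$, giving~(ii). For~(i), non-degeneracy of $(c^{i}(\theta),u^{i}(\theta))$ forces the full Jacobian to have exactly $d$ zero eigenvalues (Remark~\ref{zero-eigen}; equivalently, $\Gamma_{E}^{\perp}$ lies in the left kernel of $J$ while $J$ is non-singular on $\Gamma_{E}$), and since the $p$ fast eigenvalues and the $n-d$ slow eigenvalues near $\lambda_{d+1},\dots,\lambda_{n}\neq 0$ are all nonzero for small $\theta$, the $d$ zero eigenvalues are precisely those near $\lambda_{1}=\dots=\lambda_{d}=0$, so these are exactly $0$. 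Taking $\theta$ below the minimum of the finitely many thresholds (one per steady state and per condition) makes everything hold at all $N$ steady states simultaneously, and (iv)--(vi) follow at once: a positive $\Re(\lambda_{i})$ forces a positive $\Re(\nu_{i})$; hyperbolicity relative to the stoichiometric class is transferred by (ii)--(iii); and if $\Re(\lambda_{i})<0$ for all $i>d$ then every nonzero $\nu_{j}$ has negative real part, i.e.\ asymptotic stability relative to the stoichiometric class. The main obstacle is making the eigenvalue-splitting step rigorous — controlling the $\theta^{-1}$ blow-up of the fast eigenvalues and confirming that exactly $p$ of them escape to infinity, including when $A$ is non-diagonalizable — which is handled either by the explicit Schur-complement estimate above or by invoking standard singular-perturbation theory.
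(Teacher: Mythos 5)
Your proposal is correct, and it reaches the same endpoint as the paper — the signs of the eigenvalues of the full Jacobian are controlled by $J_{c}(g_{\tau})$ together with the matrix $A$ of \eqref{auz}, whose spectrum lies in the open left half-plane by Lemma~\ref{Amatrix} — but the central eigenvalue-splitting step is executed differently. The paper conjugates $f_{\kappa}$ by the block matrix $B'$ built from $A^{-1}$, obtains a block-triangular Jacobian for the transformed system, and asserts that the characteristic polynomial of $J(f_{\kappa}^{\theta})$ equals $\det(J_{c}(g_{\tau})-xI_{n})\det(A-xI_{p})+\theta h_{\kappa}(x)$, so that all $n+p$ eigenvalues converge to finite limits (the $\lambda_{i}$ and the $\alpha_{j}$) and continuity of roots in $\theta$ gives (i)--(iii). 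You instead work with the raw Jacobian $\bigl(\begin{smallmatrix} P & \theta^{-1}Q \\ \partial_{c}z & \theta^{-1}A\end{smallmatrix}\bigr)$, identify the fast--slow (singular-perturbation) structure via a Schur complement, and conclude that $n$ eigenvalues converge to those of $J_{c}(g_{\tau})$ while the remaining $p$ diverge like $\alpha_{k}/\theta$. Your scaling is in fact the more faithful one: the trace of the true Jacobian is $\mathrm{tr}(P)+\theta^{-1}\mathrm{tr}(A)\to-\infty$, so the $p$ fast eigenvalues cannot remain bounded; the paper's displayed factorization implicitly uses the unscaled $A$ in $B'^{-1}$ where $A/\theta$ should appear. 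Since $\mathrm{sign}(\Re(\alpha_{k}/\theta))=\mathrm{sign}(\Re(\alpha_{k}))$ for $\theta>0$, both versions deliver exactly the sign information (i)--(iii) that the proposition needs, and your treatment of (i) via the exact count of $d$ zero eigenvalues (Remark~\ref{zero-eigen}) and of (iv)--(vi) matches the paper's. The one place where your write-up remains a sketch — confirming that exactly $p$ eigenvalues escape to infinity with leading term $\alpha_{k}/\theta$, including for non-diagonalizable $A$ — is closed by the rescaling $\nu=\mu/\theta$, under which $\theta J$ converges to $\bigl(\begin{smallmatrix}0 & Q\\ 0 & A\end{smallmatrix}\bigr)$ with spectrum $\{0\}\cup\{\alpha_{1},\dots,\alpha_{p}\}$, so this is a genuine route to the result rather than a gap.
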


\begin{proof}
We will make use of Schur's formula for the determinant of a square matrix $M$ with block form
$$M=\begin{pmatrix} A & B \\ C & D \end{pmatrix}.$$
If $D$ is a square invertible matrix, then
$$\det(M)=\det(D)\det(A-BD^{-1}C),$$
and similarly, if $A$ is a square invertible matrix, then
$$\det(M)=\det(A)\det(D-CA^{-1}B).$$

We use the notation introduced in the proof of Proposition~\ref{Nnon} and proceed as in the proof of that proposition.  We will be interested in the eigenvalues of the function $f_{\kappa}(c,u)$ and will start by making some preparations for understanding these. 

Let $\kappa=\{k_{y\rightarrow y'}\}$ be  rate constants that realize $\tau$  (which exist by assumption). We consider these constants fixed.
The function $f_{\kappa}(c,u)$ is linear in $u$ and can be written in block form as
$$f_{\kappa}(c,u)=\left(\begin{array}{c} M'  \\  A  \end{array}\right)u+\left(\begin{array}{c} v' \\ z  \end{array}\right),$$ where $M'$ is a real $n\times p$ matrix, $v'$ a vector of length $n$ depending on $c$ only,  and $A,z$ are given as in the proof of Theorem  \ref{fact:elim}, equation~\eqref{auz}, for the given $\kappa$.
Further, the vector $z$ has length $p$ and depends on $c$ only, and the  $p\times p$ matrix $A$   is  invertible   with inverse $A^{-1}$.  Let $B'$ be the $(n+p)\times(n+p)$ matrix defined in block form by
 $$B'=\left(\begin{array}{cc} I_{n} & -M'A^{-1} \\ 0 & A^{-1}  \end{array}\right). $$
 This matrix is invertible  with inverse
 $$B'^{-1}=\left(\begin{array}{cc} I_{n} & M' \\ 0 & A  \end{array}\right). $$
It follows that the function $\bar{f}_{\kappa}(c,u)$ defined by 
 \begin{equation}\label{BD2}
 \bar{f}_{\kappa}(c,u):= B' f_{\kappa}(c,u)
 \end{equation}
  fulfils
 $$\bar{f}_{\kappa,i}(c,u)  =    \begin{cases} f_{\kappa,i}\Big(c_1,\dots,c_n,\sum_{y\in \mmC_C} \mu_{1,y} c^{y},\dots,\sum_{y\in \mmC_C} \mu_{p,y} c^{y}\Big)     & i=1,\ldots,n, \\
 u_i-\sum_{y\in \mmC_C} \mu_{i,y} c^{y} & i=n+1,\ldots,n+p.
\end{cases} $$
Then by construction and using Theorem \ref{fact:subst} we have
$$\bar{f}_{\kappa,i}(c,u) = g_{\tau,i}(c), \quad i=1,\dots,n. $$

Define the   rate constants $\kappa^{\theta}=\{k^{\theta}_{y\rightarrow y'}\}$ for  $\theta\in\R_+$, identically to how we did in the proof of Proposition~\ref{Nnon}. Then the  function $\bar{f}_{\kappa}^{\theta}(c,u)$ for the rate constants  $\kappa^{\theta}$ takes the form
$$\bar{f}_{\kappa,i}^{\theta}(c,u)  = \begin{cases}
 g_{\tau,i}(c) & i=1,\ldots,n, \\
 u_i-\theta\big( \sum_{y\in \mmC_C} \mu_{i,y} c^{y}\big) & i=n+1,\ldots,n+p.
\end{cases}$$
We observe that the Jacobian of $\bar{f}_{\kappa}^{\theta}$ at $(c,u)$ does not depend on $u$, as $\bar{f}_{\kappa,i}^{\theta}(c,u)$ is linear in $u$. Further,  it is a 
 block matrix with form
$$J_{c}(\bar{f}_{\kappa}^{\theta}) = \left(\begin{array}{cc}  J_{c}(g_{\tau}) & 0  \\ -\theta Z_c & I_p  \end{array}\right), $$
where $Z_c$ is a $p\times n$ matrix that depends on $c$  only.  Define the block matrix $B'_\theta$ through its inverse
$$B'^{-1}_\theta=\left(\begin{array}{cc} I_{n} & \frac{1}{\theta}M' \\ 0 & \frac{1}{\theta}A  \end{array}\right), $$
and note that  $B'_\theta, B'^{-1}_\theta$ correspond to the matrices $B',B'^{-1}$ for the rate constants $\kappa^{\theta}$.
It follows that the Jacobian $J_{c}(f_{\kappa}^{\theta})$ of $f_{\kappa}^{\theta}$ at $(c,u)$,  is
\begin{equation}\label{eq:Jctheta}
J_{c}(f_{\kappa}^{\theta}) = B'^{-1}_\theta\left(\begin{array}{cc}  J_{c}(g_{\tau}) & 0  \\ -\theta Z_c & I_p  \end{array}\right) = \left(\begin{array}{cc}  J_{c}(g_{\tau})-M' Z_c & \frac{1}{\theta}M'  \\ -A Z_c&\frac{1}{\theta} A  \end{array}\right),
\end{equation}
which does not depend on $u$.
Further, the characteristic polynomial $\chi_c^\theta(x)$ of $J_{c}(f_{\kappa}^{\theta})$   is
\begin{align*}
\chi_c^\theta(x) &=\det(J_c(f^\theta_\kappa)-xI_{n+p}) \\
 &=\det\begin{pmatrix} J_c(g_\tau) -M'Z_c -xI_n& \frac{1}{\theta}M' \\ -AZ_c & \frac{1}{\theta} A-xI_p\end{pmatrix} \\
 &=\frac{1}{\theta^p}\det\begin{pmatrix} J_c(g_\tau) -M'Z_c-xI_n & M' \\ - AZ_c & A-\theta xI_p\end{pmatrix}. 
 \end{align*}
As we are interested in the eigenvalues of $J_{c}(f_{\kappa}^{\theta})$, that is, the zeros of $\chi_c^\theta(x)$ for $\theta>0$, it suffices to consider $\theta^p\chi_c^\theta(x)$.

We now assume that the core model has $N\geq 1$ non-degenerate positive steady states in the same stoichiometric class for  $\tau$. Proposition~\ref{Nnon} guarantees that there exists $\phi\in\R_+$, such that for $\theta\in(0,\phi)$  the extension model with $\kappa^\theta$ has $N$ non-degenerate positive steady states in the same stoichiometric class.
 Let the steady states in the core model be $c^j(0)$, $j=1,\ldots,N$, with corresponding steady states in the extension model being $(c^j(\theta),u^j(\theta))$, $j=1,\ldots,N$. These vary continuously in $\theta$ such that $ (c^j(\theta),u^j(\theta))\to (c^j(0),0)$ for $\theta\to 0$ (by the construction in the proof of Proposition~\ref{Nnon}). Thus, by taking $\phi$ potentially smaller,  we might consider each $\theta\mapsto (c^j(\theta),u^j(\theta))$ as a continuous function from $[0,\phi]$ into $\R^{n+p}_+$.

Each steady state will be treated individually. Therefore, we fix one steady state and suppress the index $j$. We write $(c(\theta),u(\theta))$ and $(c(0),0)$ (or just $c(0)$) for the fixed steady states  in the extension  and the core model, respectively.

We next turn to the function $\theta^p\chi_c^\theta(x)$ evaluated at a steady state $(c(\theta),u(\theta))$. Specifically, we consider the function
\begin{equation}\label{eq:h}
g\colon [0,\phi]\times \CC\to \CC, \qquad g(\theta,x)=\det\begin{pmatrix} J_{c(\theta)}(g_\tau) -M'Z_{c(\theta)}-xI_n & M' \\ - AZ_{c(\theta)} & A-\theta xI_p\end{pmatrix},
\end{equation}
which is continuous in $(\theta,x)\in [0,\phi]\times \CC$.
Using Schur's formula we find
\begin{align*}
g(0,x) &=\det\begin{pmatrix} J_{c(0)}(g_\tau) -M'Z_{c(\theta)}-xI_n & M' \\ - AZ_{c(0)} & A\end{pmatrix} \\
&=\det(A)\det(J_{c(0)}(g_\tau)-M'Z_{c(0)}-xI_n+M'A^{-1}AZ_{c(0)}) \\
&=\det(A)\det(J_{c(0)}(g_\tau)-xI_n),
\end{align*}
such that the zeros of $g(0,x)$ precisely are the eigenvalues $\lambda_1,\ldots,\lambda_n$, repeated according to multiplicity, of $J_{c(0)}(g_\tau)$.

To prove the proposition we will  make use of Hurwitz's theorem:
 
 \vspace{.2cm}
 \noindent
 {\bf Theorem.} (Hurwitz's theorem)
{\it Let $f_k\colon V\to\CC$, $k\in\N$, be a sequence of holomorphic functions defined on a connected open set $V\subseteq \CC$. Assume $f_k$, $k\in\N$, converge uniformly on compact subsets of $V$ to a holomorphic function $f\colon V\to \CC$. If $f$ has a zero of order $m$ at $z_0\in V$ then for every small enough $\rho > 0$ and for sufficiently large $k\in\N$ (depending on $\rho$), $f_k$ has precisely $m$ zeros in the disk defined by $\lvert z-z_0\rvert < \rho$, including multiplicity. Furthermore, these zeros converge to $z_0$ as $k\to\infty$.}

  \vspace{.2cm}
 The functions $g(\theta,x)$ fulfil the requirements of the theorem, where $\theta$ plays the role of the index $k$.
 All   matrices in the definition of $g( \theta,x)$ are continuous matrix functions on $[0,\phi]\times \CC$. It is a consequence of the continuity of $\theta\mapsto (c(\theta),u(\theta))$ in $\theta\in[0,\phi]$.
 Further, since $\theta\in[0,\phi]$ is compact,  the coefficients of $g(\theta,x)$ as a polynomial in $x$ are bounded continuous functions.
 Let $V\subseteq \CC$ be an open  bounded and connected set containing all zeros of $g(0,x)$, that is, containing all eigenvalues $\lambda_i$, $i=1,\ldots,n$, of   $J_{c(0)}(g_\tau)$. We   argue that for any compact set $K\subseteq V$, $g(\theta,x)\to g(0,x)$, $x\in K$, converge uniformly as $\theta\to 0$. It follows from continuity and boundedness of the coefficients and that $g(\theta,x)$ is a polynomial in $x$.  Finally, a polynomial is a holomorphic function.
 
 We might now apply Hurwitz's theorem with $V\subseteq \CC$ as above to the holomorphic functions $f_k(x)=g(\theta_k,x)$ for any sequence $(\theta_k)_{k\in\N}$ with $\theta_k\to 0$ as $k\to \infty$. As the result will not depend on the particular choice of sequence, the subindex $k$ will be omitted. Using Hurwitz's theorem, it follows that for $\rho>0$, there exists  $\phi(\rho)<\phi$, such that the function $g(\theta,x)$, $\theta\in(0,\phi(\rho))$ has at least as many zeros as $g(0,x)$ (with multiplicity), and such that $\lvert\nu_i(\theta)-\lambda_i\rvert<\rho$, where $\nu_i(\theta)$, $i=1,\ldots,n$, are  roots of  $g(\theta,x)$. Note that these roots are  eigenvalues of $J_{(c(\theta),u(\theta))}(f^\theta_\kappa)$.
 
 In particular, by choosing $\rho$ small,  the sign of the real parts of $\nu_i(\theta)$ and $\lambda_i$ agree if the real part of $\lambda_i$ is non-zero, that is,  for all $i=1,\ldots,n$,
 \begin{equation}\label{eq:realpart}
\Re(\lambda_i)\not=0\quad \Rightarrow \quad\sign(\Re(\nu_i(\theta)))=\sign(\Re(\lambda_i)),
 \end{equation}
 by ordering the eigenvalues appropriately.
 
 From now on we redefine $\phi$  such that \eqref{eq:realpart} is the case for all $\theta\in(0,\phi]$ (by choosing $\phi$ sufficiently small). The number of eigenvalues for $J_{c(\theta)}(f^\theta_\kappa)$ is $n+p$ and we have just established a relationship between $n$ of these and  the $n$ eigenvalues of $J_{c(0)}(g_\tau)$. We will next study the remaining $p$ eigenvalues of $J_{c(\theta)}(f^\theta_\kappa)$.

We will show that the remaining eigenvalues of $J_{c(\theta)}(f^\theta_\kappa)$ are close to $\frac{\alpha_i}{\theta}$, $i=1,\ldots,p$, for small $\theta$, where $\alpha_1,\dots,\alpha_p$ are the eigenvalues of $A$. To formalise this claim we do the following.
Let $\Omega\subseteq \CC\setminus\{0\}$ be an open connected and bounded set containing the eigenvalues $\alpha_1,\dots,\alpha_p$, and let $0<\phi'<\phi$ be such that 
$$k(\theta,x):=\det\left(J_{c(\theta)}(g_\tau) -M'Z_{c(\theta)}-\frac{x}{\theta}I_n \right)\not=0$$
for all $\theta\in(0,\phi']$ and $x\in \Omega$. This is possible   because the entries of the matrices $J_{c(\theta)}(g_\tau)$ and $M'Z_{c(\theta)}$ are bounded on compact intervals of $\theta$, and 
the set  $\Omega$ is bounded and does not contain 0. Hence we might choose $\phi'$ such that $\lvert\frac{x}{\theta}\rvert$ is  large enough and $k(\theta,x)\neq 0$ for all  $\theta\in(0,\phi']$ and $x\in \Omega$.

Consider now $\theta^p\chi_{c(\theta)}^\theta(\frac{x}{\theta})$ for $\theta\in(0,\phi']$ and $x\in \Omega$.
We might apply the second variant of Schur's formula to obtain an alternative expression for the characteristic polynomial:
\begin{align}
\theta^p\chi_{c(\theta)}^\theta\Big(\frac{x}{\theta}\Big) &= k(\theta,x) \det\left( A- xI_p+ AZ_{c(\theta)}\left(J_{c(\theta)}(g_\tau) -M'Z_{c(\theta)}-\Big(\frac{x}{\theta}\Big)I_n\right)^{-1}M'\right) \nonumber \\
&= k(\theta,x) h(\theta,x), \label{eq:largeX}
\end{align}
where the function $h\colon (0,\phi']\times \Omega\to \CC$ is defined by the last equality.
 For all  $\theta\in(0,\phi']$ and $x\in \Omega$, $k(\theta,x)\neq 0$ and hence any root of $\chi_{c(\theta)}^\theta(\frac{x}{\theta})$  satisfies  $ h(\theta,x)=0$.

We write
\begin{align}\label{eq:h}
h(\theta,x)
 &=\det\left( A- xI_p + AZ_{c(\theta)}G(\theta,x)M'\right),
 \end{align}
where 
$$G(0,x)=0,\quad\text{and}\quad G(\theta,x)^{-1}=J_{c(\theta)}(g_\tau) -M'Z_{c(\theta)}-\frac{x}{\theta}I_n$$
for $(\theta,x)\in(0,\phi']\times \Omega$. The matrix $G(\theta,x)$ and its inverse exist on $(0,\phi']\times \Omega$ by construction.

We will first argue that the function $G(\theta,x)$ can be extended to a  continuous on $[0,\phi']\times \Omega$ and  takes the following form:
$$G(\theta,x)=\frac{\adj(G(\theta,x)^{-1})}{\det(G(\theta,x)^{-1})}=\theta \widetilde G(\theta,x),\quad \text{for}\quad (\theta,x)\in (0,\phi']\times \Omega,$$
where $\adj(D)$ is the adjugate matrix of a square matrix $D$, and $\widetilde G(\theta,x)$ is a matrix whose entries are rational functions in $x$. The first equality follows from Cramer's rule. For the second equality, note that the entries $G_{kk'}(\theta,x)$, $k,k'=1,\ldots,n$, of $G(\theta,x)$ take the form
$$G_{kk'}(\theta,x)=\frac{\sum_{i=0}^{n-1} \left(\frac{x}{\theta}\right)^{\!i} a_i(c(\theta))}{\sum_{i=0}^n \left(\frac{x}{\theta}\right)^{\!i} b_i(c(\theta))}.$$
By multiplication with $\theta^n$ in the numerator and denominator we obtain 
$$G_{kk'}(\theta,x)=\theta \,\frac{\sum_{i=0}^{n-1} x^i \theta^{n-1-i} a_i(c(\theta))}{\sum_{i=0}^{n} x^i \theta^{n-i} b_i(c(\theta))},$$
such that $G(\theta,x)=\theta \widetilde G(\theta,x)$ for some matrix 	$\widetilde G(\theta,x)$, as claimed.
 The leading term of the denominator is always  non-zero and independent of $\theta$: $x^n b_n(c(\theta))=(-x)^n$; hence the denominator does not vanish. The function is well defined for all $(\theta,x)\in (0,\phi']\times \Omega$ by construction.
Further, the coefficients $a_i(c(\theta)),b_i(c(\theta))$ are bounded and continuous in $\theta\in[0,\phi']$; hence it follows that $\widetilde G(\theta,x)$ is continuous on $[0,\phi']\times \Omega$ and that $\widetilde G(\theta,x)$ converges as $\theta\to 0$.  Consequently,  also $G(\theta,x)$ is continuous on $[0,\phi']\times \Omega$ and that $G(\theta,x)\to 0$ as $\theta\to 0$.

We now return to the function $h(\theta,x)$ in \eqref{eq:h}. Using $G(\theta,x)=\theta \widetilde G(\theta,x)$, we have
\begin{align*}
h(\theta,x)& =\det\left( A- xI_p + \theta AZ_{c(\theta)}\widetilde G(\theta,x)M'\right), 
\end{align*}
and in particular,
$$h(0,x)=\det\left( A- x I_p\right), \qquad\text{and}\qquad
h(0,\alpha_i) =\det\left( A- \alpha_i I_p\right)=0.$$

Next, we will apply Hurwitz' theorem to $h(\theta,x)$, in a way similar to what we did for $g(\theta,x)$.  The functions $h(\theta,x)$ are defined on $x\in \Omega$, an open connected  and bounded set. For any compact set $K\subseteq \Omega$, the functions $h(\theta,x)$ converge uniformly as $\theta\to 0$. It follows from continuity and boundedness of the coefficients $a_i(c(\theta)),b_i(c(\theta))$. 

The functions are also holomorphic on $\Omega$ as the denominator of $\widetilde G(\theta,x)$ never vanishes on $\Omega$ (by construction); hence the derivative with respect to $x$ exists on $\Omega$ which implies that the functions are  holomorphic on $\Omega\subseteq\CC$.
 Hurwitz's theorem guarantees  that for small $\theta$ the number of zeros (with multiplicity) of $h(\theta,x)$ is the same as the number of zeros (with multiplicity) of $h(0,x)$, which is $p$. 
That is, there exist $\beta_i(\theta)$, $i=1,\dots,p$ zeros of $h(\theta,x)$ such that the distance between $\beta_i(\theta)$ and $\alpha_i$ is as small as desired. Then by \eqref{eq:largeX}, $\nu_{n+i}(\theta):=\beta_i(\theta)/\theta$ are zeros of the characteristic polynomial  $\chi_{c(\theta)}^\theta(x)$, for $i=1,\dots,p$.
 By choosing $\theta$ potentially smaller,  say $\theta<\phi''<\phi'$, we are guaranteed that 
 \begin{equation}\label{eq:bla}
 \sign(\Re(\nu_{n+i}(\theta))=\sign(\Re(\alpha_i))<0,
 \end{equation}
since $\sign(\Re(\alpha_i))<0$ by Theorem \ref{Amatrix}. The eigenvalues $\nu_{n+i}(\theta)$, $i=1,\ldots,p$, can be all made  different  from the previously determined eigenvalues $\nu_{i}(\theta)$, $i=1,\ldots,n$, as $\lvert \nu_{n+i}(\theta)\rvert$ become arbitrary large for $\theta$ arbitrary small.

We are now ready to prove the statements (i)-(iii). Let $\nu_i$, $i=1,\ldots,n+p$, be the eigenvalues of the extension model for some $\theta<\phi''$. By assumption the steady state $c(0)$ of the core model is non-degenerate. Since the dimension of the stoichiometric subspace of the core model is $n-d$ and the steady state is non-degenerate, then precisely $d$ of the eigenvalues $\lambda_i$ of the core model are zero (Remark~\ref{zero-eigen}). The dimension of the stoichiometric subspace of the extension model is $n+p-d$ (Lemma~\ref{fact31})  and  $d$ of the eigenvalues $\nu_i$ are zero. 
Using that $\nu_{n+i}(\theta)\neq 0$ for $i=1,\dots,p$, $d$ of the eigenvalues $\nu_i(\theta)$, $i=1,\dots,n$ are zero, and hence correspond precisely to the $d$ zero eigenvalues of $J_{c(0)}(g_\tau)$. We assume that these are ordered such that 
the zero eigenvalues are the first $d$.
Together with \eqref{eq:realpart}, this proves (i) and (ii). Item (iii) follows from \eqref{eq:bla}. Since there is a finite number of eigenvalues for all $N$ steady states, $\phi''$ can be chosen such that the statement is true for all eigenvalues.

Assume now $\theta$ is chosen such that (i)-(iii) are true. Consider an unstable steady state for the core model with $\Re(\lambda_{i})>0$ for some $i$. Then also $\Re(\nu_{i})>0$  according to (ii). Positivity of the real part of an eigenvalue implies that the steady state us unstable \cite{perko}, hence   the steady state in the extension model is unstable. It proves (iv). A steady state is hyperbolic if all eigenvalues have non-zero real part \cite{perko}. Then (v) follows from (ii) and (iii). A hyperbolic steady state is asymptotically stable if and only if all eigenvalues have negative real parts \cite{perko}. It follows that if a steady state in the core model is asymptotically stable then $\Re(\lambda_{i})<0$ for all $i=d+1,\ldots,n$. According to (ii) we also have $\Re(\nu_{i})<0$ for all $i=d+1,\ldots,n$. Together with (iii) the steady state in the extension model is asymptotically stable. It proves (vi).
\end{proof}

\section{Realization of rate constants}
\label{matching}

\subsection{Canonical models}\label{canonical}
Consider a  core model with species set $\mmS_C$ and set of reactions $\mmR_C$. Consider a canonical extension model with dead-end at some core complex $y^*$. That is, the extension model has set of species $\mmS_E = \mmS_C\cup \{Y\}$  ($Y$ is an intermediate), and set of reactions   $\mmR_E=\mmR_C \cup \{y^*\rightarrow Y,\ Y\rightarrow y^* \}$.

Consider  some choice of rate constants $\tau=\{t_{y\rightarrow y'}\}$, $y\rightarrow y'\in\mmR_C$ in the core model and conserved amounts $T_1,\dots,T_d$ corresponding to some choice of basis of $\Gamma_C^{\perp}$.
We prove here that there exist rate constants $\kappa=\{k_{y\rightarrow y'}\}$ for the extended model realizing $\tau$, that is, such that equation \eqref{realization} holds
$$t_{y\rightarrow y'} =k_{y\rightarrow y'}  + \sum_{j=1}^p  k_{Y_j\rightarrow y'}\mu_{j,y}.$$
 
In this case, there is only one intermediate and we have
$$\mu_{Y,y^*} = \frac{k_{y^*\rightarrow Y}}{k_{Y\rightarrow y^*}}. $$
Hence, for all reactions $y\rightarrow y'$ in $\mmR_C$, we have
$$t_{y\rightarrow y'} = k_{y\rightarrow y'}$$
and realization parameters obviously exist.

Further, any conservation law in the extended model that is not a conservation law in the core model takes the form 
$$ \widetilde{\omega} = \omega + aY$$
for some constant $a$. Written as an equation in the core species, we have
 $$ \widetilde{\omega} = \omega + a   \frac{k_{y^*\rightarrow Y}}{k_{Y\rightarrow y^*}} c^{y^*}.$$
We note that by varying the two rate constants $k_{y^*\rightarrow Y},k_{Y\rightarrow y^*}>0$ the coefficient of $c^{y^*}$ takes any desired non-zero value (if $a\neq 0$).

\subsection{Non-realizable constants}\label{sec:non-realizable}  
Consider the core model with reactions:
$$\xymatrix@R=12pt@C=35pt{& y_3 \\ y_1  \ar[ur]^{t_1}   \ar[r]^{t_2} \ar[dr]_{t_3} & y_4  & y_2. \ar[ul]_{t_4} \ar[l]_{t_5} \ar[dl]^{t_6} \\ & y_5  }$$
Consider the following extension model:
$$\xymatrix@R=3pt@C=35pt{   &   & y_3 \\ y_1   \ar[dr]^{k_1}  \\ & Y  \ar[uur]^{k_3}   \ar[r]^{k_4} \ar[ddr]_{k_5} & y_4  \\  y_2 \ar[ur]_{k_2} \\ & & y_5  }$$

Then we claim that this extension model cannot realize all choices of rate constants of the core model. If all rate constants of the core model, $t_1,\dots,t_6$ were realizable, then we could find rate constants $k_1,\dots,k_5$ such that equation \eqref{realization} holds, that is
\begin{align}\label{realization:ex}
t_1= & \frac{k_1k_3}{k_3+k_4+k_5}, & t_2=& \frac{k_1k_4}{k_3+k_4+k_5}, & t_3=& \frac{k_1k_5}{k_3+k_4+k_5},\\ t_4=& \frac{k_2k_3}{k_3+k_4+k_5}, & t_5=&\frac{k_2k_4}{k_3+k_4+k_5}, & t_6=&\frac{k_2k_5}{k_3+k_4+k_5}. \nonumber
\end{align}
Choose for instance
\begin{equation}\label{ex:t} t_1 = 3,\quad  t_2= 4,\quad t_3=5,\quad t_4=6,\quad t_5=8,\quad t_6=15.\end{equation}
Using $t_1,t_4$ and \eqref{realization:ex} we see that $k_2=2k_1$. Using $t_3$ and $t_6$ we see that $k_2=3k_1$ and hence system \eqref{realization:ex}  has no positive solution.

This conclusion can also be derived by noting that the core model has six independent parameters, $t_1,\ldots,k_t$, whereas the extension model has only five, $k_1,\ldots,k_5$.

\subsection{Deciding on realizability of rate constants}
In some cases, manual inspection suffices to decide whether an extension model can realize all choices of rate constants for the core model. However, it would be desirable to have an automated procedure to decide this.

We give here a necessary criterion that makes use of computational algebra tools, namely, Gr\"obner bases.
The realizability problem can be stated as follows. Let $m_C,m_E$ be the number of reactions in the core and extension models respectively. Consider the map
\begin{eqnarray*}
\R^{m_E}_+ & \xrightarrow{T} & \R^{m_C}_+ \\
\{k_{y\rightarrow y'}\} & \mapsto &    \left\{k_{y\rightarrow y'}  + \sum_{j=1}^p  k_{Y_j\rightarrow y'}\mu_{j,y}\right\}
\end{eqnarray*}
(assuming that the reaction sets are ordered). Asking for all choices of rate constants in the core model to be realizable in the extension model is equivalent to requiring that $T$ is a surjective map over the positive orthant $\R^{m_E}_+$. A minimal criterion is that $m_E\geq m_C$ (see Section~\ref{sec:non-realizable} for an example).

Since $\mu_*$ are rational functions in the rate constants $k_*$, $T$ extends to a rational map over $\R^{m_E}$ and the Zariski closure of the image of $T$ is a real algebraic variety that is defined by some ideal $I$ of $\R[t_1,\dots,t_{m_C}]$ \cite{cox:little:shea}. That is, $\overline{\im(T)}=V(I)$.
If $I$ is not the zero ideal,  then $T$ is not surjective over the positive orthant. Thus, for $T$ to be surjective,  a necessary condition is that $I$ is the zero ideal.

The ideal $I$ can be obtained using the \emph{Implicitization procedure} as described in \cite[\S 3]{cox:little:shea}. Let $k_1,\dots,k_{m_E}$, $t_1,\dots,t_{m_C}$ be variables corresponding to the rate constants in the core and extension models, respectively. Let $T=(T_1,\ldots,T_{m_C})$ and 
write the components $T_i$ as a quotient of polynomials in $k_1,\dots,k_{m_E}$: $T_i = f_i/g_i$. Let $J$ be the ideal of $\R[z,k_1,\dots,k_{m_E},t_1,\dots,t_{m_C}]$ given by
$$J=\langle g_1t_1-f_1,\dots,g_{m_C}t_{m_C} - f_{m_C},1- g_1\cdot \ldots \cdot g_{m_C} z\rangle.$$
The last polynomial can be dropped if all $g_i=1$ and if some $g_i$ are repeated, we consider them only once. Then $I$ is the elimination ideal
$$ I = J \cap \R[t_1,\dots,t_{m_C}].$$
A set of generators of $I$ is given by the polynomials involving $t_1,\dots,t_{m_C}$ only in the Gr\"obner basis of $J$ with the lexicographical order on the order of variables $z>k_1>\dots>k_{m_E}>t_1>\dots > t_{m_C}$. Therefore, if such a Gr\"obner basis has no polynomial in $t_1,\dots,t_{m_C}$, then any choice of rate constants of the core model is realizable in the extension model.

\medskip
{\bf Example. } Consider the example given in subsection \ref{sec:non-realizable}. The map $T$ is 
\begin{eqnarray*}
\R^{5}_+ & \xrightarrow{T} & \R^{6}_+ \\
(k_1,\dots,k_5)  & \mapsto &  \left(\frac{k_1k_3}{k_3+k_4+k_5},\frac{k_1k_4}{k_3+k_4+k_5},\frac{k_1k_5}{k_3+k_4+k_5},\frac{k_2k_3}{k_3+k_4+k_5},\frac{k_2k_4}{k_3+k_4+k_5},\frac{k_2k_5}{k_3+k_4+k_5}\right).
\end{eqnarray*}
It is clear that this map cannot be surjective over the positive orthant, but in general this will not  be the case.
The ideal $J$ is in this case:
$$J = \langle t_1-k_1k_3,t_2-k_1k_4,t_3-k_1k_5,t_4-k_2k_3,t_5-k_2k_4,t_6-k_2k_5,1-(k_3+k_4+k_5) z \rangle.  $$
Using Maple, we compute  the Gr\"obner basis $G_J$ of $J$ with the lexicographic order that orders $k_*,z$ larger than $t_*$  and obtain:
\begin{align*}
G_J = & \{-t_{{5}}t_{{3}}+t_{{6}}t_{{2}},-t_{{4}}t_{{3}}+t_{{6}}t_{{1}},-t_{{4}}t_{{2}}+t_{{5}}t_{{1}}, 
-t_{{5}}k_{{5}}+t_{{6}}k_{{4}},-k_{{5}}t_{{2}}+t_{{3}}k_{{4}},-t_{{4}}k_{{5}}+t_{{6}}k_{{3}},\\ &  -t_{{4}}k_{{4}}+t_{{5}}k_{{3}},-k_{{5}}t_{{1}}+t_{{3}}k_{{3}},-k_{{4}}t_{{1}}+k_{{3}}t_{{2}},-t_{{6}}+k_{{2}}k_{{5}},-t_{{5}}+k_{{2}}k_{{4}},-t_{{4}}+k_{{2}}k_{{3}},\\ & -t_{{3}}k_{{2}}+t_{{6}}k_{{1}}, -t_{{2}}k_{{2}}+t_{{5}}k_{{1}},-t_{{1}}k_{{2}}+t_{{4}}k_{{1}},-t_{{3}}+k_{{1}}k_{{5}},-t_{{2}}+k_{{1}}k_{{4}},-t_{{1}}+k_{{1}}k_{{3}}, \\ & zt_6+zt_5+zt_4-k_2, zt_6+zt_5+zt_4-k_1,-1+zk_3+zk_4+zk_5\}.
\end{align*}
It follows that 
$$I =\langle -t_{{5}}t_{{3}}+t_{{6}}t_{{2}},-t_{{4}}t_{{3}}+t_{{6}}t_{{1}},-t_{{4}}t_{{2}}+t_{{5}}t_{{1}} \rangle \neq 0 $$
 and hence $T$ is not surjective and there exist non-realizable rate constants. Observe that the polynomials in $I$ do not vanish when evaluated in the rate constants in \eqref{ex:t}.

Consider again the  core model  in subsection \ref{sec:non-realizable} but now with the extension model given by
$$\xymatrix@R=1pt@C=35pt{ y_1   \ar[dr]^{k_1}  &  &  y_3 \\ & Y_1 \ar[ur]^{k_3} \ar[dr]_{k_4}  \\   y_2 \ar[ur]_{k_2} &  & y_4 }\qquad\qquad \xymatrix@R=1pt@C=35pt{ y_1   \ar[dr]^{k_5}  &  &  y_4 \\ & Y_2 \ar[ur]^{k_7} \ar[dr]_{k_8}  \\   y_2 \ar[ur]_{k_6} &  & y_5.}$$
The map $T$ is given by
\begin{eqnarray*}
\R^{8}_+ & \xrightarrow{T} & \R^{6}_+ \\
(k_1,\dots,k_8)  & \mapsto &  \left(\frac{k_1k_3}{k_3+k_4},\frac{k_1k_4}{k_3+k_4}+\frac{k_5k_7}{k_7+k_8},\frac{k_5k_8}{k_7+k_8},\frac{k_2k_3}{k_3+k_4},\frac{k_2k_4}{k_3+k_4}+\frac{k_6k_7}{k_7+k_8},\frac{k_6k_8}{k_7+k_8}\right).
\end{eqnarray*}
The ideal $J$ is in this case:
$$J = \langle t_1-k_1k_3,t_2-k_1k_4-k_5k_7,t_3-k_5k_8,t_4-k_2k_3,t_5-k_2k_4-k_6k_7,t_6-k_6k_8,(k_3+k_4)(k_7+k_8)z\rangle.  $$
We proceed as above and compute the Gr\"obner basis $G_J$ in Maple.
In this case, we obtain  that 
$$I=0,$$
indicating that all rate constants might be realizable in this extension model. In fact, in this case we find that
$$\frac{t_3}{t_6}=\frac{k_5}{k_6},\quad  \frac{t_1}{t_4}=\frac{k_1}{k_2},\quad t_2+t_5=(t_1+t_4)\frac{k_4}{k_3}+(t_3+t_6)\frac{k_7}{k_8},$$
which has a positive solution $(k_1,\ldots,k_8)\in\R^8_+$ for any positive choice of $(t_1,\ldots,t_6)\in\R^6_+$.


\section{Information on the figures}

\subsection{Figure 2. Computation of the steady state curves}
Consider the reaction network with reactions
$$S_0+E \xrightarrow{k_1} S_1+E,\qquad  S_1+E \xrightarrow{k_2} S_2+E, \qquad S_0+E \xrightarrow{k_3} S_2+E $$
and 
$$S_2\xrightarrow{k_4} S_1,\qquad S_1\xrightarrow{k_5} S_0,\qquad 0\xrightarrow{k_6} S_2,\qquad S_2\xrightarrow{k_7} 0.$$
The mass-action ODE system is:
\begin{align*}
\dot{[S_0]} & =  -k_1[S_0][E] -k_3[S_0][E]  + k_5[S_1],  \\
\dot{[S_1]} & =  -k_2[S_1][E] -k_5[S_1] + k_1[S_0][E]+ k_4[S_2],  \\
 \dot{[S_2]} & = -k_4[S_2] + k_2[S_1][E] + k_3[S_0][E]+k_6 - k_7[S_2],   \\ 
 \dot{[E]} & = 0.
\end{align*}
Note that 
$$ \dot{[S_0]} +\dot{[S_1]} + \dot{[S_2]} = k_6 - k_7[S_2].$$
Hence at steady state
$$[S_2] = \frac{k_6}{k_7} $$
and the steady state value is independent of the other concentrations at steady state.
Using $\dot{[S_0]} =0$ we obtain that 
$$[S_1] =  \frac{k_1+k_3}{k_5} [S_0][E].$$
Therefore, the steady state concentration of $[S_1]$ is determined by those of $[E]$ and $[S_0]$.

Finally, using $\dot{[S_1]} =0$ we obtain that 
\begin{align*}
0 & =  -k_2[S_1][E] -k_5[S_1] + k_1[S_0][E]+ k_4[S_2]  \\ & = - \frac{k_2(k_1+k_3)}{k_5} [S_0][E]^2 -k_3[S_0][E] + \frac{k_4k_6}{k_7} 
\end{align*}
and hence
$$[S_0] = \frac{ \frac{k_4k_6}{k_7} }{[E]\left(\frac{k_2(k_1+k_3)}{k_5}[E] +k_3\right)  } =  \frac{ \frac{k_4k_6k_5}{k_7k_2(k_1+k_3)} }{\left([E] +\frac{k_3k_5}{k_2(k_1+k_3)}\right)[E]}.  $$
We have obtained the expression in Equation (9) in the main text, with $a_1= \frac{k_4k_6k_5}{k_7k_2(k_1+k_3)} $ and $a_2=\frac{k_3k_5}{k_2(k_1+k_3)}$.

The extended model for Figure 2 consists of the addition of the intermediate $Y$ with the reactions
$$S_0 + E \xrightarrow{k_8} Y,\qquad  Y\xrightarrow{k_9}S_0+E.$$
We showed in subsection \ref{canonical} that at steady state
$$[Y] = k_8/k_9 [S_0][E].  $$
Therefore, with the notation in the main text, $a_3=k_8/k_9$.

\subsection{Figure 3. Computation of the steady states}
We consider a choice of complexes $y_1,y_2,y_3$ in Figure 3 that represents a two-site phosphorylation event and compute the maximal number of steady states that the core model and the canonical models in Figure 3 can have.
Through this section, we are only interested in positive steady states and hence, when saying steady state we implicitly mean positive steady state.

Specifically, we consider a substrate $S$ that has two phosphorylation sites, with phosphorylation and dephosphorylation being sequential. 
We let $S_0$ denote the unphosphorylated substrate, $S_1$ denote the substrate with the first site phosphorylated and $S_2$ 
denote the fully phosphorylated form.
Phosphorylation reactions are
\begin{equation}\label{reactions}
 S_0+E\xrightarrow{k_1} S_1+E\qquad S_1+E\xrightarrow{k_2} S_2+E\qquad S_0+E\xrightarrow{k_3} S_2+E
 \end{equation}
where $E$ is a kinase. By setting $y_1=S_0+E$, $y_2=S_1+E$ and $y_3=S_2+E$ this model is an instance of the core model in 
the main text, Figure 1A. This model however accumulates at steady state all the substrate concentration in $S_2$. In order to have a more interesting analysis, 
we add simple
dephosphorylation reactions
\begin{equation}\label{reactions2}
S_2\xrightarrow{k_4} S_1\xrightarrow{k_5} S_0.
\end{equation}

We want to determine how many steady states can the different canonical representatives of each class have.
If we were simply interested in determining if the system can have multiple steady states or not, then we could use one of the several available automatized methods (e.g. \cite{crnttoolbox,feliu-inj}).

\medskip

{\bf Core model. }
The ODEs of the core model (Figure 1A) with reactions in \eqref{reactions} and \eqref{reactions2} are the following:
\begin{align}
\dot{[S_0]} & =  -k_1[S_0][E] -k_3[S_0][E]  + k_5[S_1],  \label{eq:1} \\
\dot{[S_1]} & =  -k_2[S_1][E] -k_5[S_1] + k_1[S_0][E]+ k_4[S_2],  \\
 \dot{[S_2]} & = -k_4[S_2] + k_2[S_1][E] + k_3[S_0][E],  \label{eq:3}  \\ 
 \dot{[E]} & = 0.
\end{align}
This system has two conservation laws:
$$S_{tot} =  [S_0]+[S_1]+[S_2],\qquad E_{tot}=[E], $$
that is, the concentration of kinase is clearly constant.

The steady-state equations are obtained by setting the left-hand side of the ODEs to zero. 
Using the steady-state equation derived from \eqref{eq:1} we obtain that
$$[S_1] = \frac{(k_1+k_3)E_{tot}}{k_5} [S_0],$$
and then using this expression and \eqref{eq:3} we have
$$[S_2] = \left(\frac{k_2(k_1+k_3)E_{tot}^2}{k_4k_5} + \frac{k_3E_{tot}}{k_4}\right)  [S_0].$$
Using the conserved amount $S_{tot}$ we obtain that at steady state
$$ [S_0] =S_{tot} \left( 1+ \frac{k_1k_4+k_3k_4+k_3k_5}{k_4k_5} E_{tot} +\frac{k_2(k_1+k_3)}{k_4k_5}E_{tot}^2\right)^{-1}.$$ 
Given any positive rate constants $k_*$ and positive conserved amounts $E_{tot}$ and $S_{tot}$, $[S_0]$ is positive and uniquely determined at steady state by this expression. Further, the steady-state value of $[S_0]$ determines the steady-state values of $
[S_1]$ and $[S_2]$ using the expressions above.
We conclude that the core model has  one positive steady state for each choice of rate constants and conserved amounts.

\medskip
{\bf Extension model 1. } We consider the canonical representative of the class in the first column of Figure 3. The reactions of the model are those in \eqref{reactions} and \eqref{reactions2} together with
$$ S_0 +E \xrightarrow{k_6} Y,\qquad Y\xrightarrow{k_7} S_0+E.$$
The ODEs of the model are the following:
\begin{align}
\dot{[S_0]} & =  -k_1[S_0][E] -k_3[S_0][E] - k_6[S_0][E]  + k_5[S_1] + k_7[Y],  \label{eq:1a} \\
\dot{[S_1]} & =  -k_2[S_1][E] -k_5[S_1] + k_1[S_0][E]+ k_4[S_2],  \\
 \dot{[S_2]} & = -k_4[S_2] + k_2[S_1][E] + k_3[S_0][E],  \label{eq:3a}  \\ 
 \dot{[E]} & = -k_6[S_0][E] +k_7[Y],\\
\dot{[Y]} &=  -k_7[Y] +k_6[S_0][E] \label{eq:5a}.
\end{align}
This system has two conservation laws:
$$S_{tot} = [S_0]+[S_1]+[S_2]+[Y],\qquad E_{tot}=[E]+[Y]. $$
Observe that the conservation laws of this system and the conservation laws of the core model are in correspondence, as indicated by Theorem \ref{fact:cons}.
Isolating $[E]$ from the kinase conservation law we have
$$ [E] = E_{tot} -[Y],$$
which is positive provided $0< [Y] < E_{tot}$. 
From the steady-state equation derived from \eqref{eq:5a} we have that
$$[S_0]=\frac{k_7[Y]}{k_6(E_{tot} - [Y])}. $$
Using the steady-state equation corresponding to \eqref{eq:1a}+\eqref{eq:5a} and \eqref{eq:3a} we iteratively obtain
$$[S_1] = \frac{(k_1+k_3)k_7}{k_5k_6}[Y],\quad  [S_2] =  \frac{k_3k_7}{k_4k_6} [Y] +\frac{k_2(k_1+k_3)k_7}{k_4k_5k_6}[Y]
(E_{tot} - [Y]).$$

Given $0< [Y] < E_{tot}$, all the steady-state expressions above are positive. The value at steady state of $[Y]$ is found by 
imposing the substrate conservation law ($S_{tot}$) to be fulfilled:
\begin{align}\label{stot}
 S_{tot} & = [Y] +  \frac{k_7[Y]}{k_6(E_{tot} - [Y])}  + \frac{(k_1+k_3)k_7}{k_5k_6}[Y] +\frac{k_3k_7}{k_4k_6} [Y] +
\frac{k_2(k_1+k_3)k_7}{k_4k_5k_6}[Y](E_{tot} - [Y]). 
 \end{align}
Let us focus on the right-hand side of this expression such that $S_{tot}=\varphi([Y])$.
The function $\varphi$ is continuous for $0< [Y] < E_{tot}$ and tends to infinity as $[Y]$ approaches $E_{tot}$. When $[Y]=0$ we 
further have $\varphi(0)=0$. It follows that for any given $S_{tot}$ there exists $[Y]\in [0,E_{tot})$ such that $S_{tot}=\varphi([Y])$ 
and hence a positive steady state exists.

If the function $\varphi$ is always increasing, then there is exactly one. If it can decrease in some part, then there can be more than one. 
Note that equation \eqref{stot}  can be rewritten as a polynomial of degree $3$ in $[Y]$ such that 
the roots in  $0< [Y] < E_{tot}$ are the positive steady states.
  It follows that there can be at most three positive steady states.

Each summand in \eqref{stot} is an increasing function of $[Y]$, except for the last summand. Hence, it is not clear whether $\varphi$ can be decreasing in some interval. The derivative of $\varphi$ with respect to $[Y]$ is 
$$\varphi'([Y]) =1+\frac{k_7}{k_6}\left( \frac{E_{tot}}{(E_{tot} - [Y])^2} + \frac{k_1+k_3}{k_5}+\frac{k_3}{k_4} +\frac{k_2(k_1+k_3)}
{k_4k_5}(E_{tot} - 2[Y])\right). $$
This derivative is negative if and only if
$$1+\frac{k_7}{k_6}\left( \frac{E_{tot}}{(E_{tot} - [Y])^2} + \frac{k_1+k_3}{k_5}+\frac{k_3}{k_4}\right)  <  \frac{k_7}
{k_6}\frac{k_2(k_1+k_3)}{k_4k_5}(2[Y]-E_{tot}).
 $$
The left-hand side of the inequality is an increasing function in $[Y]$ that tends to infinity as $[Y]$ approaches $E_{tot}$. 
The  right-hand side of the inequality is a line with positive slope that takes a negative value at $[Y]=0$ and crosses the $x$-axis at $[Y]=E_{tot}/2$.  If the line intersects the left-hand side curve, then there will be multiple steady states. The left-hand side does 
not depend on $k_2$ while the slope of the line increases with increasing  $k_2$. By fixing all constants except $k_2$ and letting $k_2$ vary arbitrarily, the two curves must meet. Except if they meet tangently, the two curves will cross in two points, between which $\varphi$ decreases. In this case, $\varphi$  increases initially, decreases for some interval, and increases 
to infinity afterwards. It follows that there are values of $S_{tot}$ for which the system has three steady states. 

Specific rate constants for which the system has three positive steady states are:
\begin{equation}\label{parameters}
k_i=1,\ i\neq 2,\quad k_2=2,\quad E_{tot}=10,\quad S_{tot}=100. 
\end{equation}
The three steady states correspond to $[Y]=6.5-\sqrt{11},\ 8,\ 6.5+\sqrt{11}$.

\medskip
{\bf Extension model 2. } We consider the canonical representative of the extension model class in the second column of Figure 3. The reactions of the model are those in \eqref{reactions} and \eqref{reactions2} together with
$$ S_0 +E \xrightarrow{k_6} Y_1,\qquad Y_1\xrightarrow{k_7} S_0+E,\qquad S_1 +E \xrightarrow{k_8} Y_2,\qquad 
Y_2\xrightarrow{k_9} S_1+E.$$
This model can be seen as an extension model of the canonical representative in column 1 (which is taken as the core model).  Canonical models always realize parameters of the core model (see Section~\ref{matching}).  Therefore, since  the canonical representative in column 1 admits multiple steady states, then so does  the canonical representative  in the second column of Figure 3 and it has at least 3 steady states. 

To show that it has at most 3 steady states we proceed as above. We consider the ODE system and we iteratively eliminate variables to obtain that at steady state
$$
 E =\frac{k_7k_9E_{tot}}{k_7k_9+k_6k_9[S_0] + k_7k_8[S_1]},
\qquad  [S_0]= \frac{  k_5k_7[S_1](k_9+k_8[S_1])}{k_9((k_1+k_3)k_7E_{tot} - k_5k_6[S_1])},$$
and 
$$S_{tot} = [S_0]+[S_1]+\left(\left(\frac{k_3}{k_4}+\frac{k_6}{k_7}\right)[S_0]  + \left(\frac{k_2}{k_4}+\frac{k_8}
{k_9}\right)[S_1]\right)[E].$$
By writing the expression above as a polynomial in $[S_1]$, we obtain a polynomial of degree 3 and hence at most three positive steady states can occur.

\medskip
{\bf Extension model 3. } We consider the canonical 
representative of the class of extension models   in the third column of Figure 3. 
The reactions of the model are those in \eqref{reactions} and \eqref{reactions2} together with
$$ S_1 +E \xrightarrow{k_6} Y_1,\qquad Y_1\xrightarrow{k_7} S_1+E, \qquad S_2 +E \xrightarrow{k_8} Y_2,\qquad 
Y_2\xrightarrow{k_{9}} S_2+E.$$
The ODEs of the   model  are the following:
\begin{align}
\dot{[S_0]} & =  -k_1[S_0][E] -k_3[S_0][E]  + k_5[S_1] ,  \label{eq:1b} \\
\dot{[S_1]} & =  -k_2[S_1][E] -k_5[S_1] - k_6[S_1][E] + k_1[S_0][E]+ k_4[S_2]+ k_7[Y_1],  \\
 \dot{[S_2]} & = -k_4[S_2] + k_2[S_1][E] - k_8[S_2][E] + k_3[S_0][E]+ k_9[Y_2],  \label{eq:3b}  \\ 
 \dot{[E]} & = -k_6[S_1][E] +k_7[Y_1] - k_8[S_2][E] + k_9[Y_2],\\
 \dot{[Y_1]} &=  k_6[S_1][E] - k_7[Y_1], \label{eq:6b} \\
\dot{[Y_2]} &=  k_8[S_2][E] - k_9[Y_2]. \label{eq:5b}
\end{align}
This system has two conservation laws:
$$S_{tot} = [S_0]+[S_1]+[S_2]+[Y_1]+[Y_2],\qquad E_{tot}=[E]+[Y_1]+[Y_2]. $$
From the steady-state equations derived from \eqref{eq:6b} and \eqref{eq:5b} we have that
$$ [Y_1] = \frac{k_6}{k_7} [S_1][E],\qquad  [Y_2] = \frac{k_8}{k_9} [S_2][E].$$
These expressions are increasing in $[S_1], [S_2]$, respectively.
Using the conservation law for $E_{tot}$ and the two expressions for $[Y_1],[Y_2]$, we obtain
$$ [E] =\frac{k_7k_9E_{tot}}{k_7k_9+k_6k_9[S_1] + k_7k_8[S_2]},$$
which is positive and decreasing in both $[S_1]$ and $[S_2]$.

Using the steady-state equation corresponding to \eqref{eq:1b} we have
$$[S_0]=  \frac{k_5[S_1]}{(k_1+k_3)[E]}, $$
which after substitution of $[E]$ with the expression for $[E]$ above, is increasing in $[S_1]$ and $[S_2]$.

We finally use the steady-state equation corresponding to \eqref{eq:3b}+\eqref{eq:5b} to obtain
$$k_4[S_2] =  \frac{k_2k_7k_9E_{tot}[S_1]}{k_7k_9+k_6k_9[S_1] + k_7k_8[S_2] }+ \frac{k_3k_5}{k_1+k_3}[S_1]. $$
Fix a value of $[S_1]$. The left-hand side of this equality is the line through the origin with slope $k_4$ in $[S_2]$. The right-hand side is a positive decreasing function of $[S_2]$ defined for all positive values of $[S_2]$. The function takes a positive value for $[S_2]=0$. It follows that the expressions on the two sides of the 
equality intersect in exactly one point for each fixed $[S_1]$. This is the steady-state value of $[S_2]$ corresponding to a given $[S_1]$.  Additionally, the left curve is independent of $[S_1]$ while the right curve increases in $[S_1]$. As a consequence, $[S_2]$ increases as a function of $[S_1]$.

Using the remaining conserved amount $S_{tot}$ we have that 
$$S_{tot} = [S_0]+[S_1]+[S_2]+[Y_1]+[Y_2],$$
where the right-hand side is expressed as an increasing positive function $\varphi$ of $[S_1]$, such that $\varphi(0)=0$ and such that it  tends to infinity as $[S_1]$ does. Hence, for every value $S_{tot}>0$ there exists a unique value $[S_1]>0$ satisfying $S_{tot}=\varphi(S_1)$. Using this value of $[S_1]$, all the other steady states concentrations are positive and can be found using the relations above.

We conclude that this model has exactly one positive steady state for all choices of rate constants and conserved amounts.

\end{document}